\newcommand\numberthis{\addtocounter{equation}{1}\tag{\theequation}}
\newtheorem{theorem}{Theorem}
\newtheorem{proposition}[theorem]{Proposition}
\newtheorem{lemma}[theorem]{Lemma}
\newtheorem{definition}[theorem]{Definition}
\newif\ifnotes\notestrue
\newcommand{\YT}[1]{{\textbf{\color{blue}YT:~ #1}}}
\newcommand{\YT}[1]{ #1}
\newcommand{\hidecontent}[1]{}
\newcommand{\rev}[1]{\texttt{Rev}({#1})}
\newcommand{\revsoftmax}[1]{\texttt{Rev}^{\texttt{softmax}}({#1})}
\newcommand{\xx}{\lceil \frac{4}{\epsilon^2}\rceil^n}
\newcommand{\xxs}{\lceil \frac{4}{\epsilon^2}\rceil^{2n}}
\newcommand{\yys}{\lceil \frac{16m^3}{\epsilon^2}\rceil^{2nm}}
\newcommand{\BW}[1]{\text{BW}(#1)}
\tikzset{bigneuron/.style={circle, draw, inner sep = 0, minimum width = 5.5ex}}
\tikzset{smallneuron/.style={circle, draw, inner sep = 0, minimum width = 4ex}}
\tikzset{transform/.style={fill=white, circle}}
\tikzset{connection/.style={-{Stealth}}}
\newcommand{\R}{\mathbb{R}}
\newcommand{\options}{\mathcal{K}}
\newcommand{\boost}{\beta}
\newcommand{\menus}{\mathcal{M}}
\title{Mode Connectivity in Auction Design}
\author{%
  Christoph Hertrich\thanks{Moved to Université Libre de Bruxelles, Belgium, and Goethe-Universit\"at Frankfurt, Germany, after submission of this article.} \\
  Department of Mathematics\\
  London School of Economics and Political Science, UK\\
  \texttt{c.hertrich@lse.ac.uk}
  \And
  Yixin Tao \\
  ITCS, Key Laboratory of Interdisciplinary Research of Computation and Economics\\
   Shanghai University of Finance and Economics, China\\
  \texttt{taoyixin@mail.shufe.edu.cn}
  \And
  László A.\ Végh \\
  Department of Mathematics\\
  London School of Economics and Political Science, UK\\
  \texttt{l.vegh@lse.ac.uk}
}
\begin{document}

\maketitle

\begin{abstract}
	Optimal auction design is a fundamental problem in algorithmic game theory. This problem is notoriously difficult already in very simple settings. Recent work in differentiable economics showed that neural networks can efficiently learn known optimal auction mechanisms and discover interesting new ones. In an attempt to theoretically justify their empirical success, we focus on one of the first such networks, \emph{RochetNet}, and a generalized version  for \emph{affine maximizer auctions}. We prove that they satisfy \emph{mode connectivity}, i.e., locally optimal solutions are connected by a simple, piecewise linear path such that every solution on the path is almost as good as one of the two local optima. Mode connectivity has been recently investigated as an intriguing empirical and theoretically justifiable property of neural networks used for prediction problems. Our results give the first such analysis in the context of differentiable economics, where neural networks are used directly for solving non-convex optimization problems.
\end{abstract}


\section{Introduction}
Auction design  is a core problem in mechanism design, with immense applications in electronic commerce (such as sponsored search auctions) as well as in the public sector (such as spectrum auctions). In a revenue maximizing auction, the auctioneer needs to design a mechanism to allocate resources to buyers, and set prices in order to maximize the expected revenue. The buyers' preferences are private and they may behave strategically by misreporting them. For this reason, it is often desirable to devise \emph{dominant strategy incentive compatible (DSIC)} and \emph{individually rational (IR)} mechanisms. By definition, in a DSIC mechanism, it is a dominant strategy for the buyers to report the true valuations; in an IR mechanism, each participating truthful buyer receives a nonnegative payoff.

We focus on DSIC and IR mechanisms that maximize the expected revenue, assuming that the buyers' preferences are drawn from a distribution known to the auctioneer. A classical result of \citet{myerson1981optimal} provides the optimal mechanism for the case of a single item and arbitrary number of buyers. Finding the optimal mechanisms for more general settings is a tantalizingly difficult problem. We refer the reader to the surveys by \citet{rochet2003economics,manelli2007multidimensional} and \citet{daskalakis2015multi} for partial results and references. In particular, no analytic solution is known even for two items and two buyers. Selling multiple items to a single buyer is computationally intractable \citep{daskalakis2014complexity}. Already for two items and a single buyer, the description of the optimal mechanism may be uncountable \citep{daskalakis2013mechanism}. Recent work gives a number of important partial characterizations, e.g. \citet{daskalakis2015strong,giannakopoulos2014duality}, as well as results for weaker notions of Bayesian incentive compatibility, e.g. \citet{cai2012optimal, cai2012algorithmic, cai2013understanding, bhalgat2013optimal}. 

\citet{conitzer2002complexity,conitzer2004self} proposed the approach of \emph{automated mechanism design} to use optimization and computational methods to obtain (near) optimal mechanisms for specific problems; see also \citet{sandholm2015automated}. An active recent area of research uses machine learning tools. In particular, \citet{dutting2019optimal} designed and trained neural networks to automatically find optimal auctions. They studied two network architectures, and showed that several theoretically optimal mechanisms can be recovered using this approach, as well as interesting new mechanisms can be obtained. The first network they studied is \emph{RochetNet}. This is  a simple two-layer neural network applicable to the single buyer case,  leveraging \citeauthor{rochet1987necessary}'s [\citeyear{rochet1987necessary}] characterization of the optimal mechanism. The second network, \emph{RegretNet} does not require such a characterization and is applicable for multiple buyers; however, it only provides approximate incentive compatibility. 

\citet{dutting2019optimal} coined the term  \emph{`differentiable  economics'} for this approach, and there has been significant further work in this direction. These include designing auctions for budget constrained buyers \citep{feng2018deep}; multi-facility location \cite{golowich2018deep}; balancing fairness and revenue objectives \citep{kuo2020proportionnet}; incorporating non-linear utility functions and other networks trained from interaction data \citep{shen2019automated}\footnote{MenuNet, developed by \citep{shen2019automated}, also encodes menu items as RochetNet. However, unlike RochetNet’s approach of repeatedly sampling valuations from the underlying distribution, MenuNet discretizes the buyer’s valuation space into discrete values.}; designing revenue-maximizing auctions with
differentiable matchings \citep{curry2022learning}; contextual auction design \citep{duan2022context}; designing taxation policies \citep{zheng2021ai}, and more.

The purpose of this work is to supply theoretical evidence behind the success of neural networks in differentiable economics. The revenue is a highly non-convex function of the parameters in the neural network.  Curiously,  gradient approaches seem to recover globally optimal auctions despite this non-convexity. Similar phenomena have been studied more generally in the context of deep networks, and theoretical explanations have been proposed, in particular, overparametrization \citep{allen2019convergence,du2019gradient}.

\paragraph{Mode connectivity} Recent work has focused on a striking property of the landscape of loss functions of deep neural networks: local optimal solutions (modes) found by gradient approaches are connected by simple paths in the parameter space. We provide an informal definition of \emph{mode connectivity} here.
\begin{definition}[$\varepsilon$-mode connected (informal)]We say that two solutions are \emph{$\varepsilon$-mode connected}, if they are connected by a continuous path of solutions, such that the loss function does not worsen by more than $\varepsilon$ compared to one of the end points on the entire path.
\end{definition}

This phenomenon was identified by \citet{garipov2018loss} and by \citet{draxler2018essentially}. Mode connectivity can help to explain the empirical performance of stochastic gradient descent (sgd) (or ascent, in case of revenue maximization). To some extent, mode connectivity prevents a poor local minimal valley region on the function value, from which the sgd method cannot escape easily. Suppose such a bad local minimum exists. Then mode connectivity implies that there exists a path from this bad local minimum to a global minimum on which the loss function does not significantly increase. Therefore, the intuition is that from every bad local minimum, a (stochastic) gradient method would eventually be able to find a way to escape.
However, we would like to emphasize that mode connectivity does not provide a formal proof of the success of sgd. It only provides a useful intuition of why sgd does not completely trapped in local optima.

\citet{kuditipudi2019explaining} gave strong theoretical arguments for mode connectivity. They introduce the notion of \emph{$\varepsilon$-dropout stability}: solutions to a neural network such that in each layer, one can remove at least half the neurons and rescale the remaining units such that the loss function increases by at most $\varepsilon$. Solutions that are 
$\varepsilon$-dropout stable are then shown to be $\varepsilon$-mode connected.   Moreover, they show that \emph{noise stability} (see e.g., \citet{arora2018stronger}) implies dropout stability, and hence, mode connectivity. \citet{nguyen2019connected} showed mode connectivity when there is a hidden layer larger than the training dataset. \citet{shevchenko2020landscape}  shows that stochastic gradient descent solutions to sufficiently overparametrized neural networks are dropout stable even if we only keep a small, randomly sampled set of neurons from each layer. 

\subsection{Our contributions} 
\paragraph{RochetNet} In this paper, we first establish mode connectivity properties of \emph{RochetNet}, the architecture in \citet{dutting2019optimal} for multiple items and a single buyer. These networks have a single hidden layer, corresponding to the menu. Within this hidden layer, each neuron directly corresponds to an \emph{option} in the menu: which contains an allocation and a price offered to the buyer (see Figure~\ref{fig:rochet}). The buyer is assigned the single option, including the allocation and the price, maximizing the buyer's utility. Such an option is called \emph{active} for the buyer. The loss function of \emph{RochetNet} is the revenue, the expected price paid by the buyer. Despite its simplicity, the  experiments on \emph{RochetNet} in \citet{dutting2019optimal} gave impressive empirical results in different scenarios. For example, in the experiments with up to six items and uniform value distributions, \emph{RochetNet} achieves almost the same revenue ($99.9\%$) as the Straight-Jacket Auctions in \citet{giannakopoulos2018duality}, which are known to be optimal in this case. This success is not limited to a single example, as \emph{RochetNet} also consistently performs well in other scenarios, including when infinite menu size is necessary.  Furthermore, \citet{dutting2019optimal} demonstrated the usefulness of \emph{RochetNet} in discovering optimal auctions in situations that were previously unexplored from a theoretical perspective. 

 First, in Theorem~\ref{thm:epsilon-reducible}, we show  that for linear utilities, $\varepsilon$-mode connectivity holds between two solutions that are \emph{$\varepsilon$-reducible}: out of the $K+1$ menu options (neurons), there exists a subset of at most $\sqrt{K+1}$ containing an active option for the buyer with probability at least $1-\varepsilon$. Assuming that the valuations are normalized such that the maximum valuation of any buyer is at most one, it follows that if we remove all other options from the menu, at most $\varepsilon$ of the expected revenue is lost. The assumption of being $\varepsilon$-reducible is stronger than $\varepsilon$-dropout stability that only drops a constant fraction of the neurons. At the same time, experimental results in \citet{dutting2019optimal} show evidence {of this property being satisfied in practice.} They experimented with different sized neural networks in a setting when the optimal auction requires infinite menu size.
Even with $10,000$ neurons available, only $59$ options were active, i.e., used at least once when tested over a large sample size.
We note that this property also highlights an advantage of 
 \emph{RochetNet} over \emph{RegretNet} and other similar architectures: instead of a black-box neural network, it returns a compact, easy to understand representation of a mechanism.

Our second main result {(Theorem~\ref{thm:large})} shows that for $n$ items and linear utilities, if the number of menu options $K$ is sufficiently large, namely, {$(2/\varepsilon)^{4n}$}, then $\varepsilon$-mode connectivity holds between \emph{any} two solutions for \emph{any} underlying distribution. The connectivity property holds pointwise: for any particular valuation profile, the revenue may decrease by at most $\varepsilon$ along the path. A key tool in this $\varepsilon$-mode connectivity result is a discretization technique from \citet{dughmi2014sampling}.
We note that such a mode connectivity result can be expected to need a large menu size. In Appendix~\ref{sec:example-non-concavity}, we present an example with two disconnected local maxima for $K=1$.

\paragraph{Affine Maximizer Auctions} We also extend our results and techniques to neural networks for affine maximizer auctions (AMA) studied in \cite{curry2022differentiable}. This is a generalization of \emph{RochetNet} for multi-buyer scenarios. It can also be  seen as a weighted variant of the Vickrey--Clarke--Groves (VCG) mechanism \citep{vickrey1961counterspeculation, clarke1971multipart, groves1973incentives}.  AMA offers various allocation options. For a given valuation profile, the auctioneer chooses the allocation with the highest weighted sum of valuations, and computes individual prices for the buyers; the details are described in Section~\ref{sec:ama}. AMA is DSIC and IR, however, is not rich enough to always represent the optimal auction.

 For AMA networks, we show similar results (Theorem~\ref{thm:AMA-1} and \ref{thm:AMA-2}) as for \emph{RochetNet}. We first prove $\varepsilon$-mode connectivity holds between two solutions that are \emph{$\varepsilon$-reducible} (see Definition~\ref{def:ama-reducible}).  \cite{curry2022differentiable} provides evidence {of this property being satisfied in practice}, observing (in Sec.\ 7.1) \emph{``Moreover, we found that starting out with a large number of parameters improves performance, even though by the end of training only a tiny number of these parameters were actually used.''}. Secondly, we also show that  if the number of menu options $K$ is sufficiently large, namely, $(16m^3/\varepsilon^2)^{2nm}$, then $\varepsilon$-mode connectivity holds pointwise between \emph{any} two solutions. That is, it is valid for any underlying distribution of valuations, possibly correlated between different buyers.

\paragraph{Relation to previous results on mode connectivity}
Our results do not seem to be deducible from previous mode connectivity results, as we outline as follows.
Previous literature on mode connectivity investigated  neural networks used for prediction. The results in \cite{kuditipudi2019explaining,nguyen2019connected,shevchenko2020landscape} and other papers crucially rely on the properties that the networks minimize a convex loss function between the predicted and actual values, and require linear transformations in the final layer. 
\emph{RochetNet} and AMA networks are fundamentally different. The training data does not come as labeled pairs and these network architectures are built directly for solving an optimization problem. For an input valuation profile, the loss function is the negative revenue of the auctioneer. In \emph{RochetNet}, this is obtained as 
the negative price of the utility-maximizing bundle; for AMA it requires an even more intricate calculation. The {objective} is to find parameters of the neural network such that the expected revenue is as large as possible. The menu options define a piecewise linear surface of utilities, and the revenue in  \emph{RochetNet} can be interpreted as the expected bias of the piece corresponding to a randomly chosen input.

Hence, the landscape of the  loss function is fundamentally different from those analyzed in the above mentioned works. The  weight interpolation argument that shows mode-connectivity from dropout stability is not applicable in this context. The main reason is that the loss function is not a simple function of the output of the network, but is defined by choosing the price of the argmax option.
We thus need a more careful understanding of the piecewise linear surfaces corresponding to the menus.

\paragraph{Significance for Practitioners} We see the main contribution of our paper in \emph{explaining} the empirical success and providing theoretical foundations for already existent practical methods, and not in inventing new methods. Nevertheless, two insights a practitioner could use are as follows: (i) It is worth understanding the structure of the auction in question. If one can, e.g., understand whether $\varepsilon$-reducibility holds for a particular auction, this might indicate whether RochetNet or AMA are good methods to apply to this particular case. (ii) Size helps: If one encounters bad local optima, increasing the menu size and rerunning RochetNet or AMA might be a potential fix and will eventually lead to a network satisfying mode connectivity.


\section{Auction Settings}
We consider the case with $m$ buyers and one seller with $n$ divisible items each in unit supply. Each buyer has an additive valuation function $v_i(S) := \sum_{j \in S} v_{ij}$, where $v_{ij} \in V$ represents the valuation of the buyer $i$ on item $j$ and $V$ is the set of possible valuations. Throughout the paper, we normalize the range to the unit simplex:  we assume $V = [0, 1]$, and  $\|v_i\|_1 = \sum_j v_{ij} \leq 1$ for every buyer $i$. With slight abuse of notation, we let $v = (v_{11}, v_{12}, \cdots, v_{ij}, \cdots, v_{mn})^\top$ and $v_i = (v_{i1}, v_{i2}, \cdots, v_{in})^\top$. The buyers' valuation profile $v$ is drawn from a distribution $F\in \mathcal{P}(V^{m\times n})$. 
Throughout, we assume that the buyers have \emph{quasi-linear utilities}: if a buyer with valuation $v_i$ receives an allocation $x\in [0,1]^n$ at price $p$, their utility is $v_i^\top x-p$.

The seller has access to samples from the distribution $F$, and wants to sell these items to the buyers through a DSIC\footnote{There exists a weaker notion of incentive compatibility: \emph{Bayesian incentive compatible} (BIC). In a BIC mechanism, it is a dominant strategy for the buyers to report the true valuations if all other buyers also report truthfully. This paper focuses on DISC mechanisms, as DSIC is considered to be more robust than BIC, which assumes a common knowledge of buyers' distributions on valuations.} and IR auction and maximize the expected revenue.  
In the auction mechanism, the $i$-th bidder reports a \emph{bid} $b_i\in [0,1]^n$. The entire bid vector $b\in [0,1]^{m\times n}$ will be denoted as $b=(b_1,\ldots,b_m)=(b_i,b_{-i})$, where $b_{-i}$ represents all the bids other than buyer $i$. In a DSIC mechanism, it is a dominant strategy for the agents to report $b_i=v_i$, i.e., reveal their true preferences.

\begin{definition}[DSIC and IR auction]
An auction mechanism requires the buyers to submit bids $b_i\in [0,1]^n$, and let $b=(b_1,\ldots,b_m)$. The output is a set of allocations $x(b)=(x_1(b),\ldots,x_m(b))$, $x_i(b)\in [0,1]^n$, and prices $p(b)=(p_1(b),\ldots,p_m(b))\in \mathbb{R}^m$.
Since there is unit supply of each item, we require $\sum_i x_{ij}(b) \leq 1$, where $x_{ij}(b)$ is the allocation of buyer $i$ of item $j$.
\begin{enumerate}[(i)]
\item  An auction is \emph{dominant strategy incentive compatible (DSIC)} if $v_i^\top x_i(v_i, b_{-i}) - p_i(v_{i}, b_{-i}) \geq v^\top x_i(b_i, b_{-i}) - p_i(b_i, b_{-i})$ for any buyer $i$ and any bid $b=(b_i, b_{-i})$.
\item 
 An auction is \emph{individually rational (IR)} if $v_i^\top x_i(v_i, b_{-i}) - p_i(v_i, b_{-i}) \geq 0$.
\end{enumerate}
\end{definition}
The revenue of a DSIC and IR auction is
\begin{align*}
    \texttt{Rev} = \mathbb{E}_{v\sim F} \left[ \sum_i p_i(v)\right].
\end{align*}

\subsection{Single Buyer Auctions: RochetNet}
\citet{dutting2019optimal} proposed RochetNet as a DISC and IR auction for the case of a single buyer. We omit the subscript $i$ for buyers in this case. 
A (possibly infinite sized) \emph{menu} $M$ comprises a set of \emph{options} offered to the buyer: $M = \{(x^{(k)}, p^{(k)})\}_{k\in \options}$. In each option $(x^{(k)}, p^{(k)})$, $x^{(k)}\in [0,1]^n$ represents the amount of items, and $p^{(k)}\in \R_+$ represents the price. We assume that $0\in \options$, and $(x^{(0)}, p^{(0)})=(\mathbf{0},0)$ to guarantee IR. We call this the \emph{default option}, whereas all other options are called \emph{regular options}. We will use $K$ to denote the number of regular options; thus, $|\options|=K+1$.
 
A buyer submits a bid $b\in [0,1]^n$ representing their valuation, and is assigned to option $k(b)\in\options$ that maximizes the utility\footnote{We assume that ties are broken in favor of higher prices, but it is not hard to see that our results transfer to other tie-breaking rules, too. See also the discussion in Section~B.2 of \citet{babaioff2022menu}.}
\begin{align*}
k(b)\in \arg\max_{k\in\options}
    b^\top x^{(k)} - p^{(k)} \, .
\end{align*} 
This is called the \emph{active option} for the buyer.
Note that option $0$ guarantees that the utility is nonnegative, implying the IR property. It is also easy to see that such an auction is DSIC. Therefore, one can assume that $b=v$, i.e., the buyer submits their true valuation; or equivalently, the buyer is allowed to directly choose among the menu options one that maximizes their utility.
Moreover, it follows from \cite{rochet1987necessary} that every DSIC and IR auction for a single buyer can be implemented with a (possibly infinite size) menu {using an appropriate} tie-breaking rule.

Given a menu $M$, the revenue is defined as 
\begin{align*}
\rev{M} &=  \mathbb{E}_{v\sim F} \left[ p^{(k(v))} \right]\, .
\end{align*}

\paragraph{RochetNet}
RochetNet (see Figure~\ref{fig:rochet}) is a neural network with three layers: an input layer ($n$ neurons), a middle layer ($K$ neurons), and an output layer ($1$ neuron):
\begin{enumerate}
    \item the input layer takes an $n$-dimensional bid $b \in V^n$, and sends this information to the middle layer;
    \item the middle layer has $K$ neurons. Each neuron represents a regular option in the menu $M$, which has parameters $x^{(k)} \in [0,1]^n$ and $p^{(k)} \in \R_{+}$, where $x^{(k)}\in [0,1]^n$ represents the allocation of option $k$ and $p^{(k)}$ represents the price of option $k$.  Neuron $k$ maps from $b \in V^n$ to $b^\top x^{(k)} - p^{(k)}$, i.e., the utility of the buyer when choosing option $k$; 
    \item the output layer receives all utilities from different options and maximizes over these options and $0$: $\max\{\max_k \{(x^{(k)})^\top b - p^{(k)}\}, 0\}$.
\end{enumerate}

 We will use $\rev{M}$ to denote the  revenue of the auction with menu options $\options=\{0,1,2,\ldots,K\}$, where $0$ represents the default option $(\mathbf{0},0)$.

The training objective for the RochetNet is to maximize the revenue $\rev{M}$, which is done by stochastic gradient ascent. Note, however, that the revenue is the price of an \emph{argmax} option, which makes it a non-continuous function of the valuations. For this reason, \citet{dutting2019optimal} use a \emph{softmax}-approximation of the \emph{argmax} as their loss function instead. However, \emph{argmax} is used  for testing. In Appendix~\ref{sec:softmax-rochet}, we bound the difference between the revenues computed with these two different activation functions, assuming that the probability density function of the distribution $F$ admits a finite upper bound. Lemma~\ref{lem:softmax} shows that the difference between the revenues for \emph{softmax} and \emph{argmax} is roughly inverse proportional to the parameter $Y$ of the \emph{softmax} function. This allows the practitioner to interpolate between smoothness of the loss function and provable quality of the softmax approximation by tuning the parameter $Y$.
  
\begin{figure}
	\centering
	\begin{tikzpicture}
		\footnotesize
		\node[smallneuron, label=below:{$-p_1$}] (n1) at (0,11ex) {};
		\node[smallneuron, label=below:{$-p_2$}] (n2) at (0,2ex) {};
		\node[rotate=90] (npunkt) at (0,-6ex) {$\mathbf{\cdots}$};
		\node[smallneuron, label=below:{$-p_k$}] (n3) at (0,-11ex) {};
		\node[smallneuron, label=left:{$b_1$}] (in1) at (-24ex,7.5ex) {};
		\node[smallneuron, label=left:{$b_2$}] (in2) at (-24ex,2.5ex) {};
		\node[rotate=90] (inpunkt) at (-24ex,-2.5ex) {$\mathbf{\cdots}$};
		\node[smallneuron, label=left:{$b_n$}] (in3) at (-24ex,-7.5ex) {};
		\node[bigneuron] (out) at (24ex,0ex) {$\max$};
		\draw[connection] (in1) -- (n1) node[above, pos=0.8] {$x^{(1)}$};
		\draw[connection] (in2) -- (n1);
		\draw[connection] (in3) -- (n1);
		\draw[connection] (in1) -- (n2);
		\draw[connection] (in2) -- (n2);
		\draw[connection] (in3) -- (n2) node[below, pos=0.8] {$x^{(2)}$};
		\draw[connection] (in1) -- (n3);
		\draw[connection] (in2) -- (n3);
		\draw[connection] (in3) -- (n3) node[below, pos=0.8] {$x^{(k)}$};
		\draw[connection] (n1) -- (out);
		\draw[connection] (n2) -- (out);
		\node[rotate=90] (npunkt) at (12ex,-3ex) {$\mathbf{\cdots}$};
		\draw[connection] (n3) -- (out);
		\node[smallneuron] (zero) at (15ex, 11ex) {$0$};
		\draw[connection] (zero) -- (out);
		\draw[connection] (out) -- (30ex,0ex);
	\end{tikzpicture}
	\caption{RochetNet: this architecture maps the bid $b$ to the utility of the buyer.}
	\label{fig:rochet}
\end{figure}
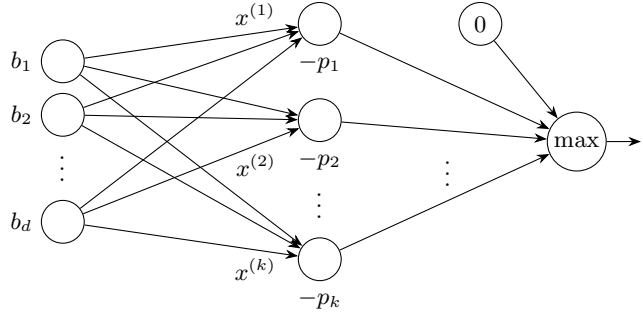

\subsection{Affine Maximizer Auctions}
\label{sec:ama}
\emph{Affine Maximizer Auctions (AMA)}
also provide a menu $M$ with a set of options $\options$.  Each option
is of the form $(x^{(k)},\boost^{(k)})\in [0,1]^{n\times m}\times \R$, where   $x^{(k)}_{ij}\in [0,1]$ represents the allocation of item $i$ to buyer $j$, with the restriction that $\sum_i x^{(k)}_{ij} \leq 1$ for each item $j$, and $\boost^{(k)}$ represents a \emph{`boost'}. We again
assume $0\in \options$, and $(x^{(0)}, \beta^{(0)})=(\mathbf{0},0)$, and call this the \emph{default} option; all other options are called the \emph{regular options}.

Given the bids $b_{i}\in [0,1]^n$ of the agents, 
the auctioneer computes a weighted welfare, using weights $w_i\in \R_{+}$ for the valuations of each agent, and adds the boost $\boost^{(k)}$. Then, the allocation maximizing the weighted boosted welfare is chosen, i.e., the option with 
\begin{align*}
    k(b) \in \arg \max_{k\in{\options}} \sum_i  w_i b_i^\top x_{i}^{(k)} + \boost^{(k)}.
\end{align*}
This will also be referred to as the \emph{active option}.
The prices collected from the buyers are computed according to the Vickrey--Clarke--Groves (VCG) scheme. Namely,
\begin{equation}\label{eq:prices-formula}
\begin{aligned}
    p_i(b) = &\frac{1}{w_i} \left( \sum_{\ell \neq i}  w_\ell b_{\ell}^\top x_{\ell}^{(k(b_{-i}))} + \boost^{(k(b_{-i}))} \right)  - \frac{1}{w_i} \left( \sum_{\ell \neq i}  w_\ell b_{\ell}^\top x_{\ell}^{(k(b))} + \boost^{(k(b))} \right).
\end{aligned}
\end{equation}
Here, $k(b_{-i})$ represents the option maximizing the weighted boosted welfare when buyer $i$ is omitted, i.e.,  $k(b_{-i}) \in \arg \max_{k\in{\options}} \sum_{\ell\neq i}  w_\ell b_\ell^\top x_{\ell}^{(k)} + \boost^{(k)}$. It is known that AMA is DSIC and IR. Hence, we can assume that the submitted bids $b_i$ represent the true valuations $v_i$.
We also assume the  ties are broken in favor of maximizing the total payment.  In case of unit weights, this is equivalent to choosing the smallest $\boost^{(k)}$ values, see \eqref{eq:total-payment} in Section~\ref{sec:AMA}.
Given the menu $M$, the revenue of the AMA is 
\begin{align*}
    \rev{M} = \mathbb{E}_{v\sim F} \left[ \sum_i p_i(v)\right]\, .
\end{align*}

In this paper, we focus on the case when $w_i=1$ for all buyers. This is also used in the experiments in \citet{curry2022differentiable}. 
For this case, AMA can be implemented by a three layer neural network similar to \emph{RochetNet}, with $m\times n$ input neurons. For the more general case when the weights $w_i$ can also be adjusted, one can include an additional layer that combines the buyers' allocations.

Note that for a single buyer and $w_1=1$, AMA corresponds to \emph{RochetNet}, with price $p^{(k)}=-\boost^{(k)}$ for each menu option. Indeed, in the formula defining the price $p_i(b)$, the first term is 0, as well as the sum in the second term.

{Similarly to \emph{RochetNet}, the loss function, which is maximized via stochastic gradient ascent, is a \emph{softmax}-approximation of the revenue $\rev{M}$, in order to avoid the discontinuities introduced by the \emph{argmax}.} We bound the difference in the revenue in Appendix~\ref{sec:softmax-ama}, {concluding that it decreases with large parameter $Y$ as in the RochetNet case.}

\subsection{Mode Connectivity}
One can view the revenue as a function of the menus, i.e., the parameters in the mechanism: {\em (i)} in \emph{RochetNet},  $\{(x^{(k)}, p^{(k)})\}_{k\in\options}$; {\em (ii)} in AMA, $\{(x^{(k)}, \boost^{(k)})\}_{k\in \options}$. We use $\menus$ to denote the set of all possible menus.

\begin{definition}(Mode connectivity)
   Two menus $M_1,M_2\in\menus$ are $\varepsilon$-mode-connected if there is a continuous curve $\pi: [0,1]\to \menus$ such that {\em (i)} $\pi(0) = M_1$; {\em (ii)} $\pi(1) = M_2$; and {\em (iii)} for any $t \in [0, 1]$, $\rev{\pi(t)} \geq \min \{\rev{M_1}, \rev{M_2}\} - \varepsilon$.
\end{definition}


\section{Mode Connectivity for the RochetNet}
\label{sec:rochetnet}

In this section we present and prove our main results for the RochetNet. For some statements, we only include proof sketches. The detailed proofs can be found in \Cref{sec:rochetproofs} in the supplementary material. The following definition plays an analogous role to $\varepsilon$-dropout stability in \cite{kuditipudi2019explaining}.

\begin{definition}
    A menu $M$ with $|\options|=K+1$ options  is called \emph{$\varepsilon$-reducible} if there is a subset $\mathcal{K}'\subseteq\mathcal{K}$ with $0\in\mathcal{K}'$, $|\mathcal{K}'|\le \sqrt{K+1}$ such that, with probability at least $1-\varepsilon$ over the distribution of the valuation of the buyer, the active option assigned to the buyer is contained in $\mathcal{K}'$. 
\end{definition}
As noted in the Introduction, such a property can be observed in the experimental results in \cite{dutting2019optimal}.
The motivation behind this definition is that if a menu satisfies this property, then all but $\sqrt{K+1}$ options are more or less redundant. In fact, if a menu is $\varepsilon$-reducible, then dropping all but the at most $\sqrt{K+1}$ many options in $\mathcal{K}'$ results in a menu $M'$ with $\rev{M'}\ge \rev{M}-\varepsilon$ because the price of any selected option is bounded by $\lVert v\rVert_1\leq 1$.

As a first step towards showing the mode connectivity results, we show that $0$-reducibility implies $0$-mode-connectivity. We will then use this to derive our two main results, namely that two $\varepsilon$-reducible menus are always $\varepsilon$-mode-connected and that two large menus are always $\varepsilon$-mode-connected.

\begin{proposition}\label{prop:zero}
    If two menus $M_1$ and $M_2$ for the RochetNet are $0$-reducible, then they are $0$-mode-connected. Moreover, the curve transforming $M_1$ into $M_2$ is piecewise linear with only three pieces.
\end{proposition}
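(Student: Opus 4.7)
The plan is to construct a three-piece linear path whose middle piece is a ``product menu interpolation'' along which the expected revenue varies linearly between $\rev{M_1}$ and $\rev{M_2}$, and whose two outer pieces modify only options that stay inactive throughout. The key combinatorial observation is that $|\mathcal{K}'_i|\le\sqrt{K+1}$ implies $|\mathcal{K}'_1|\cdot|\mathcal{K}'_2|\le K+1$, leaving enough menu positions to encode all of $\mathcal{K}'_1\times\mathcal{K}'_2$.

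Concretely, I first pick an injection $\sigma:\mathcal{K}'_1\times\mathcal{K}'_2\hookrightarrow\{0,\ldots,K\}$ that sends $(k_1,0)\mapsto k_1$ for $k_1\in\mathcal{K}'_1\setminus\mathcal{K}'_2$, $(0,k_2)\mapsto k_2$ for $k_2\in\mathcal{K}'_2\setminus\mathcal{K}'_1$, and $(k,k)\mapsto k$ for $k\in\mathcal{K}'_1\cap\mathcal{K}'_2$, with all remaining pairs going to distinct ``spare'' positions in $\{0,\ldots,K\}\setminus(\mathcal{K}'_1\cup\mathcal{K}'_2)$ (which exist because at most $|\mathcal{K}'_1|\cdot|\mathcal{K}'_2|-|\mathcal{K}'_1\cup\mathcal{K}'_2|\le K+1-|\mathcal{K}'_1\cup\mathcal{K}'_2|$ of them are needed). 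For $t\in[0,1]$ I define the \emph{product menu} $M(t)$ by placing at position $\sigma(k_1,k_2)$ the option
\[
\bigl((1-t)\,x^{(k_1)}_{M_1}+t\,x^{(k_2)}_{M_2},\;(1-t)\,p^{(k_1)}_{M_1}+t\,p^{(k_2)}_{M_2}\bigr),
\]
and placing the inert dummy $(0,C)$ with $C\ge 1$ at every remaining position. The middle piece of the proposed path is $t\mapsto M(t)$.

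The analysis of the middle piece rests on separability: the utility $(1-t)\,u^{(k_1)}_{M_1}(v)+t\,u^{(k_2)}_{M_2}(v)$ of the option at $\sigma(k_1,k_2)$ decouples over $k_1$ and $k_2$, so its maximum over $\mathcal{K}'_1\times\mathcal{K}'_2$ is attained jointly at the $M_1$- and $M_2$-argmax indices $(k_1^\star(v),k_2^\star(v))$, which lie in $\mathcal{K}'_1\times\mathcal{K}'_2$ almost surely by $0$-reducibility. Since the dummy positions contribute utility $-C<0\le(1-t)u_1^\star(v)+t\,u_2^\star(v)$, the buyer selects $\sigma(k_1^\star,k_2^\star)$ and is charged $(1-t)\,p_1(v)+t\,p_2(v)$; taking expectation gives $\rev{M(t)}=(1-t)\rev{M_1}+t\rev{M_2}$, which is everywhere at least $\min\{\rev{M_1},\rev{M_2}\}$.

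The first and third pieces are the linear interpolations $M_1\to M(0)$ and $M(1)\to M_2$. By construction, every position in $\mathcal{K}'_1$ carries the same option in $M_1$ and in $M(0)$, so these positions remain fixed throughout the first piece; at each other position the option is a convex combination between an option that was inactive in $M_1$ and a target whose utility is at most $u_1^\star(v)$ (either the default, an $M_1$-active option, or $(0,C)$), so the whole combination has utility at most $u_1^\star(v)$ almost surely. Hence the buyer's argmax stays in $\mathcal{K}'_1$, the charged price stays $p_1(v)$, and revenue is identically $\rev{M_1}$; the third piece is symmetric with $M_2$. The most delicate part of the writeup will be the overlapping case $\mathcal{K}'_1\cap\mathcal{K}'_2\neq\{0\}$, where the spare positions must simultaneously absorb both of the pairs $(k,0)$ and $(0,k)$ for each $k$ in the intersection, and where the RochetNet tie-breaking rule on the measure-zero coincidence set still has to be checked to produce the correct charged price.
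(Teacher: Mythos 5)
Your proposal is correct and follows essentially the same route as the paper: your product menu $M(t)$ with positions indexed by $\mathcal{K}'_1\times\mathcal{K}'_2$ is exactly the paper's interpolation between the replicated menus $\widehat M_1$ and $\widehat M_2$ built from the bijection of Lemma~\ref{lem:bijection}, and your decoupling argument for the middle piece and ``active positions unchanged'' argument for the outer pieces match Propositions~\ref{prop:middle_piece} and~\ref{prop:copying}. The only (cosmetic) difference is that you use an injection padded with inert dummy options $(0,C)$ where the paper fills all $K+1$ positions via a genuine bijection, and your tie-breaking concerns are resolved as in the paper by noting that ties are broken in favor of higher prices, so the charged price can only increase.
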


To prove \Cref{prop:zero}, we introduce two intermediate menus $\widehat M_1$ and $\widehat M_2$, and show that every menu in the piecewise linear interpolation from $M_1$ via $\widehat M_1$ and $\widehat M_2$ to $M_2$ yields a revenue of at least $\min\{\rev{M_1},\rev{M_2}\}$.
Using that menu $M_1$ has only $\sqrt{K+1}$ non-redundant options, menu $\widehat M_1$ will be defined by repeating each of the $\sqrt{K+1}$ options $\sqrt{K+1}$ times. Menu $\widehat M_2$ will be derived from $M_2$ similarly. A technical lemma makes sure that this copying can be done in such a way that each pair of a non-redundant option of $M_1$ and a non-redundant option of $M_2$ occurs exactly for one index in $\widehat M_1$ and $\widehat M_2$.

To make this more formal, we first assume without loss of generality that $K+1$ is a square, such that $\sqrt{K+1}$ is an integer. It is straightforward to verify that the theorem is true for non-squares $K+1$, too. Suppose the options in $M_1$ and $M_2$ are indexed with $k\in\mathcal{K}=\{0,1,\dots,K\}$. Since $M_1$ is $0$-reducible, there is a subset $\mathcal{K}_1\subseteq\mathcal{K}$ with $0\in\mathcal{K}_1$,  $|\mathcal{K}_1|=\sqrt{K+1}$ such that an option with index in $\mathcal{K}_1$ is selected with probability $1$ over the distribution of the possible valuations. Similarly, such a set $\mathcal{K}_2$ exists for $M_2$. To define the curve that provides mode connectivity, we need the following technical lemma, which is proven in \Cref{sec:rochetproofs}.

\begin{lemma}\label{lem:bijection}
    There exists a bijection $\varphi\colon \mathcal{K} \to \mathcal{K}_1\times \mathcal{K}_2$ such that for all $k\in \mathcal{K}_1$ we have that $\varphi(k)\in \{k\}\times \mathcal{K}_2$, and for all $k\in \mathcal{K}_2$ we have that $\varphi(k)\in \mathcal{K}_1\times \{k\}$.
\end{lemma}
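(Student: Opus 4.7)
The plan is to set up notation via the intersection $A = \mathcal{K}_1 \cap \mathcal{K}_2$ and the set differences $B_1 = \mathcal{K}_1 \setminus \mathcal{K}_2$, $B_2 = \mathcal{K}_2 \setminus \mathcal{K}_1$, and the leftover $C = \mathcal{K} \setminus (\mathcal{K}_1 \cup \mathcal{K}_2)$. Writing $s = \sqrt{K+1}$ and $a = |A|$, we have $|B_1| = |B_2| = s - a$ and $|C| = s^2 - 2s + a$. Since $0 \in \mathcal{K}_1 \cap \mathcal{K}_2$, we also have $A \neq \emptyset$, which will matter in a moment.

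The first observation is that the lemma's two constraints completely determine $\varphi$ on $A$: if $k \in A$, then simultaneously $\varphi(k) \in \{k\}\times\mathcal{K}_2$ and $\varphi(k) \in \mathcal{K}_1\times\{k\}$, which forces $\varphi(k) = (k,k)$. These $a$ values already occupy the ``diagonal'' positions of $\mathcal{K}_1 \times \mathcal{K}_2$ indexed by $A$.

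Next I would fix an arbitrary pairing of $B_1$ and $B_2$, say $B_1 = \{\alpha_1,\dots,\alpha_{s-a}\}$ and $B_2 = \{\beta_1,\dots,\beta_{s-a}\}$, and define $\varphi(\alpha_i) = (\alpha_i,\beta_i)$ for each $i$. This satisfies the $\mathcal{K}_1$-constraint (stays in row $\alpha_i$) and produces $s-a$ fresh cells lying in $B_1\times B_2$, hence disjoint from the diagonal block $\{(k,k): k \in A\}$. Then, picking any fixed element $\alpha^\star \in A$ (e.g., $\alpha^\star = 0$), I would define $\varphi(\beta_j) = (\alpha^\star,\beta_j)$ for each $j$. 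This satisfies the $\mathcal{K}_2$-constraint (stays in column $\beta_j$); the cells are distinct across $j$ (different columns); they avoid the diagonal (their first coordinate is $\alpha^\star \in A$ while their second is $\beta_j \notin A$); and they avoid the $B_1 \times B_2$ cells already used (first coordinate in $A$ vs.\ in $B_1$). So the partial map defined on $A \cup B_1 \cup B_2$ is injective and respects all constraints.

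Finally, I would extend $\varphi$ to $C$ arbitrarily. A quick count confirms this is possible: the partial map already uses $a + 2(s-a) = 2s - a$ cells out of $|\mathcal{K}_1 \times \mathcal{K}_2| = s^2$, leaving $s^2 - 2s + a = |C|$ cells, and no constraint at all is imposed on the image of $C$. Any bijection between $C$ and the remaining cells completes $\varphi$. I do not anticipate a real obstacle here — the only subtle point is recognizing that the constraints on $A$ leave no freedom, so one must verify that the existence of $\alpha^\star \in A$ (used to place the $\beta_j$'s in a column that still has an available row) is guaranteed, which it is via $0 \in \mathcal{K}_1 \cap \mathcal{K}_2$.
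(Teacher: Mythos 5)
Your construction is correct and essentially matches the first case of the paper's own proof (the nonempty-intersection case): force the diagonal on $\mathcal{K}_1\cap\mathcal{K}_2$, route $\mathcal{K}_1\setminus\mathcal{K}_2$ and $\mathcal{K}_2\setminus\mathcal{K}_1$ through cells anchored at intersection elements, and extend to the remaining indices arbitrarily via the cardinality count. The paper additionally gives a separate construction for the case $\mathcal{K}_1\cap\mathcal{K}_2=\emptyset$, which you rightly observe is unnecessary here since $0\in\mathcal{K}_1\cap\mathcal{K}_2$ by the definition of reducibility; your pairing of $\mathcal{K}_1\setminus\mathcal{K}_2$ with $\mathcal{K}_2\setminus\mathcal{K}_1$ instead of a fixed column is an immaterial variation.
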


With this lemma, we can define $\widehat M_1$ and $\widehat M_2$. Let $\varphi$ the bijection from \Cref{lem:bijection} and suppose $M_1=\{(x^{(k)}, p^{(k)})\}_{k \in \mathcal{K}}$. We then define $\widehat M_1=\{(x^{(\varphi_1(k))}, p^{(\varphi_1(k))})\}_{k\in\mathcal{K}}$, where $\varphi_1(k)$ is the first component of~$\varphi(k)$. Similarly, $\widehat M_2$ is derived from $M_2$ by using the second component $\varphi_2(k)$ of $\varphi(k)$ instead of $\varphi_1(k)$.
It remains to show that all menus on the three straight line segments from $ M_1$ via $\widehat M_1$ and $\widehat M_2$ to~$ M_2$ yield a revenue of at least $\min\{\rev{M_1},\rev{M_2}\}$, which is established by the following two propositions; their proofs can be found in  \Cref{sec:rochetproofs}.

\begin{proposition}\label{prop:copying}
    Let $M=\lambda  M_1 + (1-\lambda) \widehat M_1$ be a convex combination of the menus $M_1$ and $\widehat M_1$. Then $\rev{M} \geq \rev{ M_1}$. Similarly, every convex combination of the menus $ M_2$ and $\widehat M_2$ has revenue at least~$\rev{M_2}$.
\end{proposition}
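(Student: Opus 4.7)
The plan is to verify the inequality pointwise in $v$. For almost every valuation --- specifically, for every $v$ whose active option $k^*(v)$ in $M_1$ lies in the set $\mathcal{K}_1$, which by $0$-reducibility of $M_1$ is an event of probability one --- I will argue that the revenue extracted from $v$ under $M$ is at least $p^{(k^*(v))}$, i.e.\ the revenue under $M_1$. Integrating over $v$ then yields $\rev{M}\ge \rev{M_1}$. The claim for $ M_2$ and $\widehat M_2$ is symmetric, replacing $\varphi_1$ by $\varphi_2$.

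First I would write $M$ out explicitly. Indexing the options of $ M_1$ as $(x^{(k)},p^{(k)})_{k\in\mathcal{K}}$ and those of $\widehat M_1$ as $(x^{(\varphi_1(k))},p^{(\varphi_1(k))})_{k\in\mathcal{K}}$, the option of $M$ at index $k$ becomes
\[
\bigl(\lambda x^{(k)} + (1-\lambda)\, x^{(\varphi_1(k))},\ \lambda p^{(k)} + (1-\lambda)\, p^{(\varphi_1(k))}\bigr).
\]
The structural input from Lemma~\ref{lem:bijection} is that $\varphi(k)\in\{k\}\times \mathcal{K}_2$ for every $k\in\mathcal{K}_1$; hence $\varphi_1(k)=k$ on $\mathcal{K}_1$, so every option of $M_1$ indexed by $k\in\mathcal{K}_1$ --- including the default option at $k=0$, guaranteeing IR --- reappears \emph{unchanged} in $M$.

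Now fix $v$ with $k^*(v)\in\mathcal{K}_1$, and set $u^*(v):=v^\top x^{(k^*(v))}-p^{(k^*(v))}$, the maximum utility attained in $M_1$. Since option $k^*(v)$ is preserved in $M$, it still yields utility $u^*(v)$ there. For an arbitrary index $k\in\mathcal{K}$, the utility of option $k$ in $M$ is
\[
\lambda\bigl(v^\top x^{(k)}-p^{(k)}\bigr)+(1-\lambda)\bigl(v^\top x^{(\varphi_1(k))}-p^{(\varphi_1(k))}\bigr),
\]
a convex combination of two utilities of options of $M_1$, each bounded above by $u^*(v)$. Thus $k^*(v)$ is itself a utility maximizer among the options of $M$, and the price it carries in $M$ is exactly $p^{(k^*(v))}$. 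Because ties are broken in favor of higher prices, the option chosen by the buyer under $M$ has price at least $p^{(k^*(v))}$, which is the revenue extracted from $v$ in $M_1$. The main delicate point is the tie-breaking, but it resolves automatically in our favor: $k^*(v)$ is always a max-utility option in $M$, so the ``higher price'' rule can only match or beat the baseline. Beyond Lemma~\ref{lem:bijection} no overparametrization, discretization, or concentration input is needed; the proof rides entirely on the combinatorial property of $\varphi$.
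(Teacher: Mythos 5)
Your proof is correct and follows essentially the same route as the paper's: pointwise in $v$, use $\varphi_1(k)=k$ on $\mathcal{K}_1$ to see that the active option of $M_1$ survives unchanged in $M$, bound every option's utility in $M$ by the convex combination of two $M_1$-utilities, and invoke the higher-price tie-breaking rule. No gaps.
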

The idea to prove \Cref{prop:copying} is that, on the whole line segment from $M_1$ to $\widehat M_1$, the only active options are those in $\mathcal{K}'$, implying that the revenue does not decrease. 

\begin{proposition}\label{prop:middle_piece}
    Let $M=\lambda \widehat M_1 + (1-\lambda) \widehat M_2$ be a convex combination of the menus $\widehat M_1$ and $\widehat M_2$. Then, $\rev{M} \geq \lambda \rev{\widehat M_1} + (1-\lambda) \rev{\widehat M_2}$.
\end{proposition}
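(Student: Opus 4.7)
The plan is to establish the stronger \emph{pointwise} statement: for every valuation $v$ in the support of $F$, the price of the active option of $M$ at $v$ is at least $\lambda$ times the price of the active option of $\widehat M_1$ at $v$ plus $(1-\lambda)$ times the price of the active option of $\widehat M_2$ at $v$. The proposition then follows by integrating with respect to $v \sim F$.

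The key observation is that, since the $k$-th option of $M$ is the coordinate-wise convex combination of the $k$-th options of $\widehat M_1$ and $\widehat M_2$, the utility of option $k$ in $M$ at $v$ is exactly the same convex combination of the utilities of option $k$ in $\widehat M_1$ and in $\widehat M_2$, and the price of option $k$ in $M$ decomposes analogously. In particular, the utility of any option $k$ in $M$ is bounded above by $\lambda u_1(v) + (1-\lambda) u_2(v)$, where $u_i(v)$ denotes the maximum utility attainable in $\widehat M_i$ (equivalently in $M_i$) at $v$.

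Next, fix a $v$ such that the active option in $M_i$ lies in $\mathcal{K}_i$ for both $i=1,2$, which holds with probability $1$ by $0$-reducibility. Call these active options $\ell_1, \ell_2$. By \Cref{lem:bijection}, there exists a (unique) index $k^\dagger \in \mathcal{K}$ with $\varphi(k^\dagger) = (\ell_1, \ell_2)$. For this index, both terms of the convex combination attain the maximum utility, so the upper bound of the previous paragraph is tight: $k^\dagger$ is a utility-maximizer in $M$, and its price equals $\lambda$ times the price of $\ell_1$ in $M_1$ plus $(1-\lambda)$ times the price of $\ell_2$ in $M_2$. These two prices are precisely the per-sample revenues of $\widehat M_1$ and $\widehat M_2$ at $v$. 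Since the tie-breaking rule in $M$ selects, among utility-maximizers, an option with the highest price, the revenue of $M$ at $v$ is at least the price of $k^\dagger$'s option, which yields the desired pointwise inequality.

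The main technical point to verify is that $k^\dagger$ genuinely achieves the overall maximum utility in $M$; this is where the two ingredients interact crucially: the linearity of the utility in $(x, p)$ delivers the upper bound term-by-term, while the bijection property of \Cref{lem:bijection} guarantees that a single index in $M$ simultaneously inherits the active options of both $\widehat M_1$ and $\widehat M_2$. Once this tightness is in hand, the rest of the argument is immediate, and integrating the pointwise inequality against $F$ completes the proof.
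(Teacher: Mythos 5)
Your proposal is correct and follows essentially the same route as the paper's proof: both use the linearity of utilities to get the term-by-term upper bound, invoke \Cref{lem:bijection} to find the single index $k^\dagger=\varphi^{-1}(\ell_1,\ell_2)$ at which the bound is tight, and then appeal to the price-favoring tie-breaking rule to conclude the pointwise inequality before integrating over $v$. No gaps.
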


The idea to prove \Cref{prop:middle_piece} is that, due to the special structure provided by \Cref{lem:bijection}, a linear interpolation between the menus also provides a linear interpolation between the revenues. Note that without the  construction of \Cref{lem:bijection}, such a linear relation would be  false; such an  example is shown in \Cref{sec:example-non-concavity}. 

    \Cref{prop:zero}  directly follows from \Cref{prop:copying} and \Cref{prop:middle_piece}.
Based on \Cref{prop:zero}, we can show our two main theorems for the RochetNet. The first result follows relatively easily from \Cref{prop:zero}.

\begin{theorem} \label{thm:epsilon-reducible}
    If two menus $M_1$ and $M_2$ for the RochetNet are $\varepsilon$-reducible, then they are $\varepsilon$-mode-connected. Moreover, the curve transforming $M_1$ into $M_2$ is piecewise linear with only five pieces.
\end{theorem}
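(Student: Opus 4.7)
The plan is to reduce to the $0$-reducible case of \Cref{prop:zero} by inserting one linear segment at each end of the curve that transforms $M_i$ into a $0$-reducible auxiliary menu $\widetilde{M}_i$. Let $\mathcal{K}_i\subseteq\options$ witness the $\varepsilon$-reducibility of $M_i$ for $i\in\{1,2\}$, and define $\widetilde{M}_i$ by keeping the options at positions $k\in\mathcal{K}_i$ unchanged and replacing every other option by the default $(\mathbf{0},0)$. With a secondary tie-breaking convention that breaks price ties by smallest index, $\widetilde{M}_i$ is $0$-reducible with witness $\mathcal{K}_i$, since outside $\mathcal{K}_i$ every option is literally a duplicate of the default. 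The proposed five-piece curve is then the segment $M_1\to\widetilde{M}_1$, the three-piece curve produced by \Cref{prop:zero} applied to $\widetilde{M}_1$ and $\widetilde{M}_2$, and the segment $\widetilde{M}_2\to M_2$.

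The main new step is to bound the revenue along the segment $M_1\to\widetilde{M}_1$; the segment $\widetilde{M}_2\to M_2$ is handled symmetrically. Parametrize it by $M_\lambda=(1-\lambda)M_1+\lambda\widetilde{M}_1$ for $\lambda\in[0,1]$. The option at position $k\in\mathcal{K}_1$ coincides in $M_1$ and $\widetilde{M}_1$ and is therefore unchanged along the segment, while the option at $k\notin\mathcal{K}_1$ becomes $\bigl((1-\lambda)x^{(k)},(1-\lambda)p^{(k)}\bigr)$, yielding utility $(1-\lambda)(v^\top x^{(k)}-p^{(k)})$. Consider the $(1-\varepsilon)$-probability event that the active option in $M_1$ for the sampled valuation $v$ lies at some index $k^\ast\in\mathcal{K}_1$, with associated utility $u^\ast\ge 0$ guaranteed by IR. For every $k\in\options$, the utility of position $k$ under $M_\lambda$ is at most $u^\ast$: this is immediate for $k\in\mathcal{K}_1$ since $k^\ast$ is the argmax in $M_1$, and for $k\notin\mathcal{K}_1$ the interpolated utility equals $(1-\lambda)$ times an original utility that is at most $u^\ast$, and such a scaling lies between $0$ and $u^\ast$ because $u^\ast\ge 0$. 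Thus $k^\ast$ remains a utility-maximizer in $M_\lambda$, and the tie-breaking rule selects an option of price at least $p^{(k^\ast)}$, so the revenue contribution of this event is no worse than in $M_1$. On the complementary $\varepsilon$-mass of valuations, the change in the collected price is at most $1$ in absolute value, because every price assigned to an active option is in $[0,1]$ by IR together with $\|v\|_1\le 1$. Hence $\rev{M_\lambda}\ge\rev{M_1}-\varepsilon$ for every $\lambda\in[0,1]$; in particular, evaluating at $\lambda=1$ yields $\rev{\widetilde{M}_1}\ge\rev{M_1}-\varepsilon$, and analogously $\rev{\widetilde{M}_2}\ge\rev{M_2}-\varepsilon$.

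Finally, the middle three-piece curve produced by \Cref{prop:zero} applied to the $0$-reducible menus $\widetilde{M}_1$ and $\widetilde{M}_2$ has revenue at least $\min\{\rev{\widetilde{M}_1},\rev{\widetilde{M}_2}\}\ge\min\{\rev{M_1},\rev{M_2}\}-\varepsilon$, and the bound from the second paragraph gives the same estimate on both outer segments. Concatenating the five pieces yields the claimed $\varepsilon$-mode-connecting curve. The principal obstacle in the argument is the monotonicity observation in the second paragraph: scaling an option's $(x,p)$ pair toward the origin scales its utility by the same factor, and this operation cannot turn a sub-maximal utility into a super-maximal one precisely because the reference maximum $u^\ast$ is non-negative by IR. This is the ingredient that crucially distinguishes the $\varepsilon$-reducible case from the $0$-reducible setting of \Cref{prop:zero}, where no revenue slack is needed.
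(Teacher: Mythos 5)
Your proof is correct and follows essentially the same strategy as the paper's: one linear piece at each end converting $M_i$ into a $0$-reducible menu, the three middle pieces supplied by \Cref{prop:zero}, and the same probabilistic accounting (loss at most $1$ on the $\varepsilon$-mass of valuations whose active option lies outside $\mathcal{K}_i$). The only difference is cosmetic: the paper deactivates the options outside $\mathcal{K}_i$ by raising their prices above $1$, whereas you shrink them toward the default option, which costs you the extra (correct) monotonicity observation that scaling a utility $u_k\le u^\ast$ by $(1-\lambda)\in[0,1]$ keeps it at most the nonnegative $u^\ast$.
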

\begin{proof}
    We prove this result by showing that every $\varepsilon$-reducible menu $M$ can be linearly transformed into a $0$-reducible menu $\widetilde M$ such that each convex combination of $M$ and $\widetilde M$ achieves a revenue of at least \mbox{$\rev{M}-\varepsilon$}. This transformation converting $M_1$ and $M_2$ to $\widetilde M_1$ and $\widetilde M_2$, respectively, yields the first and the fifth of the linear pieces transforming $M_1$ to $M_2$. Together with \Cref{prop:zero} applied to $\widetilde M_1$ and $\widetilde M_2$ serving as the second to fourth linear piece; the theorem then follows.

    To this end, let $M$ be an $\varepsilon$-reducible menu with options indexed by $k\in\mathcal{K}$. By definition, there is a subset $\mathcal{K}'\subseteq\mathcal{K}$ of at most $\sqrt{K+1}$ many options such that, with probability at least $1-\varepsilon$, the assigned active option is contained in $\mathcal{K}'$. Let $\widetilde M$ consist of the same allocations as $M$, but with modified prices. For an option $k\in\mathcal{K}'$, the price $\tilde p^{(k)} = p^{(k)}$ in $\widetilde M$ is the same as in $M$. However, for an option $k\in\mathcal{K}\setminus\mathcal{K}'$, we set the price $\tilde p^{(k)} >1$ in $\widetilde M$ to be larger than the largest possible valuation of any option $\lVert v \rVert_1\leq 1$. It follows that such an option will never be selected and $\widetilde M$ is $0$-reducible.

    To complete the proof, let us look at the reward of a convex combination $M'=\lambda M + (1-\lambda) \widetilde M$. If for a particular valuation $v$ the selected option in $M$ was in $\mathcal{K}'$, then the same option will be selected in $M'$. This happens with probability at least $1-\varepsilon$. In any other case, anything can happen, but the revenue cannot worsen by more than the maximum possible valuation, which is $\lVert v \rVert_1\leq 1$. Therefore, $\rev{M}-\rev{M'}\leq \varepsilon\cdot 1 = \varepsilon$, completing the proof.
\end{proof}

\begin{theorem}
    \label{thm:large}
    If two menus $M_1$ and $M_2$ for the RochetNet have size at least $\xxs$, then they are $\varepsilon$-connected. Moreover, the curve transforming $M_1$ into $M_2$ is piecewise linear with only five pieces.
\end{theorem}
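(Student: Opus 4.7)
I plan to mimic the 5-piece structure of \Cref{thm:epsilon-reducible}, constructing
\[
M_1 \;\to\; \widetilde{M}_1 \;\to\; \widehat{\widetilde{M}}_1 \;\to\; \widehat{\widetilde{M}}_2 \;\to\; \widetilde{M}_2 \;\to\; M_2,
\]
where the middle three pieces come from \Cref{prop:zero} applied to auxiliary $0$-reducible menus $\widetilde{M}_1$ and $\widetilde{M}_2$. Since $M_1$ and $M_2$ are not assumed to be $\varepsilon$-reducible, the outer two pieces cannot be borrowed from \Cref{thm:epsilon-reducible}; instead, I will supply them through the following lemma: \emph{every menu $M$ with $K+1 \ge \xxs$ options can be transformed into a $0$-reducible menu $\widetilde{M}$ by raising a subset of prices along a single linear segment, so that the revenue of the interpolated menu at every valuation $v$ drops by at most $\varepsilon$ compared to the revenue of $M$ at $v$, uniformly over the segment.}

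To prove this lemma, I use a \citet{dughmi2014sampling}-style discretization. Partition the option space into axis-aligned boxes whose total count is at most $\sqrt{K+1}\ge\lceil 4/\varepsilon^2\rceil^{n}$; the product form of the bound $\xxs$ leaves room to discretize both allocations and prices finely enough that any two options $(x,p), (x',p')$ sharing a box satisfy $\lvert v^\top(x-x')\rvert + \lvert p-p'\rvert \le \varepsilon$ for every $v$ with $\lVert v\rVert_1 \le 1$. For each non-empty box I choose a canonical representative option of $M$, with the choice tuned to preserve revenue along the interpolation. Let $\mathcal{K}^*$ consist of these representatives together with option~$0$; then $|\mathcal{K}^*|\le\sqrt{K+1}$. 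Define $\widetilde{M}$ by keeping all allocations and the prices of options in $\mathcal{K}^*$ unchanged, while raising all other prices above $1$; this makes $\widetilde{M}$ $0$-reducible.

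The pointwise revenue bound along $M_\lambda := (1-\lambda)M + \lambda\widetilde{M}$ is proved by fixing $v$ and letting $k^*$ be the active option in $M$. If $k^*\in\mathcal{K}^*$, its price is unchanged along the whole path and the revenue at $v$ is constant. Otherwise, let $k_C\in\mathcal{K}^*$ be the representative of the box containing $(x^{(k^*)},p^{(k^*)})$: by the discretization, $k_C$ is IR at $v$ and has utility and price each within $\varepsilon$ of those of $k^*$. Since $k_C$'s price stays fixed while $k^*$'s price rises monotonically with $\lambda$, at every $\lambda$ the active option is either $k^*$ itself or an element of $\mathcal{K}^*$ at least as attractive as $k_C$, so its price, and hence the revenue at $v$, can decrease by at most $\varepsilon$. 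Concatenating the two outer pieces produced this way with the three-piece curve from \Cref{prop:zero} applied to $\widetilde{M}_1$ and $\widetilde{M}_2$ yields the desired 5-piece path. The main obstacle is choosing the per-box representative so that it remains simultaneously IR and close in price to the original active option throughout the interpolation, which is needed to prevent the buyer's selection from jumping to the default option and collapsing the revenue.
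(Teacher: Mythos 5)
Your overall architecture (discretize to obtain a $0$-reducible menu, use \Cref{prop:zero} for the middle three pieces, five pieces total) matches the paper, and discretizing allocations in the spirit of \citet{dughmi2014sampling} is the right starting point. But the outer two pieces, as you construct them, do not work, and the failure is exactly at the step ``at every $\lambda$ the active option is either $k^*$ itself or an element of $\mathcal{K}^*$ at least as attractive as $k_C$, so its price \ldots can decrease by at most $\varepsilon$.'' Utility proximity does not give price proximity. When you raise the price of the originally active option $k^*$, the buyer does not defect to the representative $k_C$ of $k^*$'s box; they defect to the global utility maximizer among the remaining options, which can be a cheap option from an entirely different box. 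Concretely: suppose $v^\top x^{(k^*)}=1$, $p^{(k^*)}=0.9$, so $u(k^*)=0.1$; the representative $k_C$ has $u(k_C)=0.1-\varepsilon$; and some representative $k'$ of another box has $v^\top x^{(k')}=0.09$ and $p^{(k')}=0$. In $M$ the buyer picks $k^*$, but along your segment, once $k^*$'s utility drops below $0.09$ the buyer picks $k'$ (which beats $k_C$) and pays $0$ instead of $0.9$. No choice of per-box representative repairs this, because the defection target need not lie in $k^*$'s box at all; this is why the ``IR of $k_C$'' issue you flag as the main obstacle cannot be patched locally. (The analogous step in \Cref{thm:epsilon-reducible} is sound only because there the priced-out options are active with probability at most $\varepsilon$, so an arbitrary revenue collapse on that event costs at most $\varepsilon$; here the priced-out options may be active with probability $1$ and you need a pointwise price bound.)

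The paper's construction avoids this with a different price modification: it keeps all $K+1$ options, rounds each allocation to the $\tilde{\varepsilon}$-cover with $\tilde{\varepsilon}=\varepsilon^2/4$, and applies a \emph{proportional} discount $\tilde p^{(k)}=(1-\sqrt{\tilde{\varepsilon}})\,p^{(k)}$ to every price, rather than raising the prices of non-representatives. The proportional discount is the idea your proposal is missing: if the buyer switches from $k$ to $k'$ anywhere along the segment, the relative utility change is at most $\sqrt{\tilde{\varepsilon}}\bigl(p^{(k')}-p^{(k)}\bigr)+\tilde{\varepsilon}$, which is negative unless $p^{(k')}\ge p^{(k)}-\sqrt{\tilde{\varepsilon}}$; hence every defection target has price within $\sqrt{\tilde{\varepsilon}}$ of the original, and the pointwise loss is at most $2\sqrt{\tilde{\varepsilon}}=\varepsilon$ (Lemma~\ref{lem:const-menu-size} and Lemma~\ref{lem:path_to_discretized}). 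The resulting menu is $0$-reducible not because redundant options are priced out, but because it has at most $\xx=\sqrt{K+1}$ distinct allocations and only the cheapest option per allocation is ever active; in particular there is no need to discretize prices, so your box-counting over $(x,p)$-space is unnecessary.
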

\begin{proof}[Proof Sketch]
    The full proof can be found in \Cref{sec:large-disc}.
    The intuition behind this theorem is that if menus are large, then they should contain many redundant options. Indeed, as in the previous theorem, the strategy is as follows. We show that every menu $M$ of size at least $\xxs$ can be linearly transformed into a $0$-reducible menu $\widetilde M$ such that each convex combination of $M$ and $\widetilde M$ achieves a revenue of at least \mbox{$\rev{M}-\varepsilon$}. This transformation converting $M_1$ and $M_2$ to $\widetilde M_1$ and $\widetilde M_2$, respectively, yields the first and the fifth of the linear pieces transforming $M_1$ to $M_2$. Together with \Cref{prop:zero} applied to $\widetilde M_1$ and $\widetilde M_2$ serving as the second to fourth linear piece, the theorem then follows.

    However, this time, the linear transformation of $M$ to $\widetilde M$ is much more intricate than in the previous theorem. To do so, it is not sufficient to only adapt the prices. Instead, we also change the allocations of the menu options by rounding them to discretized values. This technique is inspired by \citet{dughmi2014sampling}, but non-trivially adapted to our setting. Since the rounding may also modify the active option for each valuation, we have to carefully adapt the prices in order to make sure that for each valuation, the newly selected option is not significantly worse than the originally selected one. Finally, this property has to be proven not only for $\widetilde M$, but for every convex combination of $M$ and $\widetilde M$.

    After the above rounding, the number of possible allocations for any option is bounded by $\xx$. Out of several options with the same allocation, the buyer would always choose the cheapest one, implying that the resulting menu $\widetilde M$ is $0$-reducible.
\end{proof}
\section{Mode Connectivity for the Affine Maximizer Auctions} \label{sec:AMA}
Throughout this section, we focus on AMAs with fixed weights $w_i=1$ for all buyers $i$.
Similarly to \emph{RochetNet}, we have the following definition for AMAs.

\begin{definition}\label{def:ama-reducible}
   A menu $M$ with $K+1$ options is $\varepsilon$-reducible if and only if there exists a subset $\mathcal{K}'\subseteq \mathcal{K}$, $0\in\mathcal{K}'$,  $|\mathcal{K}'|\le \sqrt{K+1}$ such that, with probability at least $1 - \frac{\varepsilon}{m}$ over the distribution of the valuation of the buyers, (i) $k(v_{-i}) \in \mathcal{K}'$ for any buyer $i$; and (ii) $k(v) \in \mathcal{K}'$.
\end{definition}
Such phenomena are observed in the experiments in \citep[Section~6.3]{curry2022differentiable}.

Our two main results, namely that two $\varepsilon$-reducible menus are always $\varepsilon$-connected and two large menus are always $\varepsilon$-connected, are based on the following proposition, in which we show that $0$-reducibility implies $0$-connectivity.

\begin{proposition} \label{prop:AMA-zero}
    If two menus $M_1$ and $M_2$ are $0$-reducible,  then they are $0$-connected. Moreover, the curve transforming $M_1$ into $M_2$ is piecewise linear with only three pieces. 
\end{proposition}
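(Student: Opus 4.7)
My plan is to mirror the three-piece construction used in the proof of \Cref{prop:zero}, carefully accounting for the more delicate AMA pricing rule~\eqref{eq:prices-formula}. Denote the options of $M_j$ by $(x^{(k)}_j,\beta^{(k)}_j)_{k\in\mathcal{K}}$ for $j\in\{1,2\}$, and let $\mathcal{K}_1,\mathcal{K}_2\subseteq\mathcal{K}$ be the reduced index sets guaranteed by \Cref{def:ama-reducible}. Apply \Cref{lem:bijection} to obtain $\varphi\colon\mathcal{K}\to\mathcal{K}_1\times\mathcal{K}_2$, and define the copy menus $\widehat M_1=\{(x^{(\varphi_1(k))}_1,\beta^{(\varphi_1(k))}_1)\}_{k\in\mathcal{K}}$ and $\widehat M_2=\{(x^{(\varphi_2(k))}_2,\beta^{(\varphi_2(k))}_2)\}_{k\in\mathcal{K}}$. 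The candidate curve is then the piecewise linear path $M_1\to\widehat M_1\to\widehat M_2\to M_2$.

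For the first piece $M=\lambda M_1+(1-\lambda)\widehat M_1$, observe that for $k\in\mathcal{K}_1$ one has $\varphi_1(k)=k$, so these options are identical in $M_1$ and $\widehat M_1$; meanwhile any option $k\notin\mathcal{K}_1$ in $M$ is a convex combination of the two $M_1$-options with indices $k$ and $\varphi_1(k)\in\mathcal{K}_1$. Because the boosted welfare $\sum_i b_i^{\top}x^{(k)}_i+\beta^{(k)}$ is linear in $(x,\beta)$, the boosted welfare of any $k\notin\mathcal{K}_1$ in $M$, evaluated at any bid vector, is a convex combination of the boosted welfares of $k$ and $\varphi_1(k)$ in $M_1$, hence at most the $M_1$-maximum (which by $0$-reducibility is attained in $\mathcal{K}_1$). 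Almost surely this inequality is strict, so both $k(v)$ and each $k(v_{-i})$ in $M$ coincide with their counterparts in $M_1$; substituting into~\eqref{eq:prices-formula} gives $p_i^M(v)=p_i^{M_1}(v)$ and therefore $\rev{M}=\rev{M_1}$. The symmetric argument handles the third piece and also shows $\rev{\widehat M_1}=\rev{M_1}$ and $\rev{\widehat M_2}=\rev{M_2}$.

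For the middle piece $M=\lambda\widehat M_1+(1-\lambda)\widehat M_2$, option $k$ has parameters $(\lambda x^{(\varphi_1(k))}_1+(1-\lambda)x^{(\varphi_2(k))}_2,\,\lambda\beta^{(\varphi_1(k))}_1+(1-\lambda)\beta^{(\varphi_2(k))}_2)$, so its boosted welfare at any bid vector $b$ equals $\lambda\,W_1(\varphi_1(k),b)+(1-\lambda)\,W_2(\varphi_2(k),b)$, where $W_j(\ell,b):=\sum_i b_i^\top x^{(\ell)}_{j,i}+\beta^{(\ell)}_j$ denotes boosted welfare in $M_j$. Because $\varphi$ is a bijection onto $\mathcal{K}_1\times\mathcal{K}_2$, and because $0$-reducibility lets me replace $\max_{\mathcal{K}_j}W_j$ by $\max_{\mathcal{K}}W_j$, the max over $k\in\mathcal{K}$ in $M$ decouples as $\lambda\max_{\mathcal{K}}W_1(\cdot,b)+(1-\lambda)\max_{\mathcal{K}}W_2(\cdot,b)$ and is attained (a.s.\ uniquely) at $k(b)=\varphi^{-1}(k_{M_1}(b),k_{M_2}(b))$. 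Applying this identity with both $b=v$ and $b=v_{-i}$ makes each of the two terms in~\eqref{eq:prices-formula} decompose linearly in $\lambda$, so $p_i^M(v)=\lambda p_i^{M_1}(v)+(1-\lambda)p_i^{M_2}(v)$ pointwise, and hence $\rev{M}=\lambda\rev{M_1}+(1-\lambda)\rev{M_2}\ge\min\{\rev{M_1},\rev{M_2}\}$.

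The main obstacle compared to the RochetNet setting is that the AMA price depends on two distinct welfare evaluations, at $v$ and at each $v_{-i}$; both must evolve linearly along the middle piece. The key observation enabling this is that the identification $k(b)=\varphi^{-1}(k_{M_1}(b),k_{M_2}(b))$ holds for every bid vector $b$ simultaneously, so the product structure of the bijection controls $k(v)$ and all the $k(v_{-i})$ at once. A secondary technical point is the handling of ties in the argmax and in the payment-maximizing tie-breaking rule; since ties in boosted welfare occur on a measure-zero set of valuation profiles, they do not affect the expected revenue, and all of the identities above need only be verified on a set of full measure.
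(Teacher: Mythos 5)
Your construction is the same as the paper's (the bijection of \Cref{lem:bijection}, the copy menus $\widehat M_1,\widehat M_2$, the three-piece path, and the decoupling of the boosted-welfare maximum over the product $\mathcal{K}_1\times\mathcal{K}_2$), but there is a genuine gap in how you dispose of ties. You assert that ties in boosted welfare occur on a measure-zero set of valuation profiles and that the argmax is a.s.\ unique. Neither claim is available here. First, $F$ is an arbitrary distribution (no density assumption in this proposition), so even a tie between two fixed non-identical options can occur with positive probability. Second, and more fatally, the menus $\widehat M_1$, $\widehat M_2$, and every convex combination along your path contain \emph{duplicated} options by construction --- each option of $\mathcal{K}_1$ (resp.\ $\mathcal{K}_2$) is repeated $\sqrt{K+1}$ times --- so the argmax is non-unique for \emph{every} valuation profile, not on a null set. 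Consequently your identities $p_i^M(v)=p_i^{M_1}(v)$ on the first piece and $p_i^M(v)=\lambda p_i^{\widehat M_1}(v)+(1-\lambda)p_i^{\widehat M_2}(v)$ on the middle piece are not established: $0$-reducibility only tells you the tie-broken selection lands in $\mathcal{K}_1$, not that options outside $\mathcal{K}_1$ are a.s.\ strictly suboptimal, and the selected option in $M$ (and in each reduced profile $v_{-i}$) could a priori be a different welfare maximizer yielding a different payment.

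The repair is to stop aiming for pointwise equality of prices and instead use the stated tie-breaking rule together with the explicit payment formula \eqref{eq:total-payment}. Since ties are broken in favor of maximizing the total payment, it suffices to exhibit \emph{one} boosted-welfare maximizer in $M$ (namely $k(v)$ on the outer pieces, and $k^*(v)=\varphi^{-1}(k_1(v),k_2(v))$ on the middle piece, likewise for each $v_{-i}$) whose induced total payment via \eqref{eq:total-payment} equals the $M_1$-payment, respectively the convex combination $\lambda(\cdot)+(1-\lambda)(\cdot)$; the actually selected options can only give a total payment at least as large. This yields the inequalities $\rev{M}\ge\rev{M_1}$ and $\rev{M}\ge\lambda\rev{\widehat M_1}+(1-\lambda)\rev{\widehat M_2}$, which are all that is needed, rather than the equalities you claim. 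Note that \eqref{eq:total-payment} is exactly what makes this work: among welfare maximizers for $v$ the payment depends on the choice only through $-\boost^{(k(v))}$, and the terms for the $v_{-i}$ are each separately the maximal boosted welfare of the reduced profile, so the comparison can be carried out term by term.
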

The  proof idea is similar to the proof of Proposition~\ref{prop:zero} in \emph{RochetNet}, {but requires additional arguments due to the more intricate price structure} (see Appendix~\ref{app:AMA-zero} for more details).
Based on this proposition, now, we are able to show our two main results. First, we achieve $\varepsilon$-connectivity from $\varepsilon$-reducibility. 
\begin{theorem} \label{thm:AMA-1}
    If two AMAs $M_1$ 
    and $M_2$ 
    are $\varepsilon$-reducible, then they are $\varepsilon$-mode-connected. Moreover, the curve transforming $M_1$ to $M_2$ is piecewise linear with only five pieces.
\end{theorem}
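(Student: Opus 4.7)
The plan is to mirror the strategy used in the proof of Theorem~\ref{thm:epsilon-reducible} for RochetNet. I construct the five-piece connecting curve by first linearly transforming $M_1$ into a $0$-reducible menu $\widetilde M_1$, then invoking Proposition~\ref{prop:AMA-zero} to connect $\widetilde M_1$ to $\widetilde M_2$ via a three-piece path, and finally transforming $\widetilde M_2$ back into $M_2$ with a single linear segment. The technical work sits almost entirely in the first and last pieces: I need a linear transformation $M\mapsto\widetilde M$ from an $\varepsilon$-reducible menu into a $0$-reducible menu such that every intermediate menu on the segment $(1-\lambda)M+\lambda\widetilde M$ loses at most $\varepsilon$ in expected revenue relative to $\rev{M}$.

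To construct $\widetilde M$, I leave every allocation $x^{(k)}$ unchanged and preserve the boosts $\boost^{(k)}$ for all $k\in\mathcal{K}'$, including the default. For each $k\notin\mathcal{K}'$, I replace $\boost^{(k)}$ with $\widetilde\boost^{(k)}:=\boost^{(k)}-C$ for a sufficiently large constant $C$ (any $C$ exceeding $m+\max_{k\notin\mathcal{K}'}\boost^{(k)}$ works, since the boosted welfare of any option is bounded by $m$ while the default in $\mathcal{K}'$ has boosted welfare $0$). This guarantees that in $\widetilde M$ no option outside $\mathcal{K}'$ is ever chosen as $k(v)$ or as any $k(v_{-i})$, so $\widetilde M$ is $0$-reducible. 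Along the interpolation, the boosts of options outside $\mathcal{K}'$ decrease strictly monotonically in $\lambda$, while those of options inside $\mathcal{K}'$ remain fixed.

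The crucial observation is that on the good event from Definition~\ref{def:ama-reducible}, which occurs with probability at least $1-\varepsilon/m$, every argmax $k(v)$ and every $k(v_{-i})$ is preserved throughout the interpolation. Indeed, those argmaxes already lie in $\mathcal{K}'$ at the original menu, and decreasing the scores of outside options while leaving those in $\mathcal{K}'$ fixed only reinforces the maximization; the tie-breaking rule favoring minimum boost among ties is unaffected because the boosts of options in $\mathcal{K}'$ do not change, and for $\lambda>0$ options outside $\mathcal{K}'$ strictly lose any welfare tie. Because the price formula~\eqref{eq:prices-formula} only references the parameters of the active options, each $p_i(v)$ is therefore literally unchanged along the path. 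On the bad event each price still lies in $[0,1]$ by IR together with $\lVert v_i\rVert_1\le 1$, so the per-profile total payment $\sum_i p_i(v)$ can change by at most $m$. Integrating yields $\bigl|\rev{(1-\lambda)M+\lambda\widetilde M}-\rev{M}\bigr|\le (\varepsilon/m)\cdot m=\varepsilon$ for every $\lambda\in[0,1]$.

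Combining these endpoint estimates with Proposition~\ref{prop:AMA-zero}, which guarantees that the middle three-piece path satisfies $\rev{\cdot}\ge\min\{\rev{\widetilde M_1},\rev{\widetilde M_2}\}\ge\min\{\rev{M_1},\rev{M_2}\}-\varepsilon$, produces the claimed $\varepsilon$-mode-connectivity along a five-piece piecewise linear curve. The main obstacle I anticipate is the joint preservation of all $m+1$ argmaxes simultaneously on the good event: this is precisely what necessitates the $\varepsilon/m$ probability in Definition~\ref{def:ama-reducible} together with the combined conditions on $k(v)$ and every $k(v_{-i})$, and it is what distinguishes the AMA analysis from the simpler RochetNet argument, where only a single argmax needs to be tracked.
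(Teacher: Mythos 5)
Your proposal is correct and follows essentially the same route as the paper's proof: the same five-piece decomposition via Proposition~\ref{prop:AMA-zero}, the same construction of $\widetilde M$ by pushing the boosts of options outside $\mathcal{K}'$ low enough that they can never win, and the same good/bad-event accounting giving a loss of at most $(\varepsilon/m)\cdot m=\varepsilon$ on each endpoint segment. The only cosmetic differences are that you subtract a uniform constant $C$ from the outside boosts rather than setting them below $-m$, and you argue that each individual price is preserved on the good event rather than invoking the total-payment formula~\eqref{eq:total-payment}; both are equivalent.
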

Before the proof, we recall how the total payment is calculated for a  valuation profile $v$.  We choose $k(v)$ as the option which maximizes the boosted welfare, $\sum_i v_i^\top x_{i}^{(k)} + \boost^{(k)}$.  
According to \eqref{eq:prices-formula}, the total revenue can be written as
\begin{equation}\label{eq:total-payment}
\begin{aligned}
\sum_i p_i(v)=
    \sum_i \underbrace{{\left(\sum_{\ell \neq i} v_{\ell}^\top x_{\ell}^{(k(v_{-i}))} + \boost^{(k(v_{-i}))}\right)}}_{\text{boosted welfare of $v_{-i}$}} - (m - 1) \underbrace{\left(\sum_i v_{i}^\top x_{i}^{(k(v))} + \boost^{(k(v))} \right)}_{\text{boosted welfare of $v$}} - \boost^{(k(v))}.
\end{aligned}    
\end{equation}

\begin{proof}[Proof of Theorem~\ref{thm:AMA-1}]
Similar to the proof of Theorem~\ref{thm:epsilon-reducible}, it is sufficient to show that every $\varepsilon$-reducible menu $M$ can be linearly transformed into a $0$-reducible menu $\widetilde M$ such that each convex combination of $M$ and $\widetilde M$ achieves a revenue of at least $\rev{M} - \varepsilon$. {This can then be used as the first and fifth linear piece of the curve connecting $M_1$ and $M_2$, while the middle three pieces are provided by \Cref{prop:AMA-zero}.}

We construct $\widetilde M$ by {\em (i)} keeping all options in~$\mathcal{K}'$ unchanged; {\em(ii)} for the options $k \in\options\setminus \mathcal{K}'$, we decrease $\boost^{(k)}$ to be smaller than $- m$, which implies such an option will never be selected (recall that $0\in \options'$ is assumed, and the option $(\mathbf{0},0)$ is better than any such option). Consequently, $\widetilde M$ is $0$-reducible.  

    To complete the proof, let us look at the revenue of $M' = \{ ({x'}^{(k)},{\boost'}^{(k)})\}_{k\in \options}$, which is a convex combination of $M$ and $\widetilde M$: $M' = \lambda M + (1 - \lambda) \widetilde M$ for $0 \leq \lambda < 1$. Let ${k'}(v) = \arg \max_k \sum_{i} v_{i}^\top {x'}^{(k)}_{i} + {\boost'}^{(k)}$. As we decrease $\boost^{(k)}$ for $k \notin \mathcal{K}'$,  $k(v) \in \mathcal{K}'$ implies ${k'}(v)  \in \mathcal{K}'$ and, additionally, option ${k'}(v)$ and option $k(v)$ achieve the same boosted welfare and same $\boost$. Therefore, since $M$ is $\varepsilon$-reducible, with probability at least $1 - \frac{\varepsilon}{m}$, the boosted welfare of $v$ as well as the boosted welfare of $v_{-i}$ for all buyers $i$ is the same for $M$ and for $M'$. According to the formula \eqref{eq:total-payment}, the total payment for the profile $v$ is the same for $M$ and $M'$.
Therefore, the  loss on the revenue can only appear with probability at most  $\frac{\varepsilon}{m}$,  and the maximum loss is at most $m$, which implies an $\varepsilon$ loss in total.
\end{proof}
Second, we show that mode connectivity also holds for those AMAs with large menu sizes, namely for $K+1 \geq \yys$.
\begin{theorem}\label{thm:AMA-2}
     For any $0 < \epsilon \leq \frac{1}{4}$, if two AMAs $M_1$ 
     and $M_2$ 
     have at least $K+1 \geq \yys$ options, then they are $\varepsilon$-mode-connected. Moreover, the curve transforming $M_1$ to $M_2$ is piecewise linear with only five pieces.
\end{theorem}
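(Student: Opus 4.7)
The plan follows the blueprint of Theorem~\ref{thm:large}. It suffices to exhibit, for every AMA menu $M$ with $K+1\geq \yys$ options, a straight-line interpolation from $M$ to a $0$-reducible menu $\widetilde M$ along which the expected revenue drops by at most $\varepsilon/2$. Applying this construction to $M_1$ and $M_2$ yields the first and fifth linear pieces, while \Cref{prop:AMA-zero} applied to $\widetilde M_1,\widetilde M_2$ supplies the three middle pieces and introduces no further loss.

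To build $\widetilde M$, I would first discretize allocations. Set $D:=\lceil 16m^3/\varepsilon^2\rceil$ and, for every option $k\in\mathcal{K}$, replace the allocation matrix $x^{(k)}\in[0,1]^{n\times m}$ by a rounded matrix $\widetilde x^{(k)}$ whose entries lie on the grid $\{0,1/D,\ldots,1\}$ and whose column sums still satisfy $\sum_i \widetilde x^{(k)}_{ij}\leq 1$. This is done by the dependent randomized rounding technique of \citet{dughmi2014sampling}, adapted as in the proof of Theorem~\ref{thm:large}, so that with high probability every welfare contribution $\sum_{\ell\in S} v_\ell^\top \widetilde x^{(k)}_\ell$ (for any subset $S\subseteq[m]$ arising from the formula \eqref{eq:total-payment}) differs from its un-rounded counterpart by at most $\varepsilon/(4m^2)$, uniformly over the valuation distribution except on an event of probability at most $\varepsilon/(2m)$; the $1/\varepsilon^2$ factor in $D$ comes from the standard concentration analysis, while the $m^3$ factor absorbs the $m$ buyers appearing in \eqref{eq:total-payment} and the $1/m$ slack built into \Cref{def:ama-reducible}. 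I would then modify each boost $\boost^{(k)}$ to $\widetilde\boost^{(k)}$ so that the boosted welfare of option $k$ is preserved up to the same error bound uniformly over $v$, and finally, since the number of distinct rounded allocations is at most $D^{nm}=\yy=\sqrt{K+1}$, keep exactly one representative option per distinct rounded allocation, eliminating the remaining ones by pushing their boosts below $-m$ along the same linear piece (as in the proof of Theorem~\ref{thm:AMA-1}). The resulting $\widetilde M$ is $0$-reducible.

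To bound the revenue loss along the homotopy $M_\lambda=\lambda M+(1-\lambda)\widetilde M$, I would use that the boosted welfare of every option at every valuation profile $v$ is linear in $\lambda$, so the pointwise deviation from its $M$-value is uniformly bounded by $\varepsilon/(4m^2)$ outside a bad event of probability at most $\varepsilon/(2m)$. Substituting this into \eqref{eq:total-payment} and using the tie-breaking rule that favors larger total payment, each of the $m$ terms involving the boosted welfare of $v_{-i}$, the boosted-welfare-of-$v$ term, and the $\boost^{(k(v))}$ term changes by at most $\varepsilon/(4m^2)$; summing gives a pointwise change of at most $\varepsilon/(2m)$ on the good event. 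Accounting for the bad event, on which the revenue loss is trivially bounded by $m$, the expected revenue loss along the entire homotopy is at most $\varepsilon/2$, as required.

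The main obstacle is the discontinuity of the active options $k(v)$ and $k(v_{-i})$ appearing in \eqref{eq:total-payment}: perturbing the allocations can swap the argmax, and the payment depends on the boost of the newly selected option, which is not a continuous function of the menu. The careful coupling between the rounding and the boost adjustment is what keeps this discontinuity benign: whenever the active option switches along the homotopy, the new and old options have nearly equal boosted welfare, so the payment-relevant quantities change by the rounding error rather than by a discrete jump. Proving that this holds not only at the endpoints but uniformly along the entire linear piece, and in the presence of the $m$ different subprofiles $v_{-i}$ simultaneously, is the most delicate step and is the AMA analogue of the technical core of Theorem~\ref{thm:large}.
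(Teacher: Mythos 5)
Your overall skeleton matches the paper's: round allocations to a grid of spacing $\varepsilon^2/(16m^3)$ so that at most $\yy=\sqrt{K+1}$ distinct allocations survive, argue that the resulting menu is $0$-reducible, bound the loss along the straight line to the rounded menu, and splice in \Cref{prop:AMA-zero} for the middle three pieces. However, there is a genuine gap at exactly the point the paper identifies as the crux. In the payment formula \eqref{eq:total-payment} the first two groups of terms are boosted welfares and are indeed perturbed only by the rounding error, but the trailing term is $-\boost^{(k(v))}$, the \emph{boost itself} of the selected option. Your argument that ``whenever the active option switches, the new and old options have nearly equal boosted welfare, so the payment-relevant quantities change by the rounding error'' does not cover this term: two options can have identical boosted welfare $\sum_i v_i^\top x_i^{(k)}+\boost^{(k)}$ while their boosts differ by an arbitrary amount (compensated by the welfare part), so an argmax swap induced by the rounding can increase $\boost^{(k(v))}$, and hence decrease revenue, by a quantity not controlled by the grid spacing. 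Your proposed boost modification --- adjusting $\widetilde\boost^{(k)}$ so that each option's boosted welfare is \emph{preserved} --- makes this worse rather than better, since it leaves the selection ties completely unbroken. The paper's fix is different: it sets $\widetilde\boost^{(k)}=(1-\delta)\boost^{(k)}$ with $\delta=\sqrt{\tilde\varepsilon}/m$, a proportional \emph{discount} that penalizes large boosts, and then proves $\boost^{(\tilde k(v))}\le\boost^{(k(v))}+\tilde\varepsilon/\delta$ by a contradiction argument; optimizing the trade-off between the $m^2\delta/(1-\delta)$ and $\tilde\varepsilon/\delta$ terms is what produces the $m^3$ in the menu-size bound. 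Without some such asymmetric treatment of the boosts, your homotopy bound fails.

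A secondary, more minor divergence: you invoke dependent randomized rounding and a concentration/bad-event analysis, whereas the paper uses a deterministic floor rounding giving a \emph{pointwise} guarantee $\sum_i v_i^\top x_i^{(k)}\ge\sum_i v_i^\top\tilde x_i^{(k)}\ge\sum_i v_i^\top x_i^{(k)}-\tilde\varepsilon$ for every valuation profile. The pointwise route is both simpler and necessary for the theorem's claim to hold for arbitrary, possibly correlated distributions; a probabilistic bad event of measure $\varepsilon/(2m)$ could be salvaged in expectation, but it adds nothing here and the concentration machinery is not needed. Likewise, the extra step of pushing duplicate options' boosts below $-m$ is unnecessary: once at most $\sqrt{K+1}$ distinct allocations remain, the menu is already $0$-reducible because only boosted-welfare-maximizing options affect payments.
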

\begin{proof}[Proof Sketch]
The full proof can be found in Appendix~\ref{sec:AMA-discrete}.
Similar to \emph{RochetNet}, the idea of proving Theorem~\ref{thm:AMA-2} is to discretize the allocations in the menu, then one can use Proposition~\ref{prop:AMA-zero} to construct the low-loss transformation from $M_1$ to $M_2$ by five linear pieces. To do this, one wants the loss of revenue to be small during the discretization. Consider the formula \eqref{eq:total-payment} of the total payment.
The first two terms do not change much by a small change of the discretization. However, the last term~$\boost^{(k(v))}$ might be significantly affected by discretization, which may cause a notable decrease in the total payment. To avoid this, we perform a proportional discount on  $\boost$, incentivizing the auctioneer to choose an allocation with a small $\boost$. By this approach, the original revenue will be approximately maintained. Furthermore, we show a linear path, connecting the original menu and the menu after discretizing, which will suffer a small loss. 
\end{proof}


\section{Conclusion}
We have given theoretical evidence of mode-connectivity in neural networks designed to learn auction mechanisms. Our results show that, for a sufficiently wide hidden layer, $\varepsilon$-mode-connectivity holds in the strongest possible sense. Perhaps more practically, we have shown $\varepsilon$-mode-connectivity under $\varepsilon$\nobreakdash-reducibility, i.e., the assumption that there is a sufficiently small subset of neurons that preserve most of the revenue. There is evidence for this assumption in previous work in differentiable economics. A systematic experimental study that verifies this assumption under various distributions and network sizes is left for future work.

Our results make a first step in providing theoretical arguments underlying the success of neural networks in mechanism design. Our focus was on some of the most basic architectures. A natural next step is to extend the arguments for AMA networks with variable weights $w_i$. Such a result will need to analyze a four layer network, and thus could make headway into understanding the behaviour of deep networks. Besides \emph{RochetNet}, \citet{dutting2019optimal} also proposed \emph{RegretNet}, based on minimising a regret objective. This network is also applicable to multiple buyers, but only provides approximate incentive compatibility, and has been extended in subsequent work, e.g., \cite{feng2018deep,golowich2018deep,duan2022context}.  The architecture  is however quite different from \emph{RochetNet}: it involves two deep neural networks in conjunction, an allocation and a payment network, and uses expected ex post regret as the loss function.
{We therefore expect a mode-connectivity analysis for \emph{RegretNet} to require a considerable extension of the techniques used by us. We believe that such an analysis} would be a significant next step in the theoretical analysis of neural networks in differentiable economics.

\begin{ack}
    All three authors gratefully acknowledge support by the European Research Council (ERC) under the European Union’s Horizon 2020 research and innovation programme (for all three authors via grant agreement ScaleOpt–757481; for Christoph Hertrich additionally via grant agreement ForEFront--615640). Yixin Tao also acknowledges the Grant 2023110522 from SUFE.
\end{ack}

\bibliographystyle{abbrvnat}
\bibliography{sample}

\newpage
\appendix

\begin{center}
	\bfseries
	\Large
	Supplemental:\\Mode Connectivity in Auction Design
\end{center}

\section{Detailed Proofs of the Mode Connectivity for the RochetNet}
\label{sec:rochetproofs}

In this section we provide the detailed proofs omitted in \Cref{sec:rochetnet}.

\subsection{Interpolating between 0-reducible menus}

We start with the proofs of statements on the way towards proving \Cref{prop:zero}.

\begin{proof}[Proof of \Cref{lem:bijection}]
    We prove the claim by providing an explicit construction for $\varphi$ in two different cases.
    
    First, suppose that there is some $k^*\in \mathcal{K}_1\cap \mathcal{K}_2$. In this case, start by setting $\varphi(k)\coloneqq(k,k)$ for all~$k\in \mathcal{K}_1\cap \mathcal{K}_2$. Then, for all $k\in \mathcal{K}_1\setminus \mathcal{K}_2$, set $\varphi(k)=(k,k^*)$, and for all $k\in \mathcal{K}_2\setminus \mathcal{K}_1$, set $\varphi(k)=(k^*,k)$. So far, we have not assigned any pair twice and the two conditions of the lemma are already satisfied, so we can simply assign the remaining pairs arbitrarily.

    Second, suppose that $\mathcal{K}_1$ and $\mathcal{K}_2$ are disjoint. Note that for this being possible, $\sqrt{K+1}$ must be at least~$2$. Pick some distinct $k_1,k'_1\in \mathcal{K}_1$ and $k_2,k'_2\in \mathcal{K}_2$. Set $\varphi(k_1)\coloneqq(k_1,k_2)$, $\varphi(k'_1)\coloneqq(k'_1,k'_2)$, $\varphi(k_2)\coloneqq(k'_1,k_2)$, and $\varphi(k'_2)\coloneqq(k_1,k'_2)$. Then, for all $k\in \mathcal{K}_1\setminus \{k_1,k'_1\}$, set $\varphi(k)\coloneqq (k,k_2)$ and for all $k\in \mathcal{K}_2\setminus \{k_2,k'_2\}$, set $\varphi(k)\coloneqq (k_1,k)$. Again, we have not assigned any pair twice and the two conditions of the lemma are already satisfied, so we can simply assign the remaining pairs arbitrarily.
\end{proof}

\begin{proof}[Proof of \Cref{prop:copying}]
    We only prove the first statement on $M_1$; the statement on $M_2$ follows analogously.
 We show that for each possible valuation $v$ of the buyer, the price paid to the seller for menu $M$ is at least as high as in menu $M_1$. Suppose for valuation $v$ that the buyer chooses the $k$-th option in menu $M_1$. Note that we may assume $k\in \mathcal{K}_1$ due to $0$-reducibility of $M_1$. By construction of~$\varphi$, it follows that $\varphi_1(k)=k$. Therefore, the $k$-th option in~$M$ is exactly equal to the $k$-th option in $M_1$. Making use of the fact that ties are broken in favor of larger prices, it suffices to show that the $k$-th option is utility-maximizing in $M$, too.

    To this end, let $k'\in\mathcal{K}$ be an arbitrary index. If $M_1=\{({x}^{(k)},p^{(k)}\}_{k\in\mathcal{K}}$, then the utility of option $k'$ in $M$ is
    \begin{align*}
        &\phantom{{}={}} v^\top(\lambda {x}^{(k')} + (1-\lambda) {x}^{(\varphi_1(k'))}) - (\lambda{p}^{(k')} + (1-\lambda) {p}^{(\varphi_1(k'))})\\
        &= \lambda (v^\top {x}^{(k')}- {p}^{(k')}) + (1-\lambda) (v^\top {x}^{(\varphi_1(k'))}- {p}^{(\varphi_1(k'))})\\
        &\leq \lambda (v^\top {x}^{(k)}- {p}^{(k)}) + (1-\lambda) (v^\top {x}^{(k)}- {p}^{(k)})\\
        &= v^\top {x}^{(k)}- {p}^{(k)},
    \end{align*}
    where the inequality follows because the $k$-th option is utility-maximizing for menu $M_1$. This shows that it is utility-maximizing for menu $M$, completing the proof.
\end{proof}

\begin{proof}[Proof of \Cref{prop:middle_piece}]
    The claim is trivial for $\lambda=0$ or $\lambda=1$. Therefore, assume $0<\lambda<1$ for the remainder of the proof.
    Again we show that the claim holds pointwise for each possible valuation and therefore also for the revenue. For valuation $v$, let $k_1$ and $k_2$ be the active option assigned to the buyer in $\widehat M_1$ and $\widehat M_2$, respectively. Note that by construction of the menus $\widehat M_1$ and $\widehat M_2$ we may assume without loss of generality that $k_1\in \mathcal{K}_1$ and $k_2\in \mathcal{K}_2$. Let $k^*\coloneqq \varphi^{-1}(k_1, k_2)$.
    
    We show that option $k^*$ is utility-maximizing in $M=\{(x^{(k)},p^{(k)})\}_{k\in\mathcal{K}}$. To this end, we use the notation $\widehat M_1 = \{(\hat{x}^{(k)}, \hat{p}^{(k)})\}_{k\in\mathcal{K}}$ and $\widehat M_2 = \{(\hat{y}^{(k)}, \hat{q}^{(k)})\}_{k\in\mathcal{K}}$. Let $k'\in\mathcal{K}$ be an arbitrary index. The utility of option $k'$ in menu $M$ can be bounded as follows:
    \begin{align*}
        v^\top x^{(k')} - p^{(k')}
        &= \lambda (v^\top \hat{x}^{(k')} - \hat{p}^{(k')}) + (1-\lambda) (v^\top \hat{y}^{(k')} - \hat{q}^{(k')})\\
        &\leq \lambda (v^\top \hat{x}^{(k_1)} - \hat{p}^{(k_1)}) + (1-\lambda) (v^\top \hat{y}^{(k_2)} - \hat{q}^{(k_2)})\\
        &= \lambda (v^\top \hat{x}^{(k^*)} - \hat{p}^{(k^*)}) + (1-\lambda) (v^\top \hat{y}^{(k^*)} - \hat{q}^{(k^*)})\\
        &= v^\top x^{(k^*)} - p^{(k^*)},
    \end{align*}
    where the inequality in the second line follows because $k_1$ and $k_2$ are utility-maximizing for $\widehat M_1$ and~$\widehat M_2$, respectively, and the equality in the third line follows because, by construction, in menu~$\widehat M_1$ option~$k^*$ is equivalent to option $k_1=\varphi_1(k^*)\in \mathcal{K}_1$, and similarly in menu $\widehat M_2$ option $k^*$ is equivalent to option $k_2=\varphi_2(k^*)\in \mathcal{K}_2$. This concludes the proof that $k^*$ is utility-maximizing.

    With the same reasoning as above, we obtain $p^{(k^*)}= \lambda \hat{p}^{(k_1)} + (1-\lambda) \hat{q}^{(k_2)}$, from which we conclude that the price  achieved by the seller in menu $M$ for valuation $v$ is at least as high as the convex combination of the achieved prices for menus $\widehat M_1$ and $\widehat M_2$.
\end{proof}

\subsection{Discretizing large menus}\label{sec:large-disc}

This subsection is devoted to providing a detailed proof of \Cref{thm:large}. To do so, we will show how to convert any menu $M$ of size at least $\xxs$ into a $0$-reducible menu $\widetilde M$ such that each convex combination of $M$ and $\widetilde M$ achieves a revenue of at least $\rev{M}-\varepsilon$. Without loss of generality, we assume that $M$ has size exactly $K+1=\xxs$.

To construct the menu $\widetilde{M}$ satisfying these requirements, we adapt techniques from \cite{dughmi2014sampling}.\footnote{In their paper, they construct a menu with a finite number of options to approximate the optimal mechanism. The approximation is based on the multiplicative error, and they assume the buyer's valuation is no less than $1$.}
In general, the idea is to discretize the allocations in the menu by a finite allocation set~$S$ (see Definition~\ref{def:dis-alloc}) whose size is at most $\sqrt{K+1} = \xx$. However, because of the discretization, the buyer may choose an option with a much smaller price, providing a lower revenue compared to the original menu. To deal with this, we also decrease the prices on the menu; the decrease is in proportion to the price. Intuitively, this incentives the buyer to choose the option with an originally high price. We show, after this modification, the menu achieves a revenue of at least $\rev{M} - \varepsilon$.

For ease of notation, we will use $\tilde{\varepsilon}\coloneqq\frac{\varepsilon^2}{4}$ and, therefore, $2\sqrt{ \tilde{\varepsilon}} = \varepsilon$.

\begin{definition} \label{def:dis-alloc}
    Let $S$ be a (finite) set of allocations.
    We say that $S$ is an $\tilde{\varepsilon}$-cover if, for every possible allocation $x$, there exists an allocation $\tilde{x} \in S$ such that for every possible valuation vector $v$ we have that $v^\top x \geq v^\top \tilde{x} \geq v^\top x - \tilde{\varepsilon}$. 
\end{definition}

The following proposition shows that one can construct an $\tilde{\varepsilon}$-cover $S$ with size at most $\xx$.
\begin{proposition}
If $\lVert v \rVert_1\leq 1$, then \[S = \underbrace{\left\{\tilde{\varepsilon} s\right\}_{s = 0}^{\left\lfloor \frac{1}{\tilde{\varepsilon}} \right \rfloor} \times \left\{\tilde{\varepsilon} s\right\}_{s = 0}^{\left\lfloor \frac{1}{\tilde{\varepsilon}} \right \rfloor} \times \cdots \times \left\{\tilde{\varepsilon} s\right\}_{s = 0}^{\left\lfloor \frac{1}{\tilde{\varepsilon}} \right \rfloor}}_\text{$n$ terms} \] is an $\tilde{\varepsilon}$-cover, and $|S| = \left \lceil\frac{1}{\tilde{\varepsilon}} \right\rceil^n = \xx$.
\end{proposition}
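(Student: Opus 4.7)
The plan is to explicitly construct the rounding map that sends each allocation $x \in [0,1]^n$ to a point $\tilde x \in S$ satisfying the two required inequalities, and then to count $|S|$ directly. Specifically, I would define $\tilde x \in \mathbb{R}^n$ coordinate-wise by $\tilde x_j := \tilde\varepsilon \lfloor x_j/\tilde\varepsilon \rfloor$, i.e., round each coordinate of $x$ down to the nearest multiple of $\tilde\varepsilon$. Since $x_j \in [0,1]$ we have $\lfloor x_j/\tilde\varepsilon \rfloor \in \{0,1,\dots,\lfloor 1/\tilde\varepsilon\rfloor\}$, so $\tilde x \in S$.

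For the inequalities, the key observations are just nonnegativity of $v$ and the normalization $\lVert v\rVert_1\le 1$. By construction $0\le x_j - \tilde x_j < \tilde\varepsilon$ for every coordinate $j$. Combined with $v_j \ge 0$, this yields
\begin{equation*}
v^\top x - v^\top \tilde x \;=\; \sum_{j=1}^n v_j (x_j - \tilde x_j) \;\ge\; 0,
\end{equation*}
which gives the left inequality $v^\top x \ge v^\top \tilde x$. For the right inequality,
\begin{equation*}
v^\top x - v^\top \tilde x \;=\; \sum_{j=1}^n v_j (x_j - \tilde x_j) \;\le\; \tilde\varepsilon \sum_{j=1}^n v_j \;=\; \tilde\varepsilon \lVert v\rVert_1 \;\le\; \tilde\varepsilon,
\end{equation*}
which rearranges to $v^\top \tilde x \ge v^\top x - \tilde\varepsilon$.

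For the cardinality count, each of the $n$ factors is the finite arithmetic progression $\{0,\tilde\varepsilon, 2\tilde\varepsilon,\dots, \lfloor 1/\tilde\varepsilon \rfloor \tilde\varepsilon\}$, which has exactly $\lfloor 1/\tilde\varepsilon\rfloor + 1$ elements. Hence $|S| = (\lfloor 1/\tilde\varepsilon\rfloor + 1)^n \le \lceil 1/\tilde\varepsilon\rceil^n$, which equals $\xx$ after substituting $\tilde\varepsilon = \varepsilon^2/4$. There is no genuine obstacle to this proof; the only subtlety is the mild notational slack between $\lfloor 1/\tilde\varepsilon\rfloor + 1$ and $\lceil 1/\tilde\varepsilon\rceil$ when $1/\tilde\varepsilon$ happens to be an integer, which can be absorbed into the stated bound $\lceil 1/\tilde\varepsilon\rceil^n$ (or handled by a one-line case distinction).
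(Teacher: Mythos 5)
Your proof is correct and follows essentially the same route as the paper's: round each coordinate down to the nearest multiple of $\tilde\varepsilon$, use nonnegativity of $v$ for the lower bound and $v^\top(x-\tilde x)\le \lVert v\rVert_1\lVert x-\tilde x\rVert_\infty\le\tilde\varepsilon$ for the upper bound. Your remark about the $\lfloor 1/\tilde\varepsilon\rfloor+1$ versus $\lceil 1/\tilde\varepsilon\rceil$ edge case when $1/\tilde\varepsilon$ is an integer is a point the paper silently glosses over, but it is immaterial to how the bound is used downstream.
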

\begin{proof}
    For any allocation $x$, we can round it down to $\tilde{x}$, such that  $\tilde{x}_j =  \lfloor \frac{x_j}{\tilde{\varepsilon}} \rfloor \cdot \tilde{\varepsilon}$. It is not hard to see that $v^\top x \geq v^\top \tilde{x}$. Additionally, the inequality $v^\top \tilde{x} \geq v^\top x - \tilde{\varepsilon}$ follows as the total loss is at most $v^\top (\tilde{x} -  x) \leq \|v\|_1 \|\tilde{x} -  x\|_{\infty} \leq \tilde{\varepsilon}$.
\end{proof}

\paragraph{Construction of $\widetilde{M}$.} Given $S$, we can construct $\widetilde{M}$ as follows.
Each option $(x^{(k)}, p^{(k)})$ in menu $M$ is modified to $(\tilde{x}^{(k)}, \tilde{p}^{(k)})$ in menu $\widetilde{M}$, where $\tilde{x}^{(k)}$ is the corresponding allocation of $x^{(k)}$ in $S$ and the price is set to $\tilde{p}^{(k)} = \left(1 - \sqrt{\tilde{\varepsilon}}\right) p^{(k)}$:
\[
    \widetilde{M} = \left\{\left(\tilde{x}^{(k)}, \tilde{p}^{(k)} \right)\right\}_{k\in\mathcal{K}}.
\]

The following lemma shows that this construction indeed ensures that the reward decreases by at most~$\varepsilon$.

\begin{lemma} \label{lem:const-menu-size}
It holds that $\rev{\widetilde{M}} \geq \rev{M} - 2 \sqrt{ \tilde{\varepsilon}} = \rev{M} - \varepsilon$.
\end{lemma}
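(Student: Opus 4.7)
The plan is to establish the inequality pointwise. For each valuation $v$, writing $k^* = k(v)$ and $\tilde k = \tilde k(v)$ for the active options in $M$ and $\widetilde{M}$ respectively, I will prove the bound $\tilde p^{(\tilde k)} \geq p^{(k^*)} - 2\sqrt{\tilde\varepsilon}$; taking expectation over $v \sim F$ then yields $\rev{\widetilde M} \geq \rev{M} - 2\sqrt{\tilde\varepsilon} = \rev{M} - \varepsilon$.

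The core step combines two optimality conditions with the two defining properties of the $\tilde\varepsilon$-cover. Optimality of $k^*$ in $M$ gives the lower bound $p^{(\tilde k)} - p^{(k^*)} \geq v^\top(x^{(\tilde k)} - x^{(k^*)})$. Optimality of $\tilde k$ in $\widetilde M$, combined with $v^\top \tilde x^{(\tilde k)} \leq v^\top x^{(\tilde k)}$ and $v^\top \tilde x^{(k^*)} \geq v^\top x^{(k^*)} - \tilde\varepsilon$ from the cover, yields the matching upper bound $(1-\sqrt{\tilde\varepsilon})(p^{(\tilde k)} - p^{(k^*)}) \leq v^\top(x^{(\tilde k)} - x^{(k^*)}) + \tilde\varepsilon$. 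Subtracting the first inequality from the second eliminates the $v^\top(\cdot)$ term and produces $-\sqrt{\tilde\varepsilon}(p^{(\tilde k)} - p^{(k^*)}) \leq \tilde\varepsilon$, hence $p^{(\tilde k)} \geq p^{(k^*)} - \sqrt{\tilde\varepsilon}$.

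To convert this bound on $p^{(\tilde k)}$ to one on $\tilde p^{(\tilde k)} = (1-\sqrt{\tilde\varepsilon}) p^{(\tilde k)}$, I use individual rationality: since menu $M$ contains the default option with utility $0$, IR at $k^*$ forces $p^{(k^*)} \leq v^\top x^{(k^*)} \leq \|v\|_1 \leq 1$. Therefore $\tilde p^{(\tilde k)} \geq (1-\sqrt{\tilde\varepsilon})(p^{(k^*)} - \sqrt{\tilde\varepsilon}) \geq p^{(k^*)} - \sqrt{\tilde\varepsilon}\cdot p^{(k^*)} - \sqrt{\tilde\varepsilon} \geq p^{(k^*)} - 2\sqrt{\tilde\varepsilon}$, as required.

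Once $\widetilde M$ is in hand, no step above is particularly hard; the interesting content is really in the construction of $\widetilde M$ itself. The factor $(1-\sqrt{\tilde\varepsilon})$ on the prices is calibrated precisely so that the additive discretization loss $\tilde\varepsilon$ from the cover trades against a multiplicative price discount of $\sqrt{\tilde\varepsilon}$, producing the clean cancellation in the middle paragraph. If the discount were omitted, the buyer on $\widetilde M$ could defect to a much cheaper menu option whose rounded allocation happens to match $\tilde x^{(k^*)}$ under $v$, and the revenue could drop by an arbitrary amount; the proportional discount is exactly what removes this incentive and makes the pointwise bound go through.
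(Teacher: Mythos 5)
Your proof is correct and uses the same ingredients as the paper's: optimality of the chosen option in each menu, the two cover inequalities, the calibrated $(1-\sqrt{\tilde\varepsilon})$ price discount, and $p^{(k^*)}\le 1$, all applied pointwise before taking expectations. The only difference is presentational — the paper argues contrapositively that no option with $p^{(k')} < p^{(k)}-\sqrt{\tilde\varepsilon}$ can be selected in $\widetilde M$, whereas you directly combine the two argmax conditions to bound the price of whichever option is selected — so this is essentially the same argument.
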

\begin{proof}
The following inequalities demonstrate the buyer who chooses option $k$ in menu $M$ will not choose option $k'$ in menu $\widetilde{M}$ such that $p^{(k')} < p^{(k)} - \sqrt{ \tilde{\varepsilon}}$.
\begin{align*}
v^\top \tilde{x}^{(k)} - \left(1 - \sqrt{\tilde{\varepsilon}}\right) p^{(k)} &\geq v^\top x^{(k)} -  p^{(k)} - \tilde{\varepsilon} + \sqrt{\tilde{\varepsilon}} p^{(k)}  \\
&\geq v^\top x^{(k')} -  p^{(k')} - \tilde{\varepsilon} + \sqrt{\tilde{\varepsilon}} p^{(k)} \\
&\geq v^\top \tilde{x}^{(k')} -  \left(1 - \sqrt{ \tilde{\varepsilon}}\right) p^{(k')} - \tilde{\varepsilon} + \sqrt{ \tilde{\varepsilon}} (p^{(k)} - p^{(k')}) \\
&>  v^\top \tilde{x}^{(k')} -  \left(1 - \sqrt{ \tilde{\varepsilon}}\right) p^{(k')}. \numberthis \label{eq:price-discount}
\end{align*}
The first and third inequalities hold by Definition~\ref{def:dis-alloc} and  the second inequality holds as the buyer will choose option $k$ in menu $M_1$.
    
Therefore, the total loss on the revenue is upper bounded by $\sqrt{ \tilde{\varepsilon}} p^{(k)} + \sqrt{\tilde{\varepsilon}} \leq 2 \sqrt{ \tilde{\varepsilon}}$, as the price satisfies $p^{(k)} \leq 1$.
\end{proof}

In addition to this property of $\widetilde{M}$ itself, we also need to show the revenue does not drop more than $\varepsilon$ for any menu on the line segment connecting $M$ to $\widetilde{M}$.

\begin{lemma}\label{lem:path_to_discretized}
Let $M'= \lambda M + (1-\lambda) \widetilde M$ be a convex combination of the menus $M$ and $\widetilde M$. Then, $\rev{M'} \geq \rev{M} - 2 \sqrt{  \tilde{\varepsilon}}  = \rev{M} - \varepsilon$.
\end{lemma}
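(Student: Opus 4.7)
The plan is to mirror the pointwise argument in the proof of \Cref{lem:const-menu-size}, carefully tracking how the allocation rounding error and the price discount scale inside the convex combination. Writing each option of $M' = \lambda M + (1-\lambda)\widetilde{M}$ as $({x'}^{(k)},{p'}^{(k)})$ with ${x'}^{(k)} = \lambda x^{(k)} + (1-\lambda)\tilde{x}^{(k)}$ and ${p'}^{(k)} = (1-(1-\lambda)\sqrt{\tilde{\varepsilon}})\, p^{(k)}$, the first observation I would make is that, by linearity and the $\tilde{\varepsilon}$-cover property in \Cref{def:dis-alloc}, for every valuation $v$ the combined allocation still satisfies
\[
v^\top x^{(k)} \;\geq\; v^\top {x'}^{(k)} \;\geq\; v^\top x^{(k)} - (1-\lambda)\tilde{\varepsilon}.
\]

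Fix a valuation $v$ and let $k$ and $k'$ be the active options at $v$ in $M$ and $M'$ respectively (if $\lambda=1$ the statement is immediate, so assume $\lambda<1$). Using the lower bound above on $v^\top \tilde{x}^{(k)}$ and $\tilde{p}^{(k)}=(1-\sqrt{\tilde{\varepsilon}})p^{(k)}$, I would lower bound the utility of option $k$ in $M'$ as
\[
v^\top {x'}^{(k)} - {p'}^{(k)} \;\geq\; v^\top x^{(k)} - p^{(k)} + (1-\lambda)\sqrt{\tilde{\varepsilon}}\, p^{(k)} - (1-\lambda)\tilde{\varepsilon},
\]
and, using the upper bound $v^\top \tilde{x}^{(k')} \leq v^\top x^{(k')}$, upper bound the utility of option $k'$ in $M'$ as
\[
v^\top {x'}^{(k')} - {p'}^{(k')} \;\leq\; v^\top x^{(k')} - p^{(k')} + (1-\lambda)\sqrt{\tilde{\varepsilon}}\, p^{(k')}.
\]
Combining the inequality "utility of $k'$ in $M'$ is at least the utility of $k$ in $M'$" with the optimality of $k$ in $M$ (so that $v^\top x^{(k)} - p^{(k)} \geq v^\top x^{(k')} - p^{(k')}$), the $v^\top x^{(k)}$ and $v^\top x^{(k')}$ terms cancel and, after dividing by $(1-\lambda)\sqrt{\tilde{\varepsilon}}>0$, I obtain the analog of \eqref{eq:price-discount}, namely $p^{(k')} \geq p^{(k)} - \sqrt{\tilde{\varepsilon}}$.

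To finish, the revenue collected at $v$ in $M'$ equals ${p'}^{(k')} = (1-(1-\lambda)\sqrt{\tilde{\varepsilon}})\, p^{(k')}$. Substituting the bound just obtained, expanding, and dropping the two nonnegative cross terms (using $p^{(k)}\leq 1$ and $(1-\lambda)\tilde{\varepsilon}\geq 0$), I get ${p'}^{(k')} \geq p^{(k)} - 2\sqrt{\tilde{\varepsilon}}$ pointwise. Taking expectations over $v\sim F$ gives $\rev{M'} \geq \rev{M} - 2\sqrt{\tilde{\varepsilon}} = \rev{M} - \varepsilon$.

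The step I expect to be trickiest is the bookkeeping of the two $(1-\lambda)$ factors. One might worry that along the interpolation the price-discount margin $(1-\lambda)\sqrt{\tilde{\varepsilon}}$ shrinks to zero while the allocation rounding error could still bite; however, as recorded in the first displayed inequality, the combined allocation ${x'}^{(k)}$ is itself only $(1-\lambda)$-rounded, so both factors scale together and cancel when comparing utilities. This leaves the universal price-gap bound $\sqrt{\tilde{\varepsilon}}$ that is independent of $\lambda$ and is exactly what is needed to reproduce the two-$\sqrt{\tilde{\varepsilon}}$ revenue loss of \Cref{lem:const-menu-size} uniformly on the whole segment.
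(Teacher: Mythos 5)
Your proof is correct and follows essentially the same route as the paper's: a pointwise argument showing that the option selected in $M'$ must have original price at least $p^{(k)}-\sqrt{\tilde{\varepsilon}}$, followed by bounding the discounted price from below. The only cosmetic difference is that you re-derive the price-gap bound with explicit $(1-\lambda)$ bookkeeping, whereas the paper obtains it by taking the convex combination of the optimality inequality for $M$ with the already-established inequality \eqref{eq:price-discount}.
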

\begin{proof}
Let $M=\{(x^{(k)}, p^{(k)})\}_{k\in\mathcal{K}}$ and $M'=\{(x'^{(k)}, p'^{(k)})\}_{k\in\mathcal{K}}$.
Similar to the proof of Lemma~\ref{lem:const-menu-size}, we show that the buyer who chooses option $k$ in menu $M$ will not choose option $k'$ in menu $M'$ such that $p^{(k')} < p^{(k)} - \sqrt{\tilde{\varepsilon}}$. This is true by the following (in)equalities. For any $k'\in\mathcal{K}$, we have that
 \begin{align*}
     v^\top {x'}^{(k)} - {p'}^{(k)} &= \lambda (v^\top x^{(k)} - p^{(k)}) + (1 - \lambda)(v^\top \tilde{x}^{(k)} - \tilde{p}^{(k)}) \\
     &> \lambda (v^\top x^{(k')} - p^{(k')}) + (1 - \lambda)(v^\top \tilde{x}^{(k')} - \tilde{p}^{(k')}).
 \end{align*}
 The inequality follows by combining (i) $v^\top x^{(k)} - p^{(k)} \geq v^\top x^{(k')} - p^{(k')}$, which is true as the buyer will choose option $k$ in menu $M$; and (ii) $v^\top \tilde{x}^{(k)} - \tilde{p}^{(k)} > v^\top \tilde{x}^{(k')} - \tilde{p}^{(k')} $ from \eqref{eq:price-discount}.

 Similar to the proof of Lemma~\ref{lem:const-menu-size}, it follows than that the total loss on the revenue is upper bounded by $2 \sqrt{ \tilde{\varepsilon}}$.
\end{proof}

With these lemmas at hand, we can finally prove \Cref{thm:large}.

\begin{proof}[Proof of \Cref{thm:large}]
    Applying the transformation described in this section to convert $M_1$ and $M_2$ results in two menus $\widetilde M_1$ and $\widetilde M_2$, respectively. Since $\widetilde M_1$ and $\widetilde M_2$ contain at most $\sqrt{K+1}=\xx$ different allocations and a buyer would always choose the cheapest out of several options with the same allocation, they are $0$-reducible. Applying \Cref{prop:zero} to them implies that they are $0$-mode-connected with three linear pieces. Combining these observations with \Cref{lem:const-menu-size,lem:path_to_discretized} implies that $M_1$ and $M_2$ are $\varepsilon$-connected with five linear pieces.
\end{proof}
\section{Bounds on the Error of the Softmax Approximation for the Argmax}\label{sec:softmax-rochet}
In the RochetNet, to ensure that the objective is a smooth function, a softmax operation is used instead of the argmax during the training process:  
$$\revsoftmax{M} = \int \sum_{k = 1}^K p_i \frac{e^{Y ({x^{(k)}}^\top v - p^{(k)})}}{\sum_{k' = 1}^K e^{Y ({x^{(k')}}^\top v - p^{(k')})}} \text{d} F(v).$$
Here, $Y$ is a sufficiently large constant. In this section, we will look at the difference between the actual revenue and this softmax revenue. 

We would like to assume the density of the valuation distribution is upper bounded by $\mathcal{X} = \max_{v\in [0,1]^n\text{ and }\|v\|_1 \leq 1} f(v)$, which is a finite value. 
Given this assumption, the following lemma shows that,  for any menu $M$ of size $K$, the difference between the actual revenue and the softmax revenue is bounded. 

\begin{lemma}\label{lem:softmax}
For any $M$ and $Y \geq 1$,
\begin{align*}
    |\revsoftmax{M} - \rev{M}| \leq \frac{K + 1}{Y} \left((n \mathcal{X} +1 + \frac{\mathcal{X}}{Y}) \log \frac{Y}{ \mathcal{X} } + \mathcal{X}\right).
\end{align*}
\end{lemma}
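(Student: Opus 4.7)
The strategy is to bound the pointwise error $|\revsoftmax{M}(v) - \rev{M}(v)|$ and then integrate against $F$. Writing $u^{(k)}(v) := (x^{(k)})^\top v - p^{(k)}$, $w_k(v) := e^{Y u^{(k)}(v)}/\sum_{k'} e^{Y u^{(k')}(v)}$, and $k^*(v) \in \arg\max_k u^{(k)}(v)$ for the active option, the difference can be rearranged as
\[
\revsoftmax{M}(v) - \rev{M}(v) = \sum_{k \neq k^*(v)} \left(p^{(k)} - p^{(k^*(v))}\right) w_k(v).
\]
Introducing the nonnegative utility gap $\Delta_k(v) := u^{(k^*(v))}(v) - u^{(k)}(v)$, two ingredients drive the estimate. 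First, the identity $p^{(k)} - p^{(k^*)} = (x^{(k)} - x^{(k^*)})^\top v + \Delta_k$, combined with $\|v\|_1 \leq 1$ and $\|x^{(k)} - x^{(k^*)}\|_\infty \leq 1$, gives $|p^{(k)} - p^{(k^*)}| \leq \Delta_k + 1$. Second, keeping only the $k^*$-term in the denominator of the softmax yields the exponential-decay bound $w_k(v) \leq e^{-Y \Delta_k(v)}$. Together, these produce the pointwise estimate $|\revsoftmax{M}(v) - \rev{M}(v)| \leq \sum_{k \neq k^*(v)} (\Delta_k(v) + 1)\, e^{-Y \Delta_k(v)}$.

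Next, I would bound, for each option $k$, the expectation $\mathbb{E}_v[(\Delta_k(v) + 1) e^{-Y \Delta_k(v)} \mathbf{1}[k \neq k^*(v)]]$ by splitting at a threshold $\delta > 0$. On the well-separated event $\{\Delta_k(v) > \delta\}$, the exponential factor controls the integrand, yielding a contribution of order $D\, e^{-Y\delta}\bigl((1+\delta)/Y + 1/Y^2\bigr)$, where $D$ is a bound on the marginal density of $\Delta_k$. On the near-tie event $\{0 < \Delta_k(v) \leq \delta\}$, the integrand is at most $1$, and the event has probability at most $D\delta$. Choosing $\delta = (1/Y)\log(Y/\mathcal{X})$, so that $e^{-Y\delta} = \mathcal{X}/Y$, balances the two regimes and yields a per-option bound of order $(D + 1/Y) \log(Y/\mathcal{X})/Y + \mathcal{X}/Y$. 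With $D$ of order $n\mathcal{X}$, summing over the $K+1$ options exactly reproduces the stated bound, including the residual term $(K+1)\mathcal{X}/Y$ from the $1/Y^2$ contribution.

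The main obstacle is establishing the marginal density estimate for $\Delta_k(v)$. Because $k^*(v)$ itself depends on $v$, the map $v \mapsto \Delta_k(v)$ is only piecewise linear, so one cannot immediately treat it as a linear pullback of the density $f$. I would handle this by partitioning the event $\{0 < \Delta_k(v) \leq \delta\}$ according to the identity of $k^*(v)$; on each piece $\{k^*(v) = k'\}$, the event reduces to the genuinely linear slab $\{v : 0 < u^{(k')}(v) - u^{(k)}(v) \leq \delta\}$ for a fixed pair $(k,k')$. Each slab has $v$-measure bounded by $\mathcal{X}$ times a geometric factor in which $n$ enters through the $\ell_\infty$-bound on $x^{(k')} - x^{(k)}$ and the shape of the valuation simplex. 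The delicate point is to prevent the union over competing $k'$ from inflating the bound by an additional factor of $K$; this is avoided by exploiting that the softmax weight $w_k(v)$ is itself exponentially small on most of the union of slabs, so the naive $(K+1)$-fold union bound can be absorbed into the exponential decay already present in the pointwise bound.
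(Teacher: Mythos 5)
Your setup coincides with the paper's up to a point: both arguments drop the softmax denominator to get the decay bound $w_k(v)\le e^{-Y\Delta_k(v)}$, split at a threshold $\delta=\frac1Y\log\frac{Y}{\mathcal X}$, and dispose of the well-separated regime by exponential decay. The gap is in the near-tie regime, and it is fatal as stated. Consider $n=1$ with two regular options whose allocations are nearly identical: $(x^{(k)},p^{(k)})=(1,\,0.5)$ and $(x^{(k')},p^{(k')})=(1-\eta,\,0.5-0.9\eta)$ for tiny $\eta>0$. Then $u^{(k')}(v)-u^{(k)}(v)=\eta(0.9-v)$, so for $\eta\ll\delta$ the event $\{0<\Delta_k(v)\le\delta\}$ contains essentially all of $\{0.5<v<0.9\}$ and has probability bounded below by a constant independent of $\delta$ and $Y$; the ``marginal density of $\Delta_k$'' near zero is of order $1/\eta$, not $O(n\mathcal X)$. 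Worse, your pointwise estimate $\sum_{k\ne k^*}(\Delta_k+1)e^{-Y\Delta_k}$ is itself $\Omega(1)$ for every such $v$ (there $\Delta_k\approx 0$), whereas the true discrepancy contributed by option $k$ is $(p^{(k)}-p^{(k')})w_k=O(\eta)$. So no averaging argument downstream can recover the claimed bound: the loss occurs already when you replace $|p^{(k)}-p^{(k^*)}|$ by $\Delta_k+1$. The proposed rescue via smallness of $w_k$ on the union of slabs does not help either, since on the near-tie set $w_k$ is of order $1/2$.

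What the paper's proof keeps, and your bound discards, is the correlation between the price gap and the allocation gap. On the near-tie event it uses $(p^{(k)}-p^{(k(v))})^+\le\sigma+\sum_j v_j\,(x_j^{(k)}-x_j^{(k(v))})^+$ and integrates the second term coordinate-wise: fixing $v_{-j}$, the map $v_j\mapsto u^{(k(v))}(v)-u^{(k)}(v)$ is convex and piecewise linear with slope $-(x_j^{(k)}-x_j^{(k(v))})$, so the integral of $(x_j^{(k)}-x_j^{(k(v))})^+$ over the set where this function stays in $[0,\sigma]$ is at most $\sigma$ (the total decrease of a function confined to an interval of length $\sigma$), yielding $\mathcal X\sigma$ per coordinate after weighting by the density. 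In the example above this produces $O(\eta)$ rather than $O(1)$: the small allocation gap in the numerator exactly cancels the large width of the near-tie slab. The same mechanism is what avoids any factor of $K$ from competing options $k'$, since the argument never decomposes by the identity of $k(v)$. To repair your proof you would need to reinstate the actual price gap inside the near-tie integral and establish an analogous cancellation; the threshold choice and the exponential-tail part of your argument can then be kept as is.
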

\begin{proof}
We prove $\revsoftmax{M} - \rev{M} \leq  \frac{K}{Y} \left((n \mathcal{X} +1 + \frac{\mathcal{X}}{Y}) \log \frac{Y}{ \mathcal{X} } + \mathcal{X}\right)$. $\rev{M} - \revsoftmax{M} \leq  \frac{K}{Y} \left((n \mathcal{X} +1 + \frac{\mathcal{X}}{Y}) \log \frac{Y}{ \mathcal{X} } + \mathcal{X}\right)$ follows by a similar argument.

 Let $k(v)$ be the option chosen in menu $M$ when the buyer's valuation is  $v$.  Then, the difference between these two can be bounded as follows. 
\begin{align*}
    \revsoftmax{M} - \rev{M}& \leq \int \sum_{k = 0}^K (p^{(k)} - p^{(k(v))})^+ \cdot \frac{e^{Y (v^\top {x^{(k)}} - p^{(k)})}}{\sum_{k' = 1}^K e^{Y (v^\top {x^{(k')}} - p^{(k')})}}  \text{d} F(v) \\
    &\leq \int \sum_{k = 0}^K (p^{(k)} - p^{(k(v))})^+ e^{Y (v^\top {x^{(k)}} - p^{(k)} - v^\top x^{(k(v))} + p^{(k(v))})}  \text{d} F(v).
\end{align*}
Here, $(\cdot)^+ \triangleq \max\{ \cdot, 0\}$. Now, we focus on one option $k$, and we will give an upper bound on 
\begin{align*}
    \int (p^{(k)} - p^{(k(v))})^+ \mathsf{1}_{v^\top {x^{(k)}} - p^{(k)} + \sigma \geq v^\top {x^{(k(v))}} - p^{(k(v))} \geq v^\top {x^{(k)}} - p^{(k)}  } \text{d} F(v) \numberthis \label{eq:upper-diff-rev}
\end{align*}
for the non-negative parameter $\sigma$, which will be specified later. Note that, it is always true that $v^\top {x^{(k(v))}} - p^{(k(v))} \geq v^\top {x^{(k)}} - p^{(k)}$. If $v^\top {x^{(k)}} - p^{(k)} + \sigma \geq v^\top {x^{(k(v))}} - p^{(k(v))}$ is not satisfied then $e^{Y (v^\top {x^{(k)}} - p^{(k)} - v^\top x^{(k(v))} + p^{(k(v))})} \leq e^{- Y \sigma}$. Therefore, if \eqref{eq:upper-diff-rev} is upper bounded by $\mathcal{C}(\sigma)$, then $\texttt{Rev}^{\texttt{softmax}}_M - \rev{M} \leq (K + 1) (\mathcal{C}(\sigma) + (1 + \sigma)e^{- Y \sigma})$. \footnote{Note that if $p^{(k)}\geq 1 + \sigma$ then $v^\top {x^{(k)}} - p^{(k)} + (p^{(k)} - 1) \leq  v^\top {x^{(k(v))}} - p^{(k(v))}$ as $\texttt{LHS} \leq 0$ and $\texttt{RHS} \geq 0$. Therefore, if $v^\top {x^{(k)}} - p^{(k)} + \sigma \geq v^\top {x^{(k(v))}} - p^{(k(v))}$ is not satisfied, then $(p^{(k)} - p^{(k(v))})^+ e^{Y (v^\top {x^{(k)}} - p^{(k)} - v^\top x^{(k(v))} + p^{(k(v))})}   \leq \max_{\sigma' \geq \sigma}\{(1 + \sigma') e^{- Y \sigma'}\}$. Note that $\max_{\sigma' \geq \sigma}\{(1 + \sigma') e^{- Y \sigma'}\} \leq (1 + \sigma) e^{- Y \sigma}$ when $ Y \geq 1$. }

Note that 
\begin{align*}
    &\int (p^{(k)} - p^{(k(v))})^+ \mathsf{1}_{v^\top {x^{(k)}} - p^{(k)} + \sigma \geq v^\top {x^{(k(v))}} - p^{(k(v))} \geq v^\top {x^{(k)}} - p^{(k)}  } \text{d} F(v) \\
    &~~~~~~\leq \sigma + \int \sum_{j = 1}^n v_j (x^{(k)}_j - x^{(k(v))}_j)^+ \mathsf{1}_{v^\top {x^{(k)}} - p^{(k)} + \sigma \geq v^\top {x^{(k(v))}} - p^{(k(v))} \geq v^\top {x^{(k)}} - p^{(k)}  } \text{d} F(v) .  
\end{align*}
The inequality follows as we consider the region of $v$ such that $v^\top {x^{(k)}} - p^{(k)} + \sigma \geq v^\top {x^{(k(v))}} - p^{(k(v))}$. Additionally, since $v_j \in [0, 1]$,

\begin{align*}
    &\int v_j (x^{(k)}_j - x^{(k(v))}_j)^+ \mathsf{1}_{v^\top {x^{(k)}} - p^{(k)} + \sigma \geq v^\top {x^{(k(v))}} - p^{(k(v))} \geq v^\top {x^{(k)}} - p^{(k)}  } \text{d} F(v)  \\
    &~~~~~~\leq \int (x^{(k)}_j - x^{(k(v))}_j)^+ \mathsf{1}_{v^\top {x^{(k)}} - p^{(k)} + \sigma \geq v^\top {x^{(k(v))}} - p^{(k(v))} \geq v^\top {x^{(k)}} - p^{(k)}  } \text{d} F(v). \\
\end{align*}
Now we fix all coordinates of valuation $v$ other than coordinate $j$. Note that, the function $v^\top {x^{(k(v))}} - p^{(k(v))} - v^\top {x^{(k)}} - p^{(k)}$ is a convex function on $v_j$ and $x^{(k)}_j - x^{(k(v))}_j$ is the negative gradient of this convex function. Since we are looking at the region such that the function  $v^\top {x^{(k(v))}} - p^{(k(v))} - v^\top {x^{(k)}} - p^{(k)}$ is bounded in $[0, \sigma]$, this direct imply
\begin{align*}
   \int_{v_j \in [0, 1]} (x^{(k)}_j - x^{(k(v))}_j)^+ \mathsf{1}_{v^\top {x^{(k)}} - p^{(k)} + \sigma \geq v^\top {x^{(k(v))}} - p^{(k(v))} \geq v^\top {x^{(k)}} - p^{(k)}  } \text{d} F(v) \leq \mathcal{X}\sigma.
\end{align*}
This implies $\texttt{Rev}^{\texttt{softmax}}_M - \rev{M}\leq (K + 1)( \sigma + n \mathcal{X} \sigma + (1 +\sigma)e^{- Y \sigma})$ which is upper bounded by $\frac{K + 1}{Y} \left((n \mathcal{X} +1 + \frac{\mathcal{X}}{Y}) \log \frac{Y}{ \mathcal{X} } + \mathcal{X}\right)$ by setting $\sigma = \frac{1}{Y} \log \frac{Y}{ \mathcal{X} }$.
\end{proof}
\section{Example: Disconnected Local Maxima} \label{sec:example-non-concavity}

This section shows that the revenue is not quasiconcave on $M$, and in fact it might have disconnected local maxima.  Recall that a function $g$ is quasiconcave if and only if, for any $x$, $y$ and $\lambda \in [0, 1]$,
\begin{align*}
    g(\lambda x + (1 - \lambda) y) \geq \min \{g(x), g(y)\}
\end{align*}
Hence, quasiconcavity implies $0$-mode-connectivity with a single straight-line segment.

We consider the case that there is only one buyer, one item, and one regular option on the menu. Consider the following value distribution $f$:
\begin{align*}
    f(x) = \begin{cases}
1.5  & 0 < x \leq \frac{1}{3} + 0.15\\
0 &  \frac{1}{3} + 0.15 < x \leq \frac{2}{3} + 0.15\\
1.5 &  \frac{2}{3} + 0.15 < x \leq 1. \numberthis \label{dist:quasi-concave}
\end{cases}
\end{align*}
With this probability distribution, we show the following result. As Figure~\ref{fig:my_label} shows, there are two local maxima so that any continuous curve connecting them has lower revenue than either endpoint. Hence, mode connectivity fails between these two points. We only give a formal proof of the fact that the revenue is not quasiconcave.
\begin{figure}[hbt]
\centering
    \includegraphics[width=0.5\textwidth]{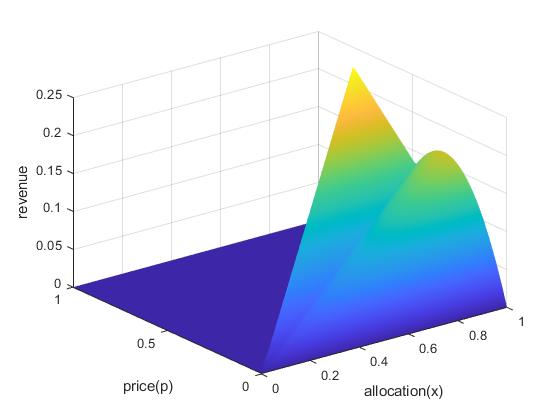}
    \caption{Revenue of the mechanism $M = \{(x, p)\}$ when the value distribution is $f$.}
    \label{fig:my_label}
\end{figure}
\begin{lemma}
$\rev{M}$ is not quasiconcave on $M$.
\end{lemma}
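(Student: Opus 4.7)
The plan is to witness non-quasiconcavity by explicitly exhibiting two menus $M_1, M_2$ together with their midpoint $M_3 = \tfrac{1}{2}M_1 + \tfrac{1}{2}M_2$ such that $\rev{M_3} < \min\{\rev{M_1}, \rev{M_2}\}$. The heuristic behind the construction is that the density $f$ has a zero-density ``gap'' on $(\tfrac{29}{60}, \tfrac{49}{60})$, so the revenue curve on the slice $x=1$ exhibits two bumps (one just before the gap and one just after), with a valley inside the gap. Placing $M_1$ and $M_2$ near these two bumps forces the straight-line midpoint in parameter space to land in the valley.

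First I would reduce the computation of the revenue to the elementary formula $\rev{(x,p)} = p \cdot \Pr_{v\sim f}[v \ge p/x]$, which holds because the buyer picks the single regular option iff $vx - p \ge 0$ and otherwise defaults to $(\mathbf{0},0)$ at price $0$. Specializing to $x=1$ gives $R(p) := p\cdot(1-F(p))$, where $F$ is the piecewise linear CDF of $f$: $F(p) = \tfrac{3}{2}p$ for $p \le \tfrac{29}{60}$, $F(p) \equiv \tfrac{29}{40}$ on the gap $[\tfrac{29}{60}, \tfrac{49}{60}]$, and $F(p) = \tfrac{3}{2}p - \tfrac{1}{2}$ for $p \ge \tfrac{49}{60}$.

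Next I would take $M_1 = (1, \tfrac{1}{3})$ and $M_2 = (1, \tfrac{49}{60})$ and compute $\rev{M_1} = \tfrac{1}{3}\cdot\tfrac{1}{2} = \tfrac{1}{6}$ and $\rev{M_2} = \tfrac{49}{60}\cdot\tfrac{11}{40} = \tfrac{539}{2400}$. The midpoint is $M_3 = (1, \tfrac{23}{40})$, whose threshold $\tfrac{23}{40}$ lies strictly inside the zero-density gap, so $\Pr[v\ge \tfrac{23}{40}] = 1 - \tfrac{29}{40} = \tfrac{11}{40}$ and $\rev{M_3} = \tfrac{23}{40}\cdot \tfrac{11}{40} = \tfrac{253}{1600}$. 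A direct numerical comparison gives $\tfrac{253}{1600} \approx 0.158 < \tfrac{1}{6} \approx 0.167 \le \min\{\rev{M_1}, \rev{M_2}\}$, negating the defining inequality of quasiconcavity.

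There is no serious technical obstacle here; the only care needed is to choose $M_1$ and $M_2$ so that the midpoint threshold $\tfrac{p_1+p_2}{2}$ falls strictly inside the gap $(\tfrac{29}{60}, \tfrac{49}{60})$. Once that is arranged, the midpoint is forced to pay the lower ``gap-region'' survival probability $\tfrac{11}{40}$ without reaping the price boost that $M_2$ enjoys, and the computation becomes automatic.
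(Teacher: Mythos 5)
Your proposal is correct and follows essentially the same route as the paper: both exhibit one price on each side of the zero-density gap of $f$ and observe that the midpoint price lands inside the gap, where the survival probability drops to $\tfrac{11}{40}$ without the compensating price increase, so the midpoint revenue falls below both endpoints (the paper uses prices $0.36$ and $0.84$; your choices $\tfrac{1}{3}$ and $\tfrac{49}{60}$ work just as well, and your numerical verification checks out).
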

\begin{proof}
We consider the case where $n = 1$ (single item case); $K = 1$ (menu with single options). The value distribution, $f$ is defined in \eqref{dist:quasi-concave}. 

We consider two menus: $M_1$ and $M_2$, where $M_1 = \{(0,0),(1, 0.36)\}$ and $M_2 = \{(0,0),(1, 0.84)\}$. Then, $\rev{M_1} = 0.1656$ and $\rev{M_2} = 0.2016$.

However, if we consider $M_3 = \frac{1}{2} (M_1 + M_2) = \{(0,0), (1, 0.6)\}$, then this provides a revenue of $0.165$, which is strictly smaller than $\rev{M_1}$ and $\rev{M_2}$. More intuitively, Figure~\ref{fig:my_label} shows the revenue for $x \in [0, 1]$ and $p \in [0, 1]$.
\end{proof}

\section{Detailed Proofs of the Mode Connectivity for AMAs}
\hidecontent{Let us consider a scenario in which there are $n$ items and $m$ buyers. To simplify the situation, we will assume that each item has a single supply and that the buyers’ valuation $v = (v_{11}, v_{12}, \cdots, v_{mn})$ is randomly sampled from a distribution $F$ in $[0, 1]^{mn}$. The AMA works with a menu containing $K$ potential allocations. Each allocation ($k$) is represented by $x^{(k)}$, where $x^{(k)}_{ij} \geq 0$ gives the allocation of items to buyers (with the restriction that $\sum_i x^{(k)}_{ij} \leq 1$ due to unit supply). Additionally, the auction applies boosts $\boost_k$ to each allocation. After receiving bids from buyers, the AMA will choose the allocation and the boost that results in the optimal boosted welfare.
\begin{align*}
    k(v) = \arg \max_{k} \sum_i  \sum_j  v_{ij} x_{ij}^{(k)} + \boost^{(k)}.
\end{align*}
Additionally, the price of buyer $i$ is set as follows:
\begin{align*}
    p_i(v) = &\left( \sum_{l \neq i} \sum_j  v_{lj} x_{lj}^{(k(v_{-i}))} + \boost^{(k(v_{-i}))} \right)  -  \left( \sum_{l \neq i} \sum_j  v_{lj} x_{lj}^{(k(v))} + \boost^{(k(v))} \right).
\end{align*}
Here, $x^{(k(v_{-i}))} = \arg \max_{x} \sum_{l \neq i}  \sum_j  v_{lj} x_{lj}^{(k)} + \boost^{(k)}$. When the buyers' utilities are quasilinear: $u_i = \sum_j v_{ij} x_{ij}^{(k(v))} - p_i(v)$, then the AMA is DSIC and IR \footnote{We set $x^{(0)} = 0$ and $\boost^{(0)}$ and these two values are fixed.}. Given the parameter $\{x^{(k)}\}_{k = 1}^K$, $\{\boost^{(k)}\}_{k = 1}^K$,  the revenue of the AMA is 
\begin{align*}
    \texttt{Rev} = \int_v \sum_i p_i(v) \texttt{d} F(v).
\end{align*}

We also assume the  ties are broken in favor of higher total payments,}

In this section, we provide the detailed proofs omitted in Section~\ref{sec:AMA}

\subsection{Interpolating between 0-reducible menus} \label{app:AMA-zero}

In this subsection, we will prove Proposition~\ref{prop:AMA-zero}, that is, we show that two $0$-reducible menus $M_1 = \{(x^{ (1, k)}, \boost^{(1, k)})\}_{k\in\mathcal{K}}$ and $M_2 = \{(x^{(2, k)}, \boost^{ (2, k)})\}_{k\in\mathcal{K}}$ are $0$-mode-connected.

Similar to \emph{RochetNet}, we introduce two intermediate menus $\widehat M_1$ and $\widehat M_2$, and show that every menu in the piecewise linear interpolation form $M_1$ via $\widehat M_1$ and $\widehat M_2$ to $M_2$ yields a revenue of at least $\min \{\rev{M_1}, \rev{M_2}\}$. Using that menu $M_1$ has only $\sqrt{K+1}$ non-redundant options, menu $\widehat M_1$ will be defined by repeating each of the $\sqrt{K+1}$ options $\sqrt{K+1}$ times. Menu $\widehat M_2$ will be derived from $M_2$ similarly.

To make this more formal, let $\mathcal{K}'_1$( and $\mathcal{K}'_2$) denote the set of the indexes of options in $M_1$( and $M_2$) in definition of $\varepsilon$-reducibility, respectively. 
Similar to \emph{RochetNet}, with the help of the Lemma~\ref{lem:bijection}, we can formally define $\widehat M_1$ and $\widehat M_2$ as $\widehat M_1=\{(x^{ (1, \varphi_1(k))}, \boost^{(1, \varphi_1(k))})\}_{k\in\mathcal{K}}$, where $\varphi_1(k)$ is the first component of~$\varphi(k)$; and, similarly, $\widehat M_2$ is derived from $M_2$ by using the second component $\varphi_2(k)$ of $\varphi(k)$ instead of $\varphi_1(k)$.

It remains to show that all menus on the three straight line segments from $M_1$ via $\widehat M_1$ and $\widehat M_2$ to~$M_2$ yield revenue of at least $\min\{\rev{ M_1},\rev{ M_2}\}$.

\begin{proposition}
    Let $M=\lambda  M_1 + (1-\lambda) \widehat M_1$ be a convex combination of the menus $ M_1$ and $\widehat M_1$. Then $\rev{M} \geq \rev{M_1}$. Similarly, every convex combination of the menus $M_2$ and $\widehat M_2$ has revenue at least~$\rev{M_2}$.
\end{proposition}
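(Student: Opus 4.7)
The plan is to prove the claim pointwise: for every valuation profile $v$ in the support of $F$, the total payment collected under $M$ equals that collected under $M_1$; integrating over $F$ then yields $\rev{M}=\rev{M_1}$. Fix such a $v$ and let $k^*$ denote the active option selected by $M_1$ for $v$. By $0$-reducibility of $M_1$ we have $k^*\in \mathcal{K}'_1$, and Lemma~\ref{lem:bijection} guarantees $\varphi_1(k^*)=k^*$. Hence the $k^*$-th option of $\widehat M_1$ coincides with the $k^*$-th option of $M_1$, and so the $k^*$-th option of the convex combination $M$ equals the $k^*$-th option of $M_1$. The same reasoning applies to $v_{-i}$: writing $k^*_{-i}$ for the option active in $M_1$ on the sub-profile $v_{-i}$, $0$-reducibility places $k^*_{-i}\in\mathcal{K}'_1$, so the corresponding option in $M$ again matches that in $M_1$.

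Next I would verify that all three boosted-welfare quantities appearing in \eqref{eq:total-payment} are the same under $M$ as under $M_1$. For any option $k'$, its boosted welfare in $M$ is the convex combination $\lambda\bigl(\sum_i v_i^\top x^{(1,k')}_i+\boost^{(1,k')}\bigr)+(1-\lambda)\bigl(\sum_i v_i^\top x^{(1,\varphi_1(k'))}_i+\boost^{(1,\varphi_1(k'))}\bigr)$ of boosted welfares of two options in $M_1$. Since $k^*$ achieves the maximum boosted welfare in $M_1$, both components are upper bounded by the maximum, and equality is attained at option $k^*$ in $M$ (whose two components are both equal to the $M_1$-maximum). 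The identical argument, applied to $v_{-i}$ with active option $k^*_{-i}$, shows that the boosted welfare of $v_{-i}$ is preserved for every $i$. Therefore the first two terms of \eqref{eq:total-payment} are identical for $M$ and $M_1$.

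What remains is a tie-breaking argument for the third term $\boost^{(k(v))}$; this is the step where the AMA proof diverges most noticeably from the RochetNet analogue and will be the main obstacle. The convex combination may create new ties at the boosted-welfare maximum in $M$. If some option $k''$ attains the maximum boosted welfare in $M$, then both components of the convex combination defining its boosted welfare must equal the $M_1$-maximum, which forces both $k''$ and $\varphi_1(k'')$ to be argmax in $M_1$. The tie-breaking rule in $M_1$ (favoring the smallest $\boost$, by the discussion after \eqref{eq:prices-formula}) then gives $\boost^{(1,k'')}\ge \boost^{(1,k^*)}$ and $\boost^{(1,\varphi_1(k''))}\ge \boost^{(1,k^*)}$, so the boost of $k''$ in $M$ is at least $\boost^{(1,k^*)}$. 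On the other hand, the boost of $k^*$ in $M$ equals $\boost^{(1,k^*)}$ since both components coincide. Applying the same tie-breaking rule in $M$, the active option in $M$ has boost exactly $\boost^{(1,k^*)}$, matching the third term of \eqref{eq:total-payment} for $M_1$.

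Combining the three matching terms yields pointwise equality of the total payment under $M$ and $M_1$, from which $\rev{M}\ge \rev{M_1}$ follows (with equality). The corresponding statement for convex combinations of $M_2$ and $\widehat M_2$ is proved by the symmetric argument using $\varphi_2$ in place of $\varphi_1$.
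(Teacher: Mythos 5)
Your proof is correct and follows essentially the same route as the paper's: a pointwise comparison using that $\varphi_1$ fixes $\mathcal{K}'_1$, that the boosted welfare of each option in $M$ is a convex combination of boosted welfares of two options of $M_1$, and the tie-breaking rule. You are in fact somewhat more explicit than the paper about why the term $-\boost^{(k(v))}$ in \eqref{eq:total-payment} is preserved, obtaining pointwise equality of the total payment rather than only the required inequality.
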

\begin{proof}
    We only prove the first statement because the second one is analogous. We show that for each possible valuation $v\in V^{mn}$ (with $\|v_i\| \leq 1$ for all $i$) of the buyers, the total payment paid to the auctioneer for menu $M$ is at least as high as in menu $M_1$. Suppose for valuation $v\in V^n$ that the auctioneer chooses the $k(v)$-th option in menu $M_1$ in maximizing the  boosted welfare. Note that we may assume  $k(v) \in \mathcal{K}'_1$ due to $0$-reducibility of $M_1$. By construction of~$\varphi$, it follows that $\varphi_1(k(v))=k(v)$. Therefore, the $k(v)$-th option in $M$ exactly equals the $k(v)$-th option in $M_1$. Because ties are broken in favor of larger total payments, it suffices to show that the $k(v)$-th option is the one with the highest  boosted welfare also in $M$. \footnote{Recall that the auctioneer will choose the option $k$ maximize the $\boost^{(k)}$ among all boosted welfare maximizing options given the formula of the total payment \eqref{eq:total-payment}.}
    
    Let $k'\in \mathcal{K}$ be an arbitrary index. The boosted welfare of option $k'$ in $M$ is 
    \begin{align*}
        \sum_{i}   v_{i}^\top (\lambda x_{i}^{(1,k')} &+ (1-\lambda) x_{i}^{(1,\varphi_1(k'))})) + (\boost^{(1, k')} + (1-\lambda) \boost^{(1, \varphi_1(k'))}) \\
        &= \lambda (\sum_{i}   v_{i}^\top x_{i}^{(1, k')} +  \boost^{(1, k')}) + (1-\lambda) (\sum_{i}   v_{i}^\top x_{i}^{(1, \varphi_1(k'))}+ \boost^{(1, \varphi_1(k'))})\\
        &\leq \lambda (\sum_{i}   v_{i}^\top x_{i}^{(1, k(v))} +  \boost^{(1, k(v))}) + (1-\lambda) (\sum_{i}   v_{i}^\top x_{i}^{(1, k(v))}+ \boost^{(1, k(v))})\\
        &= \sum_{i}   v_{i}^\top x^{(1, k(v))}_i +  \boost^{(1, k(v))},
    \end{align*}
    where the inequality follows because the $k(v)$-th option is  boosted welfare maximizing for menu $M_1$. This shows that $k(v)$ is also a boosted welfare maximizer for menu $M$, completing the proof.
\end{proof}

\begin{proposition}
    Let $M=\lambda \widehat M_1 + (1-\lambda) \widehat M_2$ be a convex combination of the menus $\widehat M_1$ and $\widehat M_2$. Then  $\rev{M} \geq \lambda \rev{\widehat M_1} + (1-\lambda) \rev{\widehat M_2}$.
\end{proposition}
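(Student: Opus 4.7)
The plan is to establish the claim pointwise for every valuation profile $v$: show that
\[
\sum_i p_i^M(v) \;\ge\; \lambda \sum_i p_i^{\widehat M_1}(v) + (1-\lambda) \sum_i p_i^{\widehat M_2}(v),
\]
and then integrate against $F$. For any menu $M'$ and profile $u$ (possibly one buyer short), let $\BW{M',u}$ abbreviate the optimal boosted welfare $\max_k \sum_i u_i^\top x_i^{(M',k)} + \boost^{(M',k)}$. By \eqref{eq:total-payment}, the total payment at $v$ splits as $\sum_i \BW{M',v_{-i}} - (m-1)\BW{M',v} - \boost^{(M',k_{M'}(v))}$, so it suffices to analyze the three contributions separately.

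The main step is a pointwise linearization of the boosted welfare: for every $u\in\{v\}\cup\{v_{-i}\}_{i=1}^m$,
\[
\BW{M,u} \;=\; \lambda\,\BW{\widehat M_1,u} + (1-\lambda)\,\BW{\widehat M_2,u}.
\]
The direction $\le$ is immediate since the max of a convex combination of linear functions is bounded above by the convex combination of maxima. For the reverse, one mirrors the argument of \Cref{prop:middle_piece}: by $0$-reducibility the tie-broken active options $k_1^u:=k_{\widehat M_1}(u)$ and $k_2^u:=k_{\widehat M_2}(u)$ lie in $\mathcal{K}_1'$ and $\mathcal{K}_2'$ respectively---this is precisely why the AMA notion of $0$-reducibility must cover both $k(v)$ and each $k(v_{-i})$. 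Setting $k^*(u):=\varphi^{-1}(k_1^u,k_2^u)$ and applying \Cref{lem:bijection}, option $k^*(u)$ in $\widehat M_1$ is a literal copy of $k_1^u$ (same allocation and boost), and similarly a copy of $k_2^u$ in $\widehat M_2$. Its boosted welfare in $M$ therefore equals $\lambda\,\BW{\widehat M_1,u} + (1-\lambda)\,\BW{\widehat M_2,u}$, certifying the equality.

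Armed with this identity, the contributions $\sum_i \BW{\,\cdot\,,v_{-i}}$ and $(m-1)\BW{\,\cdot\,,v}$ in \eqref{eq:total-payment} combine linearly in $\lambda$, so the pointwise comparison reduces to bounding the boost term, i.e., showing $\boost^{(M,k_M(v))} \le \lambda\,\boost^{(\widehat M_1,k_{\widehat M_1}(v))} + (1-\lambda)\,\boost^{(\widehat M_2,k_{\widehat M_2}(v))}$. Take the specific index $k^*(v):=\varphi^{-1}(k_{\widehat M_1}(v), k_{\widehat M_2}(v))$: by the welfare identity $k^*(v)$ is a boosted-welfare maximizer in $M$, and by the $\varphi$-copying construction its boost in $M$ equals exactly the target right-hand side. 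Since ties in $M$ are broken in favor of the smallest boost (equivalently, the largest total payment), the option actually selected by $M$ has boost no larger than that of $k^*(v)$, which is the inequality we need. Plugging back into \eqref{eq:total-payment} completes the pointwise argument.

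The principal obstacle is reconciling the AMA tie-breaking rule with the welfare linearization. In the RochetNet analog (\Cref{prop:middle_piece}), the price of a specific option is a fixed parameter that interpolates linearly under convex combinations of menus, so one simply exhibits the right option and reads off the price. Here the payment depends on \emph{which} boosted-welfare maximizer is chosen, and convex combinations of menus can create new ties; additionally, the extra $-\boost^{(k(v))}$ correction term has no RochetNet counterpart. The $\varphi$-pairing from \Cref{lem:bijection} plays a dual role that resolves this: it produces a maximizer $k^*(u)$ whose welfare in $M$ matches the target convex combination, and simultaneously provides a candidate whose boost attains the target, so that the boost-minimizing tie-break can only further help the revenue.
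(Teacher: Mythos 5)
Your proposal is correct and follows essentially the same route as the paper: both exhibit the paired index $k^*=\varphi^{-1}(k_1,k_2)$ as a boosted-welfare maximizer of $M$ whose allocation and boost interpolate the two menus exactly, and both invoke the payment-maximizing (equivalently boost-minimizing) tie-break to conclude that the option actually selected can only increase the total payment. Your reorganization—proving the welfare linearization as an equality for $v$ and each $v_{-i}$ and then isolating the $-\boost^{(k(v))}$ term as the only place the tie-break matters—is a slightly cleaner packaging of the same argument.
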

\begin{proof}
    The claim is trivial for $\lambda=0$ or $\lambda=1$. Therefore, assume $0<\lambda<1$ for the remainder of the proof.
    For possible valuation $v\in V^n$ such that $\|v_i\| \leq 1$ for all $i$, let $k_1(v)$ and $k_2(v)$ be the boosted welfare maximizing options  in $\widehat M_1$ and $\widehat M_2$, respectively. Note that by the construction of the menus $\widehat M_1$ and $\widehat M_2$, we may assume without loss of generality that $k_1(v)\in \mathcal{K}'_1$ and $k_2(v)\in \mathcal{K}'_2$. Let $k^*(v)\coloneqq \varphi^{-1}(k_1(v), k_2(v))$. 
    
    We show that option $k^*(v)$ is  boosted welfare maximizing in $M=\{(x^{{(k)}},\boost^{{(k)}})\}_{k\in\mathcal{K}}$ with valuation $v$. To this end, we use the notation $\widehat M_1 = \{(\hat{x}^{{(1, k)}}, \hat{\boost}^{(1, k)})\}_{k\in\mathcal{K}}$ and $\widehat M_2 = \{(\hat{x}^{{(2, k)}}, \hat{\boost}^{(2, k)})\}_{k\in\mathcal{K}}$. Let $k'\in \mathcal{K}$ be an arbitrary index. Then, the boosted welfare of option $k'$ can be bounded as follows:
    \begin{align*}
        \sum_{i}   v_{i}^\top x_{i}^{(k')} + \boost^{(k')}&= \lambda (\sum_{i}   v_{i}^\top \hat{x}_{i}^{(1, k')} + \hat{\boost}^{(1, k')}) + (1-\lambda) (\sum_{i}   v_{i}^\top \hat{x}_{i}^{(2, k')} + \hat{\boost}^{(2, k')})\\
        &\leq \lambda (\sum_{i}   v_{i}^\top \hat{x}_{i}^{(1, k_1(v))} + \hat{\boost}^{(1, k_1(v))}) + (1-\lambda) (\sum_{ij} v_{i} \hat{x}_{ij}^{ (2, k_2(v))} + \hat{\boost}^{(2, k_2(v))}\\
        &= \lambda (\sum_{i}   v_{i}^\top \hat{x}_{i}^{(1, k^*(v))} + \hat{\boost}^{(1, k^*(v))}) + (1-\lambda) (\sum_{i}   v_{i}^\top \hat{x}_{i}^{(1, k^*(v))} + \hat{\boost}^{(1, k^*(v))})\\
        &=\sum_{i}   v_{i}^\top x_{i}^{(k^*(v))} + \boost^{(k^*(v))},
    \end{align*}
    where the inequality in the second line follows because $k_1(v)$ and $k_2(v)$ are boosted welfare maximizers for $\widehat M_1$ and~$\widehat M_2$, respectively. The equality in the third line follows because, by construction, in menu $\widehat M_1$ option~$k^*(v)$ is equivalent to option $k_1(v)=\varphi_1(k^*(v))\in \mathcal{K}'_1$. Similarly, in menu $\widehat M_2$ option $k^*(v)$ is equivalent to option $k_2(v)=\varphi_2(k^*(v))\in \mathcal{K}'_2$. This concludes the proof that $k^*(v)$ is a boosted welfare maximizer in $M$. 

    Note that the total payment of $M= \{(x^{{(k)}},\boost^{{(k)}})\}_{k\in\mathcal{K}}$ can be written in the following form:
    \begin{align*}
    \sum_i p_i(v) &= \sum_i    \left( \sum_{l \neq i}    v_{l}^\top x_{l}^{(k(v_{-i}))} + \boost^{(k(v_{-i}))} \right) - \sum_i   \left( \sum_{l \neq i}  v_{l}^\top x_{l}^{(k(v))} + \boost^{(k(v))} \right).
    \end{align*}
    where $k(\cdot)$ is the boosted welfare maximizer used in $M$.
    As  ties are broken in favor of larger total payments, this value decreases by replacing $k(\cdot)$ by $k^*(\cdot)$ \footnote{Note that both $k(\cdot)$ and $k^*(\cdot)$  maximize the boosted welfare.}: 
    \begin{align*}
    &\sum_i    \left( \sum_{l \neq i}   v_{l}^\top x_{l}^{(k(v_{-i}))} + \boost^{(k(v_{-i}))} \right)  - \sum_i   \left( \sum_{l \neq i} v_{l}^\top x_{l}^{(k(v))} + \boost^{(k(v))} \right) \\
    &\geq \sum_i    \left( \sum_{l \neq i} v_{l}^\top x_{l}^{(k^*(v_{-i}))} + \boost^{(k^*(v_{-i}))} \right)  - \sum_i   \left( \sum_{l \neq i} v_{l}^\top x_{l}^{(k^*(v))} + \boost^{(k^*(v))} \right).
    \end{align*}
    Since $k^*(v)$ is fixed for different $\lambda$ and, by linear combination, it holds that ${\hat{x}}^{(k^*(\cdot))} = \lambda {\hat{x}}^{(1, k^*(\cdot))} + (1 - \lambda) {\hat{x}}^{(2, k^*(\cdot))}$ and ${\hat{\boost}}^{(k^*(\cdot))} = \lambda {\hat{\boost}}^{(1, k^*(\cdot))} + (1 - \lambda) {\hat{\boost}}^{(2, k^*(\cdot))}$, 
    \begin{align*}
        &\sum_i    \left( \sum_{l \neq i} v_{l}^\top x_{l}^{(k^*(v_{-i}))} + \boost^{(k^*(v_{-i}))} \right)  - \sum_i   \left( \sum_{l \neq i} v_{l}^\top x_{l}^{(k^*(v))} + \boost^{(k^*(v))} \right) \\
        &~~~~~~ = \lambda \left[ \sum_i    \left( \sum_{l \neq i} v_{l}^\top {\hat{x}}_{l}^{(1, k^*(v_{-i}))} + {\hat{\boost}}^{(1, k^*(v_{-i}))} \right) \right.\\
        &~~~~~~~~~~~~~~~~~~~~~~~~~~~~~~~~~~~~~~~~~~~~~~~~~~~~~~~~~\left.- \sum_i   \left( \sum_{l \neq i} v_{l}^\top {\hat{x}}_{l}^{(1, k^*(v))} + {\hat{\boost}}^{(1, k^*(v))} \right)\right] \\
        &~~~~~~~~~+ (1 - \lambda) \left[\sum_i    \left( \sum_{l \neq i} v_{l}^\top {\hat{x}}_{l}^{(2, k^*(v_{-i}))} + {\hat{\boost}}^{(2, k^*(v_{-i}))} \right) \right.\\
        &~~~~~~~~~~~~~~~~~~~~~~~~~~~~~~~~~~~~~~~~~~~~~~~~~~~~~~~~~\left.- \sum_i   \left( \sum_{l \neq i} v_{l}^\top {\hat{x}}_{l}^{(2, k^*(v))} + {\hat{\boost}}^{(2, k^*(v))} \right)\right] \\
        &~~~~~~= \lambda \rev{ \widehat M_1} + (1 - \lambda)  \rev{ \widehat M_2}.
    \end{align*}This completes the proof.  
\end{proof}

\subsection{Discretizing large menus} \label{sec:AMA-discrete}
This subsection provides a detailed proof of Theorem~\ref{thm:AMA-2}. To do this, we will show that, for an AMA with a large number of options, one can discretize it such that, after discretization, the menu is $0$-reducible. Additionally, during this discretization, the revenue loss will be up to $\varepsilon$.

\begin{lemma}\label{lem:AMA-2-support}
    Consider an  AMA $M_1$ with at least $K + 1 = \yys$ options. There exists an $0$-reducible menu $\widetilde{M}_1$, such that, for any linear combination of $M_1$ and $\widetilde M_1$, $M = \lambda M_1 + (1 - \lambda) \widetilde{M}_1$ for $\lambda \in [0, 1]$, $\rev{ M} \geq \rev{M_1} - \varepsilon$.  
\end{lemma}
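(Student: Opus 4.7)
The plan is to parallel the argument of Lemma~\ref{lem:path_to_discretized} but to also control the extra fragility coming from the stand-alone $-\boost^{(k(v))}$ term in the AMA revenue formula~\eqref{eq:total-payment}. First, I would build an $\tilde\varepsilon$-cover $S$ of allocations of size $\yy$ by rounding each entry of $x\in[0,1]^{m\times n}$ to the grid of spacing $\delta=\varepsilon^2/(16m^3)$; with $\|v_i\|_1\leq 1$ this guarantees $0\leq \sum_i v_i^\top(x_i-\tilde x_i)\leq \tilde\varepsilon$ where $\tilde\varepsilon=m\delta=\varepsilon^2/(16m^2)$, exactly as in the RochetNet case up to the extra factor of $m$ coming from summing the discretization error across buyers.

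Then I define $\widetilde M_1$ by replacing each option $(x^{(k)},\boost^{(k)})$ with $(\tilde x^{(k)},(1-\sqrt{\tilde\varepsilon})\boost^{(k)})$. The proportional discount on the boosts is the key new ingredient: unlike in \emph{RochetNet}, where a price discount is enough to protect the revenue of the winning option, here the boost of the winner is what must be protected from migrating upward. Because the rounded allocations take at most $\yy=\sqrt{K+1}$ distinct values and the tie-breaking rule (favoring higher total payment, i.e.\ smaller $\boost$ for unit weights) picks a single representative from each allocation class, $\widetilde M_1$ is automatically $0$-reducible and can be linked to $\widetilde M_2$ through the three-piece path of \Cref{prop:AMA-zero}.

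The technical crux is a pointwise stability claim: for every $\lambda\in[0,1]$, every profile $w$ (either $v$ itself or some $v_{-i}$), and every option $k'$ that maximizes the boosted welfare of $M=\lambda M_1+(1-\lambda)\widetilde M_1$ at $w$, one has $\boost^{(k')}\leq \boost^{(k)}+\sqrt{\tilde\varepsilon}$, where $k$ is the original boosted-welfare maximizer in $M_1$ for $w$. I would prove this by chaining the two optimality inequalities, observing that the interpolated menu has allocations $\lambda x^{(k)}+(1-\lambda)\tilde x^{(k)}$ and boosts $(1-(1-\lambda)\sqrt{\tilde\varepsilon})\boost^{(k)}$, so that the effective boost discount $(1-\lambda)\sqrt{\tilde\varepsilon}$ creates an additive slack that is just large enough to absorb the rounding error $(1-\lambda)\tilde\varepsilon$ introduced by the allocations. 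Plugging this bound back into \eqref{eq:total-payment}, each of the $m$ terms $W_{-i}(v)$ and the term $(m-1)W(v)$ change by at most $\tilde\varepsilon+\sqrt{\tilde\varepsilon}\lvert\boost\rvert_{\max}$ under the interpolation, while $\boost^{(k(v))}$ changes by at most $\sqrt{\tilde\varepsilon}(1+\lvert\boost\rvert_{\max})$; invoking the harmless WLOG assumption $\lvert\boost\rvert_{\max}=O(1)$ (options with $\boost^{(k)}<-m$ are dominated by the default and may be truncated without effect, and a global shift then normalizes the rest), the aggregate pointwise loss is $O(m\sqrt{\tilde\varepsilon})=O(\varepsilon)$.

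The main obstacle is exactly that stability claim. The $-\boost^{(k(v))}$ contribution to~\eqref{eq:total-payment} has no a priori smoothness in the menu parameters: changing the argmax can change $\boost^{(k(v))}$ by an arbitrary amount, which is why the simple allocation rounding that worked for \emph{RochetNet} in Section~\ref{sec:large-disc} is insufficient here. The proportional discount has to be calibrated so that it both (i) forbids the winner from migrating to an option with significantly larger $\boost$ and (ii) is small enough that its direct effect on the revenue is negligible; maintaining this balance uniformly in $\lambda$ and across all $m+1$ argmax calls in~\eqref{eq:total-payment} is what forces the scaling $\delta=\Theta(\varepsilon^2/m^3)$ and hence the menu-size requirement $K+1\geq \yys$.
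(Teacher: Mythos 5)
Your overall architecture is the same as the paper's: round the allocations to a grid of spacing $\Theta(\varepsilon^2/m^3)$ so that only $\yy=\sqrt{K+1}$ distinct allocations survive (giving $0$-reducibility), apply a \emph{proportional} discount to the boosts so that the argmax cannot migrate to an option with a significantly larger $\boost^{(k)}$, prove that stability claim uniformly in $\lambda$ by chaining the two optimality inequalities (the $(1-\lambda)$ factors on the slack and on the rounding error cancel, exactly as in the paper), and feed the result into \eqref{eq:total-payment}. That is precisely the paper's proof plan, and your identification of the stand-alone $-\boost^{(k(v))}$ term as the obstruction is the right key idea.

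There is, however, a genuine gap in your final accounting. You discount the boosts by the factor $(1-\sqrt{\tilde\varepsilon})$ and then bound the change of each of the $m$ terms $W_{-i}(v)$ and of $(m-1)W(v)$ separately by $\tilde\varepsilon+\sqrt{\tilde\varepsilon}\,|\boost|_{\max}$, which forces you to assume $|\boost|_{\max}=O(1)$. That normalization is not available: options with $\boost^{(k)}<-m$ can indeed never win, but selected options can have arbitrarily large \emph{positive} boosts, and a ``global shift'' is not harmless here --- it changes the menu (including the mandatory default option $(\mathbf{0},0)$) and after truncation below at $-m$ no shift maps $[-m,\infty)$ into a bounded range. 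Even granting boosts of magnitude up to $m$, your term-by-term bound gives a loss of order $m\sqrt{\tilde\varepsilon}\cdot m = \Theta(m\varepsilon)$, not $\Theta(\varepsilon)$. The paper avoids any absolute bound on the boosts by exploiting cancellation: the discount contributions enter \eqref{eq:total-payment} as $\sum_i \delta\,\boost^{(k(v_{-i}))}-m\delta\,\boost^{(\tilde k(v))}$, i.e.\ only \emph{differences} of selected boosts appear, and these are bounded by roughly $m$ via the boosted-welfare comparison $\tilde\boost^{(k(v_{-i}))}\le\tilde\boost^{(\tilde k(v))}+m$. With the paper's smaller discount $\delta=\sqrt{\tilde\varepsilon}/m$ this yields a loss of $m\tilde\varepsilon+\frac{m^2\delta}{1-\delta}+\frac{\tilde\varepsilon}{\delta}\le\varepsilon$; with your larger discount $\sqrt{\tilde\varepsilon}$ even the cancellation argument leaves a residue of order $m^2\sqrt{\tilde\varepsilon}=\Theta(m\varepsilon)$. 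So you need both to shrink the discount by a factor of $m$ (accepting the correspondingly weaker stability bound $\boost^{(\tilde k(v))}\le\boost^{(k(v))}+\tilde\varepsilon/\delta=\varepsilon/4$, which is still affordable) and to replace the term-by-term estimate with the cancellation argument.
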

Theorem~\ref{thm:AMA-2} simply follows by combining Lemma~\ref{lem:AMA-2-support} and Proposition~\ref{prop:AMA-zero}.

Note that the payments and allocations only depend on those boosted welfare maximizing options. Therefore, to show that $\widetilde M_1$ is $0$-reducible, it suffices to show $\widetilde M_1$ has at most $\sqrt{K+1}$ different allocations.

We now formally define $\widetilde{M}_1$. We introduce parameters $\tilde \varepsilon$ and $\delta$, which will be specified later.

\paragraph{Construction of $\widetilde{M}_1$}  For $x^{(k)}$, we round it to $\tilde{x}^{(k)}$ in which $\tilde{x}^{(k)}_{ij} = \frac{\tilde \varepsilon}{m} \left\lfloor \frac{m x_{ij}^{(k)}}{\tilde \varepsilon} \right \rfloor$. With this rounding, for any $v \in [0, 1]^{nm}$ such that $\|v_i\| \leq 1$ for $i$, \footnote{The second inequality holds as $v_i^\top( x_i^{(k)} - \tilde{x}_i^{(k)}) \leq \|v_i\|_1 \|x_i^{(k)} - \tilde{x}_i^{(k)}\|_{\infty} \leq \frac{\tilde \varepsilon}{m}$ for any $i$.}
\begin{align*}
    \sum_i v^\top_i x_i^{(k)} \geq \sum_i v^\top_i \tilde{x}_i^{(k)} \geq \sum_i v^\top_i x_i^{(k)} - \tilde \varepsilon. \numberthis \label{AMA:eps-cover} 
\end{align*}  For $\boost^{(k)}$, we let $\tilde{\boost}^{(k)} = (1 - \delta) \boost^{(k)}$. 

\begin{lemma}
    For any given $0 < \varepsilon \leq \frac{1}{4}$, let $\delta = \frac{\sqrt{\tilde \varepsilon}}{m}$ and $\tilde \varepsilon = \frac{\varepsilon^2}{16m^2}$. Then, $$\rev{\widetilde M_1} \geq \rev{M_1} - \varepsilon.$$ The number of different allocations in $\widetilde M_1$ is at most  $\lceil \frac{16m^3}{\varepsilon^2} \rceil^{nm}$.  Additionally, for any linear combination of $M_1$ and $\widetilde M_1$, $M = \lambda M_1 + (1 - \lambda) \widetilde{M}_1$, $\rev{ M} \geq \rev{M_1} - \varepsilon$.  
\end{lemma}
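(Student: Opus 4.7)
I will prove the three claims separately: (1) the counting of distinct discretized allocations, (2) the pointwise payment bound for any convex combination $M = \lambda M_1 + (1-\lambda)\widetilde M_1$ (which at $\lambda = 0$ specializes to the revenue bound for $\widetilde M_1$), and (3) the revenue bound itself by integration over $v \sim F$.

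The counting is immediate: each coordinate $\tilde x^{(k)}_{ij} = (\tilde\varepsilon/m)\lfloor m x^{(k)}_{ij}/\tilde\varepsilon\rfloor$ takes at most $\lceil m/\tilde\varepsilon\rceil = \lceil 16m^3/\varepsilon^2\rceil$ distinct values in $[0,1]$, so there are at most $\lceil 16m^3/\varepsilon^2\rceil^{nm}$ possible discretized allocations.

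For the pointwise payment bound, parameterize $M$ by $x^{(M,k)} = \lambda x^{(k)} + (1-\lambda)\tilde x^{(k)}$ and $\boost^{(M,k)} = (1-\mu)\boost^{(k)}$ with $\mu := (1-\lambda)\delta \in [0, \delta]$, and let $k(v)$ and $k^M(v)$ denote the respective active options. I will establish two key ingredients. The first is a \emph{welfare sandwich}: summing \eqref{AMA:eps-cover} over buyers yields $w_v^M(k) := \sum_i v_i^\top x_i^{(M,k)} \in [w_v(k) - (1-\lambda)\tilde\varepsilon, w_v(k)]$ for every option $k$. The second is a \emph{boost-ordering lemma}: adding the two optimality inequalities satisfied by $k^M(v)$ in $M$ and by $k(v)$ in $M_1$, and cancelling the welfare terms via the sandwich, yields $\mu\bigl[\boost^{(k^M(v))} - \boost^{(k(v))}\bigr] \le (1-\lambda)\tilde\varepsilon$; since $\mu = (1-\lambda)\delta$, this simplifies to $\boost^{(k^M(v))} - \boost^{(k(v))} \le \tilde\varepsilon/\delta = \varepsilon/4$ whenever $\mu > 0$ (the case $\mu = 0$ is trivial, since then $M = M_1$). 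The analogous bound also holds for every leave-one-out profile $v_{-i}$.

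The main obstacle is to combine these into a pointwise bound on $\sum_i p_i(v)$ using formula \eqref{eq:total-payment}. I will plug the $M_1$-winners $k(v_{-i})$ into the $M$-welfare to obtain $\sum_i \text{BW}_M^*(v_{-i}) \ge \sum_i \text{BW}_{M_1}^*(v_{-i}) - m(1-\lambda)\tilde\varepsilon - \mu \sum_i \boost^{(k(v_{-i}))}$, use the $M$-winner to upper-bound $\text{BW}_M^*(v) \le \text{BW}_{M_1}^*(v) - \mu \boost^{(k^M(v))}$, and control the final-boost difference via the boost-ordering lemma as $\boost^{(k(v))} - (1-\mu)\boost^{(k^M(v))} \ge -\tilde\varepsilon/\delta + \mu\boost^{(k^M(v))}$. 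A key auxiliary fact is $\boost^{(k)} \ge -m$ for every active option $k$, following from the default option $(\mathbf{0},0)$ having boosted welfare $0$ together with $w_v(k) \le m$. Crucially, the boost of any option appears with coefficients $1, -(m-1), -1$ across the three pieces of \eqref{eq:total-payment}, so in the scenario of a dominant large boost these contributions cancel exactly; what remains are the controlled $O(\mu m^2)$ discretization residues plus the $\varepsilon/4$ slack from the boost ordering. Plugging in the calibrated identities $\delta m^2 = m\sqrt{\tilde\varepsilon} = \varepsilon/4$, $\tilde\varepsilon/\delta = \varepsilon/4$, and $m\tilde\varepsilon \le \varepsilon/16$ (valid for $\varepsilon \le 1/4$), the contributions sum to a total loss of at most $\varepsilon$. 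Taking expectation over $v$ then gives $\rev{M} \ge \rev{M_1} - \varepsilon$ for every $\lambda \in [0,1]$. The hardest step is the boost-ordering lemma: the \emph{proportional} discount $(1-\delta)\boost^{(k)}$, as opposed to an additive shift, is precisely what makes the two optimality conditions cancel cleanly when summed.
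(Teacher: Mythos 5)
Your proposal follows essentially the same route as the paper: the same coordinatewise rounding count, the same decomposition of \eqref{eq:total-payment} into the leave-one-out welfare sum, the $-(m-1)$-weighted welfare, and the final boost term, the same three estimates (welfare sandwich, the bound $\boost^{(k^M(v))}\le \boost^{(k(v))}+\tilde\varepsilon/\delta$, and an $O(\mu m^2)$ residue), and the same final arithmetic; treating $\lambda\in[0,1]$ uniformly via $\mu=(1-\lambda)\delta$ rather than doing $\widetilde M_1$ first and the convex combination second is a mild streamlining. One genuine improvement: you obtain the boost-ordering inequality by adding the two optimality conditions and cancelling welfare terms through the sandwich, which is cleaner than the paper's proof by contradiction and makes transparent why the \emph{proportional} discount is the right choice. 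One imprecision worth fixing: the auxiliary fact you cite, $\boost^{(k)}\ge -m$ for active options, does not by itself control the residue $\mu\sum_i\bigl[\boost^{(k(v_{-i}))}-\boost^{(k^M(v))}\bigr]$, since it gives no upper bound on $\boost^{(k(v_{-i}))}$; the ingredient actually needed (and the one the paper uses) is the relative bound $\boost^{(k(v_{-i}))}\le \boost^{(k^M(v))}+m/(1-\mu)$, obtained from the optimality of $k^M(v)$ in $M$ together with $0\le\sum_i v_i^\top x_i^{(M,k)}\le m$. With that substitution your $O(\mu m^2)$ claim is justified and the rest of the argument goes through as written.
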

    \begin{proof} We first demonstrate $\rev{\widetilde M_1} \geq \rev{M_1} - m \tilde{\varepsilon} - \frac{m^2 \delta}{1 - \delta} - \frac{\tilde{\varepsilon}}{\delta}$. The result follows by picking $\delta = \frac{\sqrt{\tilde \varepsilon}}{m}$ and $\tilde \varepsilon = \frac{\varepsilon^2}{16m^2}$. The proof of the bound on the linear combination of  $M_1$ and $\widetilde{M_1}$ is analogous. We use the notation $M_1 = \{ x^{(k)}, \boost^{(k)}\}_{k\in\mathcal{K}}$ and  $\widetilde M_1 = \{ \tilde{x}^{(k)}, \tilde{\boost}^{(k)}\}_{k\in\mathcal{K}}$.
    
    We fix the valuation $v$. Let $k(v) = \arg \max_{k} \sum_i  v_i^{\top} x_{i}^{(k)} + \boost^{(k)}$ and satisfy the tie-breaking rule.  The total payment using $M_1$ can be expressed as follows:
\begin{align*}
    &\sum_i \left(\sum_{l \neq i} v_l^{\top} x_{l}^{(k(v_{-i}))} + \boost^{(k(v_{-i}))}\right) -  \sum_i \left(\sum_{l \neq i} v_l^{\top} x_{l}^{(k(v))} - \boost^{(k(v))}\right)  \\
    &~~~ = \underbrace{\sum_i \left(\sum_{l \neq i} v_l^{\top} x_{l}^{(k(v_{-i}))} + \boost^{(k(v_{-i}))}\right)}_{A_1} \underbrace{- (m - 1) \left(\sum_i v_i^{\top} x_{i}^{(k(v))} + \boost^{(k(v))} \right)}_{B_1} - \boost^{(k(v))}.\numberthis \label{ineq:AMA-disc-0}
\end{align*}
Similarly, let $\tilde{k}(v)  = \arg \max_{k} \sum_i  v_i^{\top} \tilde{x}_{i}^{(k)} + \tilde{\boost}^{(k)}$, then,  the total payment with $\widetilde M_1$ is 
\begin{align*}
    \underbrace{\sum_i \left(\sum_{l \neq i} v_l^{\top} \tilde{x}_{l}^{(\tilde{k}(v_{-i}))} + \tilde{\boost}^{(\tilde{k}(v_{-i}))}\right)}_{A_2} \underbrace{- (m - 1) \left(\sum_i v_i^{\top} \tilde{x}_{i}^{(\tilde{k}(v))} + \tilde{\boost}^{(\tilde{k}(v))} \right)}_{B_2} - \tilde{\boost}^{(\tilde{k}(v))}. \numberthis \label{ineq:AMA-disc-1}
\end{align*}
We bound the differences between $A_1$ and $A_2$, $B_1$ and $B_2$, and $\boost^{(k(v))}$ and $\tilde{\boost}^{(\tilde{k}(v))}$ separately. 

First, for the difference between $A_1$ and  $A_2$, we can use the following inequalities: for any possible $v$ such that $\|v_i\| \leq 1$ for all $i$,
    \begin{align*}
          \sum_i v_i^{\top} x_{i}^{(k(v))} + \boost^{(k(v))}  &\leq \sum_i v_i^{\top} \tilde{x}_{i}^{(k(v))} + \tilde{\boost}^{(k(v))} + \tilde{\varepsilon} + \delta \boost^{(k(v))} \\
          &\leq \sum_i v_i^{\top} \tilde{x}_{i}^{(\tilde{k}(v))} + \tilde{\boost}^{(\tilde{k}(v))} + \tilde{\varepsilon} + \delta \boost^{(k(v))}.
    \end{align*}
    This implies 
\begin{align*}
    &A_1 \leq A_2 + m \tilde{\varepsilon} +  \sum_i \delta \boost^{(k(v_{-i}))}. \numberthis \label{ineq:AMA-disc-2}
\end{align*}

Second, for the difference between $B_1$ and $B_2$, we can use the following inequalities: for any possible $v$ such that $\|v_i\| \leq 1$ for all $i$,
    \begin{align*}
    \sum_i v_i^{\top} x_{i}^{(k(v))} + \boost^{(k(v))}  &\geq \sum_i v_i^{\top} x_{i}^{(\tilde{k}(v))} + \boost^{(\tilde{k}(v))} \\
    &\geq  \sum_i v_i^{\top} \tilde{x}_{i}^{(\tilde{k}(v))} + \tilde{\boost}^{(\tilde{k}(v))} + \delta \boost^{(\tilde{k}(v))}.
    \end{align*}
    This implies 
    \begin{align*}
        &B_1\leq B_2 - (m-1) \delta \boost^{(\tilde{k}(v))}. \numberthis \label{ineq:AMA-disc-3}
    \end{align*}

Finally,  we want to claim
\begin{align*}
    \boost^{(\tilde{k}(v))} \leq \boost^{(k(v))} + \frac{\tilde{\varepsilon}}{\delta}, \text{ which implies } {\tilde{\boost}}^{(\tilde{k}(v))} \leq \boost^{(k(v))} + \frac{\tilde{\varepsilon}}{\delta}  - \delta \boost^{(\tilde{k}(v))};\numberthis \label{ineq:AMA-disc-4}
\end{align*} as otherwise if $\boost^{(\tilde{k}(v))} >  \boost^{(k(v))} +  \frac{\tilde{\varepsilon}}{\delta}$, then this implies
\begin{align*}
    \sum_i v_i^{\top} \tilde{x}^{(k(v))}_{i} + \tilde{\boost}^{(k(v))} &\geq  \sum_i  v_i^{\top} x^{(k(v))}_{i} + \boost^{(k(v))} - \tilde{\varepsilon} - \delta \boost^{(k(v))}\\
    &\geq \sum_i  v_i^{\top} x^{(\tilde{k}(v))}_{i} + \boost^{(\tilde{k}(v))} - \tilde{\varepsilon} - \delta \boost^{(k(v))}\\
    &\geq  \sum_i  v_i^{\top} \tilde{x}^{(\tilde{k}(v))}_{i} + \tilde{\boost}^{(\tilde{k}(v))} - \tilde{\varepsilon} + \delta (\boost^{(\tilde{k}(v))} - \boost^{(k(v))})\\
    &> \sum_i  v_i^{\top} \tilde{x}^{(\tilde{k}(v))}_{i} + \tilde{\boost}^{(\tilde{k}(v))}, \numberthis \label{ineq:AMA-discrete-3}
\end{align*}
which contradicts the fact that $\tilde{k}(v)  = \arg \max_{k} \sum_i v_i^{\top} \tilde{x}_{i}^{(k)} + \tilde{\boost}^{(k)}$.

By combining the formula of total payment with $M_1$, \eqref{ineq:AMA-disc-0}, the formula of total payment with $\widetilde M_1$, \eqref{ineq:AMA-disc-1}, and inequalities \eqref{ineq:AMA-disc-2}, \eqref{ineq:AMA-disc-3}, \eqref{ineq:AMA-disc-4}; the loss on the total payment is at most $m \tilde{\varepsilon} +  \sum_i \delta \boost^{(k(v_{-i}))} - m  \delta \boost^{(\tilde{k}(v))} + \frac{\tilde{\varepsilon}}{\delta} $. Note that $\tilde{\boost}^{(k(v_{-i}))} \leq \tilde{\boost}^{(\tilde{k}(v))}  + m$\footnote{This is true becasue $v^\top \tilde{x}^{(k(v_{-i}))} +\tilde{\boost}^{(k(v_{-i}))} \leq  v^\top \tilde{x}^{(\tilde{k}(v))} +\tilde{\boost}^{(\tilde{k}(v))} $ and $v^\top \tilde{x}^{(k(v))} \leq m$.}. Therefore, the total loss on the payment is at most $m \tilde{\varepsilon} + \frac{m^2 \delta}{1 - \delta} + \frac{\tilde{\varepsilon}}{\delta} $ which is $\rev{\widetilde M_1} \geq \rev{M_1} - m \tilde{\varepsilon} - \frac{m^2 \delta}{1 - \delta} - \frac{\tilde{\varepsilon}}{\delta}$.

Now, we prove a similar result for $M$, which is a linear combination of $M_1$ and $\widetilde M_1$: $M = \lambda M_1 + (1 - \lambda) \widetilde M_1$.  Let $M = ({x'}^{(k)}, {\boost'}^{(k)})_{k\in\mathcal{K}}$ and   $k'(v)  = \arg \max_{k} \sum_i  v_i^{\top} {x'}_{i}^{(k)} + {\boost'}^{(k)}$. The total payment of $M$ is 
\begin{align*}
    \underbrace{\sum_i \left(\sum_{l \neq i} v_l^{\top} {x'}_{l}^{(k(v_{-i}))} + {\boost'}^{(k(v_{-i}))}\right)}_{A_3} \underbrace{- (m - 1) \left(\sum_i v_i^{\top} {x'}_{i}^{(k(v))} + {\boost'}^{(k(v))} \right)}_{B_3} - {\boost'}^{(k(v))}. \numberthis \label{eq:total-discrete-M}
\end{align*}Note that,  by linear combination, for any $k$, $\sum_i v_i^\top x_i^{(k)} \geq \sum_i v_i^\top {x'}_i^{(k)} \geq \sum_i v_i^\top x_i^{(k)} - \tilde \varepsilon$ and ${\boost'}^{(k)} = (1 - \delta') \boost^{(k)}$ such that $\delta' = (1 - \lambda) \delta$.  With similar proofs as 
 \eqref{ineq:AMA-disc-2}  and \eqref{ineq:AMA-disc-3}, the following 
 two inequalities hold, 
\begin{align*}
    &A_1\leq A_3 + m \tilde{\varepsilon} +  \sum_i \delta' \boost^{(k(v_{-i}))}; \numberthis \label{ineq:AMA-disc-2-cb}\\
        &B_1 \leq B_3 - (m-1) \delta' \boost^{({k'}(v))}.  \numberthis \label{ineq:AMA-disc-3-cb}
\end{align*}
And, similarly, \begin{align*}
    \boost^{({k'}(v))} \leq \boost^{(k(v))} + \frac{\tilde{\varepsilon}}{\delta},  \numberthis \label{ineq:AMA-disc-4-cb}
\end{align*} as otherwise
 $\boost^{({k'}(v))} >  \boost^{(k(v))} +  \frac{\tilde{\varepsilon}}{\delta}$ implies
\begin{align*}
    &\sum_i v_i^{\top} {x'}^{(k(v))}_{i} + {\boost'}^{(k(v))}  \\
    &~~~~~~~~~= \lambda (\sum_i v_i^{\top} x^{(k(v))}_{i} + \boost^{(k(v))}) + (1 - \lambda) (\sum_i v_i^{\top} \tilde{x}^{(k(v))}_{i} + \tilde{\boost}^{(k(v))}) \\
    &~~~~~~~~~> \lambda (\sum_i v_i^{\top} x^{(\hat{k}(v))}_{i} + \boost^{(\hat{k}(v))}) + (1 - \lambda) (\sum_i v_i^{\top} \hat{x}^{(k(v))}_{i} + \hat{\boost}^{(k(v))})\\
    &~~~~~~~~~ = \sum_i v_i^{\top} {x'}^{({k'}(v))}_{i} + {\boost'}^{({k'}(v))}.
\end{align*}
The strict inequality follows from \eqref{ineq:AMA-discrete-3} and $\lambda < 1$.

Therefore, combining inequalities \eqref{ineq:AMA-disc-2-cb}, \eqref{ineq:AMA-disc-3-cb}, \eqref{ineq:AMA-disc-4-cb}, the total payment  with $M_1$, \eqref{ineq:AMA-disc-0}, and the total payment with $M$, \eqref{eq:total-discrete-M};
the total loss for $M$ is at most $m \tilde{\varepsilon} + \frac{m^2 \delta'}{1 - \delta'} + \frac{\tilde{\varepsilon}}{\delta}  \leq m \tilde{\varepsilon} + \frac{m^2 \delta}{1 - \delta} + \frac{\tilde{\varepsilon}}{\delta} $.

The result follows by picking $\delta = \frac{\sqrt{\tilde \varepsilon}}{m}$ and $\tilde \varepsilon = \frac{\varepsilon^2}{16m^2}$.
\end{proof}
\hidecontent{\subsection{$\varepsilon$-reducible}
Similar to \emph{RochetNet}, we define $\varepsilon$-reduciblility as follows:
\begin{definition}
    An AMA with menu $M = \{x^{(k)}, \boost^{(k)} \}_{k = 1}^K$ is $\varepsilon$-reducible if and only if there exists a set of options  $I$ with size $|I| \leq \sqrt{K}$ such that, with probability $1 - \frac{\varepsilon}{nm}$, (i) for any $i$, $k(v_{-i}) \in I$; and (ii) $k(v) \in I$.
\end{definition}
Recall that $k(v) = \arg \max_k \sum_{ij} v_{ij} x^{(k)}_{ij} + \boost^{(k)}$. Given this definition, we have the following lemma.
\begin{lemma}
    Consider an  AMA with menu $M_1 = \{x^{(k)}, \boost^{(k)} \}_{k = 1}^K$ which is $\varepsilon$-reducible. There exists an AMA with menu $\widetilde M_1$ of at most $\sqrt{K}$ active options, such that, for any menu $M$ on the linear path from $M_1$ to $\widetilde M_1$, $\rev{M} \geq \rev{M_1} - \varepsilon$.
\end{lemma}
\begin{proof}
    We construct $\widetilde M_1$ as follows, we keep those options in $I$ unchanged, and for those options which are not in $I$, $k\notin I$, we decrease those $\boost^{(k)}$ to $- n m - 1$. Since  $x^{(0)} = 0$ and $\boost^{(0)} = 0$,  this directly implies that the $k$-th option will not be used in $\widetilde M_1$ if $k \notin I$. As $|I| \leq \sqrt{K}$, there are at most $\sqrt{K}$ active options in $\widetilde M_1$.

    Next, we will prove, for any menu $M = \{ \hat{x}^{(k)}, \hat{\boost}^{(k)}\}_{k = 1}^K$ on the linear path from $M_1$ to $\widetilde M_1$: $M = \lambda M_1 + (1 - \lambda) \widetilde M_1$ for $0 \leq \lambda < 1$, $\rev{M} \geq \rev{M_1} - \varepsilon$. Let $k'(v) = \arg \max_k \sum_{ij} v_{ij} \hat{x}^{(k)}_{ij} + \hat{\boost}^{(k)}$. One observation is that $k(v) \in I$ implies $k'(v) \in I$, as we decrease $\boost^{(k)}$ for $k \notin I$. Therefore, by $\varepsilon$-reduciblility, with probability $1 - \frac{\varepsilon}{nm}$ on $v$,  $k(v_{-i}) \in I$ for any $i$ and  $k(v) \in I$, the total payment of $M$
    \begin{align*}
    &\sum_i \left(\sum_{l \neq i} \sum_j  v_{lj} x_{lj}^{(k'(v_{-i}))} + \boost^{(k'(v_{-i}))}\right) - (m - 1) \left(\sum_i \sum_{j}v_{ij} x_{ij}^{(k'(v))} + \boost^{(k'(v))} \right) - \boost^{(k'(v))} \\
    & = \sum_i \left(\sum_{l \neq i} \sum_j  v_{lj} x_{lj}^{(k(v_{-i}))} + \boost^{(k(v_{-i}))}\right) - (m - 1) \left(\sum_i \sum_{j}v_{ij} x_{ij}^{(k(v))} + \boost^{(k(v))} \right) - \boost^{(k(v))}
\end{align*}
equals the total payment of $M_1$. Therefore, the total loss on the revenue is $nm\frac{\varepsilon}{nm}$ as the total payment is at most $nm$ and this happens at most with probability $\frac{\varepsilon}{nm}$.
\end{proof}}
\subsection{Difference between softmax and argmax in revenue}\label{sec:softmax-ama}
Similar to the case of the \emph{RochetNet}, in the training process, softmax operation is used instead of argmax. Recall that the revenue of AMA is 
\begin{align*}
    \texttt{Rev} = \int_v \sum_i p_i(v) \texttt{d} F(v),
\end{align*}
where
\begin{align*}
    p_i(v) = & \left( \sum_{l \neq i}  v^\top_l x_{l}^{(k(v_{-i}))} + \boost^{(k(v_{-i}))} \right)  -  \left( \sum_{l \neq i}  v^\top_l x_{l}^{(k(v))} + \boost^{(k(v))} \right),
\end{align*}
and 
\begin{align*}
    k(v) = \arg \max_{k} \sum_i  v^\top_i x_{i}^{(k)} + \boost^{(k)}.
\end{align*}
For the softmax version, instead of using $k(v)$ which exactly maximizes the boosted social welfare, now $k^{\texttt{softmax}}(v)$ is a random variable:
\begin{align*}
    k^{\texttt{softmax}}(v) = k \text{ with probability } \frac{e^{Y  (v^\top_i x_{i}^{(k)} + \boost^{(k)})}}{ \sum_{k'} e^{Y  (v^\top_i x_{i}^{(k')} + \boost^{(k')})}};
\end{align*}
 and the price is the expectation on $k(v)$
\begin{align*}
    p^{\texttt{softmax}}_i(v) = &\mathbb{E}\left[\left( \sum_{l \neq i} v^\top_l x_{l}^{(k^{\texttt{softmax}}(v_{-i}))} + \boost^{(k^{\texttt{softmax}}(v_{-i}))} \right)  \right. \\
    &~~~~~~~~~~~~~~~~~~~~~\left.-  \left( \sum_{l \neq i} v^\top_l x_{l}^{(k^{\texttt{softmax}}(v))} + \boost^{(k^{\texttt{softmax}}(v))} \right)\right];
\end{align*}
and the revenue is
\begin{align*}
    \revsoftmax{M} = \int_v \sum_i p^{\texttt{softmax}}_i(v) \texttt{d} F(v).
\end{align*}
We show the following result. Note that we also assume the maximal density of a valuation type is $\mathcal{X}$.
\begin{theorem} \label{thm:softmax-AMA}
\begin{align*}
    |\revsoftmax{M} - \rev{M}| \leq \frac{
 m(K + 1)}{eY} + \frac{n  m \mathcal{X}(K + 1)}{Y} \left(1 + \log \frac{mY}{ m\mathcal{X} } \right).
    \end{align*}
\end{theorem}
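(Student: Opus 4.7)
The plan is to mimic the proof of Lemma~\ref{lem:softmax}, but we first need a clean pointwise decomposition of the discrepancy $\sum_i [p_i(v) - p_i^{\texttt{softmax}}(v)]$ that isolates two qualitatively different sources of error: a ``same-function'' argmax-vs-softmax gap (where softmax concentrates on the argmax of precisely the function we are averaging), and a ``mismatched'' gap (where softmax is taken over the joint boosted welfare $W(\cdot,v) = \sum_l v_l^\top x_l^{(\cdot)} + \boost^{(\cdot)}$ but the quantity being averaged is $W_{-i}(\cdot, v_{-i})$). Concretely, for each buyer $i$ I would write $p_i(v) - p_i^{\texttt{softmax}}(v) = A_i(v) - B_i(v)$, with
\[
A_i(v) = W_{-i}(k(v_{-i}), v_{-i}) - \mathbb{E}[W_{-i}(k^{\texttt{softmax}}(v_{-i}), v_{-i})], \quad B_i(v) = W_{-i}(k(v), v_{-i}) - \mathbb{E}[W_{-i}(k^{\texttt{softmax}}(v), v_{-i})],
\]
and then further decompose $B_i(v)$ via $W_{-i}(k,v_{-i}) = W(k,v) - v_i^\top x_i^{(k)}$ into a softmax-concentration piece on $W$ and a residual $v_i^\top x_i^{(k(v))} - \mathbb{E}[v_i^\top x_i^{(k^{\texttt{softmax}}(v))}]$.

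For the ``easy'' terms, namely each $A_i(v)$ and each of the $B_i(v)$-concentration pieces on $W$, I would use the same trick as in Lemma~\ref{lem:softmax}: if $\Delta_k$ denotes the relevant boosted-welfare gap to the maximum, the softmax weight for option $k$ is bounded by $e^{-Y\Delta_k}$, and thus the gap is bounded by $\sum_k \Delta_k e^{-Y \Delta_k} \le (K+1) \max_{x \geq 0} x e^{-Yx} = (K+1)/(eY)$. Summed over the $m$ buyers and added up across the two concentration terms per buyer, this is where the $m(K+1)/(eY)$ leading term of the theorem comes from (after careful cancellation of the duplicated $C(v) = W(k(v),v) - \mathbb{E}[W(k^{\texttt{softmax}}(v),v)]$ pieces across the $m$ buyers using the same total-payment identity~\eqref{eq:total-payment} used in the AMA arguments above).

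The hard work is in the residual term $\sum_i \int |v_i^\top x_i^{(k(v))} - \mathbb{E}[v_i^\top x_i^{(k^{\texttt{softmax}}(v))}]|\, \mathrm{d} F(v)$. Here I would adapt the density-based argument of Lemma~\ref{lem:softmax}: introduce a cutoff $\sigma>0$, and for each pair of options $(k,k')$ split the integral into the region where the $W$-gap $W(k(v),v)-W(k',v)$ exceeds $\sigma$ (on which the softmax weight is at most $e^{-Y\sigma}$ and the integrand, bounded by $m$, contributes at most $m e^{-Y\sigma}$ per option) and the complementary slab where the $W$-gap lies in $[0,\sigma]$. For the slab, view $W(k(v),v)-W(k',v)$ as a convex piecewise-linear function of each coordinate $v_{ij}$ with gradient $x_{ij}^{(k')}-x_{ij}^{(k(v))}$; integrating one coordinate at a time as in the proof of Lemma~\ref{lem:softmax} and invoking the density bound $\mathcal{X}$ gives a slab-volume estimate of order $nm\mathcal{X}\sigma$. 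Summing over all pairs of options and optimizing $\sigma \approx \frac{1}{Y}\log\frac{Y}{\mathcal{X}}$ balances the two terms and produces the claimed $\frac{nm\mathcal{X}(K+1)}{Y}\bigl(1 + \log\frac{mY}{m\mathcal{X}}\bigr)$ contribution.

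The main obstacle is the mismatch inherent in $B_i(v)$: the softmax distribution is governed by the joint function $W$, but the residual we must bound is a single-buyer projection $v_i^\top x_i^{(k)}$, which is not monotonically related to the exponent of the softmax. The Lemma~\ref{lem:softmax} single-buyer argument uses exactly that the quantity being measured is itself a function of $v$ via the same utility, whereas here one must carry through the coordinate-wise slab argument while the relevant direction $x_i^{(k')}-x_i^{(k(v))}$ differs from the gradient direction of the softmax exponent in the remaining buyers' coordinates. The ranges also scale differently: the boosted welfare can be as large as $m$ (rather than $1$ as in the single-buyer case), which is precisely why the logarithmic factor inherits $m$ inside.
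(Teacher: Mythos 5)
Your proposal is correct and follows essentially the same route as the paper: the same decomposition of $\sum_i[p_i(v)-p_i^{\texttt{softmax}}(v)]$ into $m$ ``same-function'' softmax-concentration gaps (each bounded by $(K+1)/(eY)$ via the elementary $\sum_k \Delta_k e^{-Y\Delta_k}$ estimate, with the duplicated full-welfare piece having the right sign to be discarded in each direction of the inequality), plus the residual $\sum_{i,j}\bigl(\mathbb{E}[v_{ij}x_{ij}^{(k(v))}]-\mathbb{E}[v_{ij}x_{ij}^{(k^{\texttt{softmax}}(v))}]\bigr)$ handled coordinate-wise by the cutoff-$\sigma$ slab argument with density bound $\mathcal{X}$ and the choice $\sigma=\frac{1}{Y}\log\frac{Y}{\mathcal{X}}$, exactly as in the paper's adaptation of Lemma~\ref{lem:softmax}. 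The only cosmetic slip is the phrase ``for each pair of options,'' whereas the comparison is between each option and the single argmax option, giving the factor $K+1$ rather than $(K+1)^2$.
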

To prove this theorem, we need the following lemma which provides one of the basic properties of the softmax.
\begin{lemma}\label{lem:base-prop-softmax}
    Given $L$ values, $a_1 \geq a_2 \geq a_3 \geq \cdots \geq a_L$, then $0 \leq a_1 - \sum_k a_k \frac{e^{Y a_k}}{\sum_{k'} e^{Y a_{k'}}} \leq \frac{L}{e Y}$.
\end{lemma}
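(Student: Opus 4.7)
The plan is to isolate the slack $a_1 - S$ where $S \coloneqq \sum_k a_k \frac{e^{Y a_k}}{\sum_{k'} e^{Ya_{k'}}}$ and bound it termwise. The lower bound $a_1 - S \ge 0$ is immediate because $S$ is a convex combination of the $a_k$'s and $a_1$ is the largest among them, so there is nothing to prove there.

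For the upper bound, I would introduce the non-negative gaps $d_k \coloneqq a_1 - a_k \ge 0$ with $d_1 = 0$. Pulling a common factor of $e^{Ya_1}$ out of both numerator and denominator rewrites the softmax weights as $\frac{e^{-Yd_k}}{\sum_{k'} e^{-Yd_{k'}}}$, and a direct calculation gives
\[
a_1 - S \;=\; \sum_k d_k \cdot \frac{e^{-Yd_k}}{\sum_{k'} e^{-Yd_{k'}}}.
\]
Since the denominator contains the term $e^{-Y d_1} = 1$, it is at least $1$, so the entire sum is bounded above by $\sum_k d_k e^{-Yd_k}$.

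It then suffices to bound each summand by $\tfrac{1}{eY}$ and sum $L$ copies. To do so, I would treat $f(t) = t e^{-Yt}$ on $t \ge 0$ and find its maximum by differentiation: $f'(t) = (1 - Yt)e^{-Yt} = 0$ gives $t = 1/Y$, and hence $f(t) \le f(1/Y) = \frac{1}{eY}$. Applying this with $t = d_k$ to each of the $L$ terms gives $a_1 - S \le \frac{L}{eY}$, as claimed.

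The proof is elementary and the only mild subtlety is recognizing that dropping the denominator is legitimate (because $d_1 = 0$ contributes a $1$), so no obstacle is expected; the rest is a single-variable calculus maximization.
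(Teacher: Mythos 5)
Your proof is correct and follows essentially the same route as the paper: both rewrite the slack as $\sum_k (a_1-a_k)\,w_k$ with the softmax weights re-centered at $a_1$, drop the denominator using the fact that it contains the term $e^{0}=1$, and bound each summand via $t e^{-Yt}\le \frac{1}{eY}$ (the paper obtains this through the inequality $x\le e^{x-1}$ rather than by differentiation, but that is the same estimate).
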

\begin{proof}
    It's clear that $0 \leq a_1 - \sum_k a_k \frac{e^{Y a_k}}{\sum_{k'} e^{Y a_{k'}}} $. On the other direction, 
\begin{align*}
    a_1 - \sum_k a_k \frac{e^{Y a_k}}{\sum_{k'} e^{Y a_{k'}}} &\leq \sum_k (a_1 - a_k) \frac{e^{Y (a_k - a_1)}}{\sum_{k'} e^{Y (a_{k'} - a_1)}} \\
    &\leq \frac{1}{Y} \sum_k Y (a_1 - a_k) \frac{e^{Y (a_k - a_1)}}{\sum_{k'} e^{Y (a_{k'} - a_1)}} \\
    &\leq \frac{1}{Y} \sum_k e^{Y (a_1 - a_k) - 1} \frac{e^{Y (a_k - a_1)}}{\sum_{k'} e^{Y (a_{k'} - a_1)}} \\ 
    &\leq \frac{L}{e Y} \frac{1}{\sum_{k'} e^{Y (a_{k'} - a_1)}} \\
    &\leq \frac{L}{e Y}.\tag*{\qedhere}
\end{align*}
\end{proof}
Now, we can prove Theorem~\ref{thm:softmax-AMA}.

\begin{proof}[Proof of Theorem~\ref{thm:softmax-AMA}]
    We first give the upper bound on $\rev{M} - \revsoftmax{M}$. 
    
    Let $k(v) = \arg \max_{k} \sum_i  v^\top_i x_{i}^{(k)} + \boost^{(k)}$ be the rule used in $\rev{M}$. Recall that 
    \begin{align*}
        \sum_i p_i(v) &= \sum_i  \left(\sum_{l \neq i} v^\top_l x_{l}^{(k(v_{-i}))} + \boost^{(k(v_{-i}))}\right) - \sum_i   \left(\sum_{l \neq i}   v^\top_l x_{l}^{(k(v))} + \boost^{(k(v))} \right)
    \end{align*}
    and
    \begin{align*}
        \sum_i p^{\texttt{softmax}}_i(v) &= \mathbb{E} \left[\sum_i   \left(\sum_{l \neq i} v^\top_l x_{l}^{(k^{\texttt{softmax}}(v_{-i}))} + \boost^{(k^{\texttt{softmax}}(v_{-i}))}\right)\right.\\
        &~~~~~~~~~\left.- \sum_i    \left(\sum_{l \neq i} v^\top_l x_{l}^{(k^{\texttt{softmax}}(v))} + \boost^{(k^{\texttt{softmax}}(v))} \right)  \right].
    \end{align*}
    Then, 
    \begin{align*}
        &\rev{M} - \revsoftmax{M} \\
        &~~~~~~= \mathbb{E}_v \left[\sum_i   \left(\sum_{l \neq i} v^\top_l x_{l}^{(k(v_{-i}))} + \boost^{(k(v_{-i}))}\right) \right. \\
        &~~~~~~~~~~~~~~~~~~~~~~~~~~~~~~~~~~~~~~~~~~~~-  \mathbb{E} \left[\sum_i   \left(\sum_{l \neq i} v^\top_l x_{l}^{(k^{\texttt{softmax}}(v_{-i}))} + \boost^{(k^{\texttt{softmax}}(v_{-i}))}\right)\right]  \\
        &~~~~~~~~~~~~~~~~~~~~~~~~~~~-  m\left(\sum_i v^\top_i x_{i}^{(k(v))} + \boost^{(k(v))} \right) \\
        &~~~~~~~~~~~~~~~~~~~~~~~~~~~~~~~~~~~~~~~~~~~~+\mathbb{E}\left[ m \left(\sum_i v^\top_i x_{i}^{(k^{\texttt{softmax}}(v))} + \boost^{(k^{\texttt{softmax}}(v))} \right)\right] \\
        &~~~~~~~~~~~~~~~~~~~~~~~~~~~\left.+ \sum_i v^\top_i x_{i}^{(k(v))} -  \mathbb{E}\left[ \sum_i v^\top_i x_{i}^{(k^{\texttt{softmax}}(v))}\right] \right] \\
        &~~~~~~\leq \frac{ m(K + 1)}{eY} + \mathbb{E}\left[ \sum_i v^\top_i x_{i}^{(k(v))}\right] -  \mathbb{E}\left[ \sum_i v^\top_i x_{i}^{(k^{\texttt{softmax}}(v))}\right].
    \end{align*}
    The inequality follows by Lemma~\ref{lem:base-prop-softmax}. Then, we will bound the difference between $\mathbb{E}\left[ \sum_i v^\top_i x_{i}^{(k(v))}\right]$ and $\mathbb{E}\left[ \sum_i v^\top_i x_{i}^{(k^{\texttt{softmax}}(v))}\right]$. More specifically, 
    \begin{align*}
         &\mathbb{E}[v_{ij} x_{ij}^{(k(v))}] - \mathbb{E}[ v_{ij} x_{ij}^{(k^{\texttt{softmax}}(v))}] \\
         &~~~~~~~~~= \int_v v_{ij} x_{ij}^{(k(v))} - \sum_k  v_{ij} x_{ij}^{(k)} \frac{e^{Y  (\sum_{i'}   v^\top_{i'} x_{i'}^{(k)} + \boost^{(k)})}}{ \sum_{k'} e^{Y  (\sum_{i'}   v^\top_{i'} x_{i'}^{(k')} + \boost^{(k')})}} \texttt{d} F(v) \\
        &~~~~~~~~~\leq  \int_v \sum_{k}  (x_{ij}^{(k)} -x_{ij}^{(k(v))})^{+}  \frac{e^{Y  (\sum_{i'}  v^\top_{i'} x_{i'}^{(k)} + \boost^{(k)})}}{ \sum_{k'} e^{Y  (\sum_{i'}  v^\top_{i'} x_{i'}^{(k')} + \boost^{(k')})}}\texttt{d} F(v)  \\
        &~~~~~~~~~\leq  \int_v \sum_{k}  (x_{ij}^{(k)} -x_{ij}^{(k(v))})^{+}  e^{Y  (\sum_{i'}  v^\top_{i'} x_{i'}^{(k)} + \boost^{(k)}- \sum_{i'}  v^\top_{i'} x_{i'}^{(k(v))} - \boost^{(k(v))})}\texttt{d} F(v).
    \end{align*}
Recall that $(\cdot)^+ \triangleq \max\{ \cdot, 0\}$. Let's define $\BW{k} = \sum_{i'}  v^\top_{i'} x_{i'}^{(k)} + \boost^{(k)}$ to be the boosted welfare of option $k$ for simplicity. Now, we focus on one option $k$, and we will give an upper bound on 
\begin{align*}
    \int (x_{ij}^{(k)} -x_{ij}^{(k(v))})^{+} \mathsf{1}_{\BW{k} + \sigma \geq \BW{k(v)} \geq \BW{k}  } \text{d} F(v) \numberthis \label{eq:AMA-upper-diff-rev}
\end{align*}
for the non-negative $\sigma$. The value of $\sigma$ will be determined later. Note that it is always true that $\BW{k(v)} \geq \BW{k}$ by the definition of $k(v)$. Additionally, if $\BW{k} + \sigma \geq \BW{k(v)} $ is not satisfied then $e^{Y  (\sum_{i'} v^\top_{i'} x_{i'}^{(k)} + \boost^{(k)} - \sum_{i'}  v^\top_{i'} x_{i'}^{(k(v))} - \boost^{(k(v))})}\leq e^{- Y \sigma}$. Therefore, if \eqref{eq:AMA-upper-diff-rev} is upper bounded by $\mathcal{C}_{ij}(\sigma)$, then $\revsoftmax{M} - \rev{M} \leq  \frac{ mK}{eY} + (K + 1) \left(\sum_{ij} \mathcal{C}_{ij}(\sigma) +  nm e^{- Y \sigma}\right)$. 

Note that 
\begin{align*}
    & \int (x_{ij}^{(k)} -x_{ij}^{(k(v))})^{+} \mathsf{1}_{\BW{k} + \sigma \geq \BW{k(v)} \geq \BW{k}  } \text{d} F(v)\\
    &~~~~~~=   \int   (x_{ij}^{(k)} -x_{ij}^{(k(v))})^{+} \mathsf{1}_{\BW{k(v)} - \BW{K} \in [0, \sigma]  } \text{d} F(v). \numberthis 
    \label{ineq:AMA-softmax-os}
\end{align*}
Now we fix all coordinates of valuation $v$ other than coordinate $ij$. Note that, the function $ \BW{k(v)} - \BW{k} = \sum_{i'} v^\top_{i'} {x_{i'}^{(k(v))}} + \boost^{(k(v))} -  \sum_{i'}  v^\top_{i'} {x_{i'}^{(k)}} - \boost^{(k)}$ is a convex function on $v_{ij}$ and $ (x^{(k(v))}_{ij} - x^{(k)}_{ij})$ is the gradient. Since we are looking at the region such that the function  $\BW{k(v)} - \BW{k}$ is bounded in $[0, \sigma]$, this direct implies
\begin{align*}
  \int (x_{ij}^{(k)} -x_{ij}^{(k(v))})^{+} \mathsf{1}_{\BW{k} + \sigma \geq \BW{k(v)} \geq \BW{k}  } \text{d} F(v)  \leq    \mathcal{X}\sigma.
\end{align*}
as the maximal density is at most $\mathcal{X}$.
This implies $\rev{M} - \revsoftmax{M} \leq  \frac{
 m(K + 1)}{eY} + (K + 1)n  m \mathcal{X} \sigma + (K + 1)nm e^{- Y \sigma}$ which is upper bounded by $\frac{
 m(K + 1)}{eY} + \frac{n  m \mathcal{X}(K + 1)}{Y} \left(1 + \log \frac{Y}{ \mathcal{X} } \right)$ by setting $\sigma = \frac{1}{Y} \log \frac{Y}{  \mathcal{X}}$.

The upper bound on $\revsoftmax{M} - \rev{M}$ follows by a similar argument.
\end{proof}

\end{document}